\documentclass[a4paper,USenglish,cleveref,thm-restate]{lipics-v2021}

\bibliographystyle{plainurl}

\title{Games with \texorpdfstring{$\omega$}{ω}-Automatic Preference Relations}

\author{Véronique Bruyère}{Université de Mons (UMONS), Belgium \and \url{https://informatique-umons.be/bruyere-veronique/}}{veronique.bruyere@umons.ac.be}{https://orcid.org/0000-0002-9680-9140}{}

\author{Christophe Grandmont}{Université de Mons (UMONS), Belgium \and Université libre de Bruxelles (ULB), Belgium \and \url{https://chrisgdt.github.io/}}{christophe.grandmont@umons.ac.be}{https://orcid.org/0009-0009-4573-0123}{}

\author{Jean-François Raskin}{Université libre de Bruxelles (ULB), Belgium \and \url{https://verif.ulb.ac.be/jfr/}}{jean-francois.raskin@ulb.be}{https://orcid.org/0000-0002-3673-1097}{Supported by Fondation ULB (\url{https://www.fondationulb.be/en/}) and the Thelam Fondation.}

\authorrunning{V.\ Bruyère, C.\ Grandmont, and J.-F.\ Raskin}

\ccsdesc[500]{Theory of computation~Automata over infinite objects}
\ccsdesc[300]{Software and its engineering~Formal methods}
\ccsdesc[500]{Theory of computation~Solution concepts in game theory}
\ccsdesc[300]{Theory of computation~Exact and approximate computation of equilibria}

\keywords{Games played on graphs, Nash equilibrium, \texorpdfstring{$\omega$}{ω}-automatic relations, \texorpdfstring{$\omega$}{ω}-recognizable relations, constrained Nash equilibria existence problem}

\nolinenumbers
\pdfoutput=1
\hideLIPIcs

\funding{This work has been supported by the Fonds de la Recherche Scientifique – FNRS under Grant n° T.0023.22 (PDR Rational).}

\Copyright{Véronique Bruyère, Christophe Grandmont, and Jean-François Raskin}

\usepackage{multicol}
\usepackage{tikz}
\usepackage{graphics}
\usepackage{todonotes}

\newtheorem*{problems*}{Problems}

\mathchardef\mhyphen="2D


\newcommand{\leqRelation}[1][]{\ensuremath{\precsim_{#1}}}
\newcommand{\geqRelation}[1][]{\ensuremath{\succsim_{#1}}}
\newcommand{\equivRelation}[1][]{\ensuremath{\sim_{#1}}}
\newcommand{\leqRelationStrict}[1][]{\ensuremath{\prec_{#1}}}
\newcommand{\geqRelationStrict}[1][]{\ensuremath{\succ_{#1}}}

\newcommand{\equivClass}[2]{\ensuremath{[#1]_{#2}}}

\newcommand{\lattice}[1][]{\ensuremath{{\Lambda}_{#1}}}


\newcommand{\prover}{\ensuremath{\mathbb{P}}}
\newcommand{\proverone}{\ensuremath{\mathbb{P}_{1}}}
\newcommand{\provertwo}{\ensuremath{\mathbb{P}_{2}}}
\newcommand{\proveroneTight}{\proverone{}\!}
\newcommand{\provertwoTight}{\provertwo{}\!}
\newcommand{\challenger}{\ensuremath{\mathbb{C}}}

\newcommand{\pcp}{\proverone{}\challenger{}\provertwo{}}
\newcommand{\pc}{\prover{}\challenger{}}


\newcommand{\projOne}[1]{\mathsf{proj}_{V,1}(#1)}
\newcommand{\projTwo}[1]{\mathsf{proj}_{V,2}(#1)}
\newcommand{\projE}[1]{\mathsf{proj}_{E}(#1)}
\newcommand{\projDev}[1]{\mathsf{dev}(#1)}



\newcommand{\N}{\mathbb{N}}

\newcommand{\Q}{\mathbb{Q}}

\newcommand{\ssetminus}{\! \setminus \!}
\def\rest#1#2{#1_{\restriction#2}}


\newcommand{\aut}[1]{\ensuremath{\mathcal{#1}}}

\newcommand{\lang}[1]{\mathcal{L}(#1)}

\newcommand{\NBW}{\text{NBA}}
\newcommand{\DBW}{\text{DBA}}
\newcommand{\NBWs}{\text{NBAs}}
\newcommand{\DBWs}{\text{DBAs}}

\newcommand{\DPW}{\text{DPA}}
\newcommand{\NPW}{\text{NPA}}
\newcommand{\DPWs}{\text{DPAs}}
\newcommand{\NPWs}{\text{NPAs}}



\newcommand{\arena}{A}

\newcommand{\game}{\mathcal{G}}


\newcommand{\Players}{\mathcal{P}}


\newcommand{\Plays}{\mathsf{Plays}}


\newcommand{\infOcc}[1]{\mathsf{Inf}({#1})}

\newcommand{\last}[1]{\mathsf{last}({#1})}
\newcommand{\first}[1]{\mathsf{first}({#1})}


\newcommand{\strategyfor}[1]{\sigma_{#1}}


\newcommand{\machine}[1]{\mathcal{M}_{#1}}

\newcommand{\outcome}[1]{\langle #1 \rangle}
\newcommand{\outcomefrom}[2]{\outcome{#1}_{#2}}




 


 




















\newcommand{\bigO}[1]{\mathcal{O}(#1)}

\newcommand{\nl}{$\mathsf{NL}$}
\newcommand{\nlHard}{\nl{}-hard}
\newcommand{\nlComplete}{\nl{}-complete}


\newcommand{\np}{$\mathsf{NP}$}

\newcommand{\conp}{$\mathsf{coNP}$}


\newcommand{\pspace}{$\mathsf{PSPACE}$}
\newcommand{\pspaceHard}{\pspace{}-hard}
\newcommand{\pspaceComplete}{\pspace{}-complete}

\newcommand{\exptime}{$\mathsf{EXPTIME}$}












\usetikzlibrary{
  automata,
  arrows,
  positioning,
  shapes.geometric,
  calc
}
\tikzset{
edge with arrows/.style = {
    ->,
    >=stealth,
    shorten >=1pt,
},
directed/.style = {
    edge with arrows,
    node distance=2.3cm,
    on grid,
    semithick,
    double distance=1.5pt,
},
automaton/.style = {
    directed,
    auto,
    initial text={},
    pin distance = 1ex,
    every pin edge/.style = {
        draw=none
    },
    every state/.style={
      minimum size=1.5mm,
      inner sep=1pt,
    }
},
system/.style = {
    state,
    circle,
    minimum size=0mm,
    inner sep=5pt,
},
system2/.style={
    state,
    ellipse,
    minimum size=0mm,
    inner sep=2pt,
},
environment/.style = {
    state,
    rectangle,
    minimum size=0mm,
    inner sep=8pt,
},
environment2/.style = {
    state,
    diamond,
    minimum size=0mm,
    inner sep=4pt,
},
}

\begin{document}

\maketitle

\begin{abstract}
This paper investigates Nash equilibria (NEs) in multi-player turn-based games on graphs, where player preferences are modeled as $\omega$-automatic relations via deterministic parity automata. Unlike much of the existing literature, which focuses on specific reward functions, our results apply to any preference relation definable by an $\omega$-automatic relation. We analyze the computational complexity of determining the existence of an NE (possibly under some constraints), verifying whether a given strategy profile forms an NE, and checking whether a specific outcome can be realized by an NE. When a (constrained) NE exists, we show that there always exists one with finite-memory strategies. Finally, we explore fundamental properties of $\omega$-automatic relations and their implications for the existence of equilibria.
\end{abstract}

\section{Introduction}
\label{section:intro}

\emph{Non-zero-sum games on graphs} provide a powerful framework for analyzing rational behavior in multi-agent systems, see, e.g.,~\cite{BrenguierCHPRRS16,BriceRB23,Bruyere21,FleschKMSSV10,Gradel-Ummels-08,GutierrezNPW23,KupfermanPV16}. By modeling settings where agents have individual objectives, this approach captures the complexity of real-world scenarios where the interests of agents (modeled by players) are neither fully aligned nor entirely antagonistic. It enables the study of solution concepts such as \emph{Nash} and \emph{subgame-perfect equilibria}~\cite{Nash50,osbornebook}, offering insight into strategic decision making. This, in turn, can aid in designing systems that anticipate and respond to rational behaviors, enriching reactive synthesis methodologies.

In this context, specifying player \emph{objectives}~\cite{lncs2500} is central to reasoning about strategies and equilibria. In qualitative games, objectives determine whether an execution (an infinite path in the graph) is winning or losing for a given player. In quantitative games, executions are instead assigned numerical values, allowing players to compare and rank them based on accumulated rewards, with higher values being preferable. From this perspective, the qualitative setting can be viewed as a special case where the values are Boolean, typically captured by parity acceptance conditions, which encompass all $\omega$-regular objectives. In the quantitative setting, a variety of reward functions have been explored, including total sum, limsup and liminf, discounted-sum, and hybrid models such as cost-optimal reachability. For each of these functions, dedicated techniques have been developed to design algorithms that analyze optimal strategies and, more broadly, equilibria.

However, these solutions are often tightly coupled to the specific reward function used, which limits their generality. When a new reward function or combination thereof is introduced, significant technical effort is required, as existing techniques rarely transfer across different reward models. This lack of general results -- where solutions remain specialized to the underlying evaluation model, preventing knowledge transfer between different classes of objectives -- has been noted in related contexts such as quantitative verification (see, e.g.,~\cite{Bansal-Comparator-Automata-2022}).

To address this, we propose a general approach based on \emph{automata-based} preference relations to compare infinite paths in the graph. This framework provides a structured and unified method for reasoning about strategies and equilibria across various reward models. A similar use of automata-based preference relations has been explored in~\cite{Bansal-Comparator-Automata-2022,Berwanger-Doyen-2023,Bozzelli-Maubert-Pinchinat-2015}, and here we demonstrate how this idea can be adapted to fit the non-zero-sum game setting.

\subparagraph*{Contributions.}
Our contributions center on using \emph{$\omega$-automatic relations} on infinite words~\cite{Thomas-1990-Buchi}, as introduced in~\cite{Frougny-Sakarovitch-1993}, to define a general framework for preference relations over paths in game graphs, thereby establishing a generic method to compare executions for players in non-zero-sum games. These relations are specified by deterministic parity automata that read pairs $(x,y)$ of words synchronously and accept them whenever $y$ is preferred to $x$.

Our main contributions focus on the computational complexity of \emph{four key problems} related to NEs in non-zero-sum games~\cite{Nash50} with $\omega$-automatic preference relations. First, we study the problem of \emph{verifying} whether a given strategy profile, specified by Mealy machines, one per strategy, constitutes an NE in the given game. We prove that this problem is \pspaceComplete{} (\cref{theorem:nash-checking-pspace}). Second, we examine whether a lasso-shaped path (i.e., a regular path) is the \emph{outcome of an NE}, showing that this problem is in \np{} $\cap$ \conp{} and $\mathsf{Parity}$-hard (\cref{theorem:OutcomeCheck}). Third, we establish the existence of games without any NE, motivating the fundamental problem of determining whether a given game \emph{admits at least one NE}. This problem turns out to be particularly challenging, and we reduce it to a three-player zero-sum game with imperfect information. We provide an algorithm for solving this problem with exponential complexity in the size of the graph, the parity automata defining the preference relation, and the number of their priorities, and doubly exponential complexity in the number of players. However, since the number of players is a natural parameter that tends to be small in practical scenarios,\footnote{In robotic systems or in security protocols, the number of agents is usually limited to a few. For example, in a security protocol, the players are Alice and Bob who exchange messages, the trusted third party, and a fourth player for the network (see, e.g.,~\cite{ChatterjeeRaman12,GutierrezNPW20})} we refine this result by proving that for a fixed number of players, the problem lies in \exptime{} and is \pspaceHard{} (\cref{theorem:NEexistenceGeneral}). In addition, our approach has the advantage of being modular and therefore easily adapts to question the existence of a \emph{constrained NE}.
When we attach one constraint to each player given as a lasso-shaped path and ask for an NE whose outcome is preferred to any of those constraints, the adapted algorithm keeps the same complexity except that it becomes doubly exponential in the number of priorities of the parity conditions. Yet the number of priorities is often small\footnote{important classes of objectives such as Büchi, co-Büchi, reachability, and safety require at most three priorities} and when we fix it and the number of players, the algorithm remains in \exptime{} and \pspaceHard{} (\cref{theorem:constrainedNEexistenceGeneral}). Note that our approach allows to show that when there exists an (constrained) NE, there exists one composed of finite-memory strategies.

Additionally, we analyze the algorithmic complexity of verifying whether an $\omega$-automatic relation satisfies the axioms of a \emph{strict partial order} (irreflexivity and transitivity) or of a \emph{preorder} (reflexivity and transitivity) which are two classical requirements for a relation to model preferences. We show that these problems are \nlComplete{} (\cref{prop:PropertiesRelations}).
Finally, we show that when the $\omega$-automatic preference relations are all \emph{$\omega$-recognizable} (a strict subclass of $\omega$-automatic relations where the two input words can be processed independently) and preorders, the existence of at least one NE is always guaranteed (\cref{theorem:existenceNE}).

\subparagraph*{Related work.}

A well-established \emph{hierarchy of rational relations} holds for both finite and infinite words~\cite{Carton2006,BookSakarovitch}. The $\omega$-automatic relations -- also called synchronized $\omega$-rational relations -- were first studied in~\cite{Frougny-Sakarovitch-1993}. Some decision problems about $\omega$-automatic and $\omega$-recognizable relations were solved in~\cite{rational-relations-automatic-loding} and improved in~\cite{lics23-rational-relations}. The study of \emph{automatic structures} has also led to results involving rational relations, notably within first-order logic (see, e.g., \cite{Automatic-Structures-Gradel2000,Automatic-Structures-Gradel2020,hodgson1983decidabilite,Automatic-Structures-handbook-Rubin21}).

The problems we study in this paper were widely investigated in the literature for \emph{specific} reward functions, including functions that mix different objectives, see, e.g.,~\cite{Bouyer-Brenguier-Markey-2010,Gutierrez-MPRSW21,Ummels08,Ummels-Wojtczak-2011}. There are also works that study these problems across \emph{large} classes of reward functions rather than individual ones, or that consider \emph{general} notions of preference relations. For instance, in~\cite{BrihayePS13}, the authors prove the existence of finite-memory NEs for all cost-semi-linear reward functions.
In~\cite{PatriciaBouyerBMU15}, a complete methodology is developed to solve the (constrained) NE existence problem, thanks to the concept of suspect game, encompassing all reward functions definable by a class of monotone circuits over the set of states
that appear (finitely or infinitely often) along paths in a game graph. The preference relations studied in~\cite{PatriciaBouyerBMU15} are all $\omega$-automatic. In~\cite{Feinstein-Kupferman-Shenwald2025}, the authors study NEs for games with a reward function that, given a finite set $X$ of objectives of the same type, associates an integer with each subset of satisfied objectives of $X$. Again, if the objectives of $X$ are $\omega$-regular, the reward functions of~\cite{Feinstein-Kupferman-Shenwald2025} lead to $\omega$-automatic preference relations. The existence of NEs is guaranteed within a broad setting, both in~\cite{Gradel-Ummels-08} and~\cite{LeRoux-Pauly-Equilibria}, without relying on an automata-based approach, however with no complexity result about the constrained NE existence problem. In case of games with $\omega$-recognizable preference relations, our proof that NEs always exist relies on the technique developed in~\cite{Gradel-Ummels-08}.

The results we obtain with games with $\omega$-recognizable preference relations cover a large part of the games studied classically. In addition, our setting allows any combinations of objectives as soon as they are expressible by automata. However, it does not cover games with mean-payoff or energy objectives. Indeed, in the first case, it is proved in~\cite{Bansal-Comparator-Automata-2022} that the related preference relation is not $\omega$-automatic; and in the second case, the constrained NE existence problem is undecidable~\cite{BriceBR23}. Note that the general concepts of $\omega$-automatic and $\omega$-recognizable relations have also been used to study imperfect information in games in~\cite{Berwanger-Doyen-2023,Bozzelli-Maubert-Pinchinat-2015} and formal verification of quantitative systems in~\cite{Bansal-Comparator-Automata-2022}.

\section{Preliminaries}
\label{section:preliminaries}

In this section, we introduce the useful definitions of games with $\omega$-automatic preference relations and give several illustrative examples.

\subparagraph*{Automatic Relations.}

Let $\Sigma$ be a fixed finite alphabet. We consider \emph{binary relations} $\mathord{R} \subseteq \Sigma^\omega \times \Sigma^\omega$ on infinite words over $\Sigma$. The relation $R$ is \emph{$\omega$-automatic} if it is accepted by a deterministic finite parity automaton over the alphabet $\Sigma \times \Sigma$, that is, $R$ is an $\omega$-regular language over $\Sigma \times \Sigma$. The automaton reads pairs of letters by advancing synchronously on the two words. This behavior is illustrated in \cref{fig:dpw-for-classical-objectivers} below. A relation $R$ is \emph{$\omega$-recognizable} if it is equal to $\cup_{i=1}^{\ell} X_i \times Y_i$ where $X_i, Y_i \subseteq \Sigma^\omega$ are $\omega$-regular languages over $\Sigma$~\cite{rational-relations-automatic-loding}. Any $\omega$-recognizable relation is $\omega$-automatic~\cite{BookSakarovitch}.

We suppose that the reader is familiar with the usual notion of deterministic parity automaton (\DPW{}) used to accept $\omega$-automatic relations~\cite{rational-relations-automatic-loding}. A run is accepting if the maximum priority seen infinitely often is even. In this paper, we also use other classical notions of automata: deterministic B\"uchi automata (\DBW{}) and Rabin automata. See, e.g.,~\cite{lncs2500} for general definitions, or~\cite{principles-of-model-checking,handbook-of-model-checking-orna-kupferman} for deeper details. We also need the concept of \emph{generalized} parity automaton which is an automaton with a positive\footnote{The negation is not allowed in the Boolean combination.} Boolean combination of parity conditions. Given an automaton $\aut{A}$, its \emph{size} $|\aut{A}|$ is its number of states.

\subparagraph*{Games with Preference Relations.}
An \emph{arena} is a tuple $\arena = (V,E,\Players,(V_i)_{i\in\Players})$ where $V$ is a finite set of \emph{vertices}, $E \subseteq V \times V$ is a set of \emph{edges}, $\Players$ is a finite set of \emph{players}, and $(V_i)_{i\in\Players}$ is a partition of $V$, where $V_i$ is the set of vertices \emph{owned} by player~$i$. We assume, w.l.o.g., that each $v \in V$ has at least one \emph{successor}, i.e., there exists $v' \in V$ such that $(v,v') \in E$. We define a \emph{play} $\pi\in V^\omega$ (resp.\ a \emph{history} $h\in V^*$) as an infinite (resp.\ finite) sequence of vertices $\pi_0\pi_1\dots$ such that $(\pi_k,\pi_{k+1})\in E$ for any two consecutive vertices $\pi_k, \pi_{k+1}$. The set of all plays of an arena $\arena{}$ is denoted $\Plays_{\arena}\subseteq V^\omega$, and we write $\Plays$ when the context is clear.
The \emph{length} $|h|$ of a history $h$ is the number of its vertices. The empty history is denoted $\varepsilon$.

A \emph{game} $\game = (\arena, (R_i)_{i\in\Players})$ is an arena equipped with $\omega$-automatic relations $R_i$ over the alphabet $V$, one for each player~$i$, called his \emph{preference relation}. For any two plays $\pi, \pi'$, player~$i$ \emph{prefers} $\pi'$ to $\pi$ if $(\pi,\pi') \in R_i$. In the sequel, we write $\leqRelationStrict[i]$ instead of $R_i$ and for all $x,y \in V^\omega$, $x \leqRelationStrict[i] y$, or $y \geqRelationStrict[i] x$, instead of $(x,y) \in \mathord{R_i}$. We also say that $x$ is \emph{maximal} (resp.\ \emph{minimal}) for $\leqRelationStrict[i]$ if for all $y \in V^\omega$, we have $x \not\leqRelationStrict[i] y$ (resp.\ $y \not\leqRelationStrict[i] x$). Below we give various examples of games whose preference relations are all strict partial orders. At this stage, $\leqRelationStrict[i]$ is just an $\omega$-automatic relation without any additional hypotheses. Such hypotheses will be discussed in \Cref{section:properties-relations}.

Given a play $\pi$ and an index $k$, we write $\pi_{\geq k}$ the suffix $\pi_{k} \pi_{k+1} \dots$ of $\pi$. We denote the first vertex of $\pi$ by $\first{\pi}$. These notations are naturally adapted to histories. We also write $\last{h}$ for the last vertex of a history $h \neq \varepsilon$. We can \emph{concatenate} two non-empty histories $h_1$ and $h_2$ into a single one, denoted $h_1\cdot h_2$ or $h_1h_2$ if $(\last{h_1},\first{h_2})\in E$. When a history can be concatenated to itself, we call it \emph{cycle}. A play $\pi= \mu\nu\nu \dots = \mu(\nu)^\omega$, where $\mu\nu$ is a history and $\nu$ a cycle, is called a \emph{lasso}. The \emph{length} of $\pi$ is then the length of $\mu\nu$, denoted $|\pi|$.\footnote{To have a well-defined length for a lasso $\pi$, we assume that $\pi = \mu(\nu)^\omega$ with $\mu\nu$ of minimal length.}

Let $\arena$ be an arena. A \emph{strategy} $\strategyfor{i}:V^*V_i\rightarrow V$ for player~$i$ maps any history $h \in V^*V_i$ to a successor $v$ of $\last{h}$, which is the next vertex that player~$i$ chooses to move after reaching the last vertex in $h$.
A play $\pi = \pi_0\pi_1 \dots$ is \emph{consistent} with $\strategyfor{i}$ if $\pi_{k+1} = \strategyfor{i}(\pi_0 \dots \pi_k)$ for all $k \in \N$ such that $\pi_k \in V_i$.
Consistency is naturally extended to histories.
A tuple of strategies $\sigma = (\strategyfor{i})_{i\in\Players}$ with $\sigma_i$ a strategy for player~$i$, is called a \emph{strategy profile}. The play $\pi$ starting from an initial vertex $v_0$ and consistent with each $\strategyfor{i}$ is denoted by $\outcomefrom{\sigma}{v_0}$ and called \emph{outcome}.

A strategy $\sigma_i$ for player~$i$ is \emph{finite-memory}~\cite{lncs2500} if it can be encoded by a \emph{Mealy machine} $\machine{} = (M, m_0, \alpha_U, \alpha_N)$ where $M$ is the finite set of memory states, $m_0 \in M$ is the initial memory state, $\alpha_U : M \times V \rightarrow M$ is the update function, and $\alpha_N : M \times V_i \rightarrow V$ is the next-move function. Such a machine defines the strategy $\sigma_i$ such that $\sigma_i(hv) = \alpha_N(\widehat{\alpha}_U(m_0,h),v)$ for all histories $hv \in V^*V_i$, where $\widehat{\alpha}_U$ extends $\alpha_U$ to histories as expected. A strategy $\sigma_i$ is \emph{memoryless} if it is encoded by a Mealy machine with only one state.

We suppose that the reader is familiar with the concepts of two-player \emph{zero-sum} games with $\omega$-regular objectives and of winning strategy~\cite{Games-on-Graphs,lncs2500}.

\subparagraph*{Generality of the $\omega$-Automatic Preference Framework.}
Let us show that the above notion of game $\game = (\arena, (\leqRelationStrict[i])_{i \in \Players})$ encompasses many cases of classic games and more. We begin with games where each player~$i$ has an $\omega$-regular objective $\Omega_i \subseteq V^\omega$, such as a reachability or a B\"uchi objective~\cite{Games-on-Graphs,lncs2500}. In this case, the preference relation $\mathord{\leqRelationStrict[i]} \subseteq V^\omega \times V^\omega$ is defined by $x \leqRelationStrict[i] y$ if and only if $\Omega_i(x) < \Omega_i(y)$, where $\Omega_i$ is seen as a function $\Omega_i : V^\omega \rightarrow \{0,1\}$. As $\Omega_i$ is $\omega$-regular, it follows that $\leqRelationStrict[i]$ is $\omega$-automatic. For instance, given a target set $T \subseteq V$, the first \DPW{} of \cref{fig:dpw-for-classical-objectivers} accepts $\leqRelationStrict[i]$ when $\Omega_i$ is a \emph{reachability} objective $\{x = x_0x_1 \ldots \in V^\omega \mid \exists k, x_k \in T\}$; the second \DPW{} accepts $\leqRelationStrict[i]$ when $\Omega_i$ is a \emph{B\"uchi} objective $\{x \in V^\omega \mid \infOcc{x} \cap T \neq \ \varnothing \}$, where $\infOcc{x}$ is the set of vertices seen infinitely many times in $x$.

More general preference relations can be defined from several $\omega$-regular objectives $(\Omega_i^j)_{1\leq j \leq n}$ for player~$i$. With each $x \in V^\omega$ is associated the payoff vector $\bar \Omega_i(x) = (\Omega_i^1(x), \ldots, \Omega_i^n(x) ) \in \{0,1\}^n$. Given a strict partial order $<$
on these payoff vectors, we define a preference relation $\leqRelationStrict[i]$ such that $x \leqRelationStrict[i] y$ if and only if $\bar\Omega_i(x) < \bar\Omega_i(y)$~\cite{PatriciaBouyerBMU15}. There exist several strict partial orders on the payoff vectors, like, for example, the lexicographic order, or the counting order, i.e., $\bar\Omega_i(x) < \bar\Omega_i(y)$ if and only if $|\{j \mid \Omega_i^j(x) = 1\}| < |\{j \mid \Omega_i^j(y) = 1\}|$. One can check that all preference relations studied in~\cite{PatriciaBouyerBMU15} are $\omega$-automatic.

Let us move on to classical quantitative objectives, such as quantitative reachability, limsup or discounted-sum objectives~\cite{Games-on-Graphs,lncs2500}. In this case, an objective for player~$i$ is now a function $\Omega_i: V^\omega \rightarrow \Q \cup \{\pm\infty\}$.\footnote{It can also be a function $\Omega: E^\omega \rightarrow \Q \cup \{\pm\infty\}$.} We then define a preference relation $\leqRelationStrict[i]$ such that $x \leqRelationStrict[i] y$ if and only if $\Omega_i(x) < \Omega_i(y)$. Bansal et al.\ showed in~\cite{Bansal-Comparator-Automata-2022} that such a relation is $\omega$-automatic for a limsup objective and for a discounted-sum objective with an integer discount factor. They also proved that $\leqRelationStrict[i]$ is not $\omega$-automatic for a mean-payoff objective. The first \DPW{} of \cref{fig:dpw-for-classical-objectivers} where the label on the loop on the vertex with priority $0$ is replaced by $(*,*)$, accepts a preference relation $\leqRelationStrict[i]$ defined from a \emph{min-cost-reachability} objective as follows: $x \leqRelationStrict[i] y$ if and only if there exists $\ell$ such that $y_\ell \in T$ and, for all $k$, $x_k \in T \Rightarrow \exists \ell < k, ~y_\ell \in T$ (player~$i$ prefers plays with fewer steps to reach the target set $T$). The variant where player~$i$ prefers to maximize the number of steps to reach $T$,\footnote{as each step corresponds to a reward.} called \emph{max-reward-reachability}, is accepted by the third \DPW{} in \cref{fig:dpw-for-classical-objectivers}.

Hence, there are many ways to envision $\omega$-automatic relations.
Note that in our framework, the preference relations $\leqRelationStrict[i]$ of a game $\game$ can vary from one player to another, where each relation $\leqRelationStrict[i]$ can be defined from a combination of several objectives (see \cref{ex:example} below).

\begin{figure}
    \begin{subfigure}[t]{0.27\textwidth}
        \centering
        \begin{tikzpicture}[automaton,scale=.7,every node/.style={scale=.7},node distance=1.9]
            \node[initial,system] (q0) {$1$};
            \node[system] (q1) [right=of q0] {$0$};

            \path (q0) edge            node {$\neg T, T$} (q1)
                       edge[loop above]node {$\neg T, \neg T$} (q0)
                  (q1) edge[loop above]node {$\neg T, *$} (q1);
        \end{tikzpicture}
    \end{subfigure}\hfill
    \begin{subfigure}[t]{0.33\textwidth}
        \centering
        \begin{tikzpicture}[automaton,scale=.7,every node/.style={scale=.7},node distance=3]
            \node[initial,system] (q0) {$1$};
            \node[system] (q2) [right=of q0] {$3$};
            \node[system] (q1) at ($(q0)!0.5!(q2) - (0,1.9)$) {$2$};

            \path (q0) edge                node[above] {$T, *$} (q2)
                       edge[bend right=15] node[sloped,below] {$\neg T, T$} (q1)
                       edge[loop above]    node {$\neg T, \neg T$} (q0)
                  (q1) edge[bend right=15] node[sloped] {$\neg T, \neg T$} (q0)
                       edge[bend right=15] node[sloped,below] {$T, *$} (q2)
                       edge[loop right]     node {$\neg T, T$} (q1)
                  (q2) edge[bend right=25] node[above] {$\neg T, \neg T$} (q0)
                       edge[bend right=15] node[sloped] {$\neg T, T$} (q1)
                       edge[loop above]    node {$T, *$} (q2);
        \end{tikzpicture}
    \end{subfigure}\hfill
    \begin{subfigure}[t]{0.35\textwidth}
        \centering
            \begin{tikzpicture}[automaton,scale=.7,every node/.style={scale=.7},node distance=1.6]
              \node[initial,system] (q0) {$1$};
              \node[system] (q1) [below=of q0] {$1$};
              \node[system] (q2) [right=of q1] {$0$};
              \node[system] (q3) [right=of q0] {$0$};

              \path (q0) edge            node {$T, \neg T$} (q1)
                         edge            node {$\neg T, T$} (q3)
                         edge[loop above]node {$\neg T, \neg T$} (q0)
                    (q1) edge[loop left] node {$*, \neg T$} (q1)
                         edge            node {$*, T$} (q2)
                    (q2) edge[loop above]node {$*, *$} (q2)
                    (q3) edge[loop above]node {$\neg T, *$} (q3);
            \end{tikzpicture}
    \end{subfigure}
    \caption{\DPWs{} accepting preference relations, corresponding respectively to reachability, B\"uchi, and max-reward-reachability objectives. The priorities are indicated inside each state, and an edge label $T$, $\neg T$, or $*$ means that there is an edge for each label $v \in T$, $v \in V \setminus T$, and $v \in V$, respectively.}
    \label{fig:dpw-for-classical-objectivers}
\end{figure}

\section{Decision Problems about Nash Equilibria}
\label{section:nash-decisionProb}

In this section, we state the decision problems studied in this paper and we provide our main results regarding their complexity classes.

\subparagraph*{Studied Problems.}
A \emph{Nash Equilibrium} (NE) from an initial vertex $v_0$ is a strategy profile $(\sigma_i)_{i \in \Players}$ such that for all players~$i$ and all strategies $\tau_i$ of player~$i$, we have $\outcomefrom{\sigma}{v_0} \not\leqRelationStrict[i] \outcomefrom{\tau_i,\sigma_{-i}}{v_0}$, where $\sigma_{-i}$ denotes $(\sigma_j)_{j \in \Players \setminus \{i\}}$. So, NEs are strategy profiles where no single player has an incentive to unilaterally deviate from his strategy. When there exists a strategy $\tau_i$ such that $\outcomefrom{\sigma}{v_0} \leqRelationStrict[i] \outcomefrom{\tau_i,\sigma_{-i}}{v_0}$, we say that $\tau_i$ (or, by notation abuse, $\outcomefrom{\tau_i,\sigma_{-i}}{v_0}$) is a \emph{profitable deviation} for player~$i$. The set of players $\Players \ssetminus \{i\}$ is called \emph{coalition}~$-i$, sometimes seen as one player opposed to player~$i$.

\begin{figure}
    \centering
    \begin{tikzpicture}[automaton,every node/.style={scale=.7}]
        \node[system] (v0) {$v_0$};
        \node[environment] (v3) [right of=v0] {$v_3$};
        \node[system] (v1) [above=2.7em of v0] {$v_1$};
        \node[system] (v2) [left of=v0] {$v_2$};

        \path (v0) edge[bend right=20] (v3)
                   edge[bend right=20] (v2)
              (v3) edge[bend right=20] (v0)
                   edge (v1)
              (v1) edge (v0)
              (v2) edge[bend right=20] (v0);
    \end{tikzpicture}
    \caption{An arena with round (resp.\ square) vertices owned by player~$1$ (resp.\ player~$2$).}
    \label{fig:example-buchi-reach-game}
\end{figure}

\begin{example}\label{ex:example}
    Let us illustrate the NE definition with two examples. We consider the two-player arena depicted in \cref{fig:example-buchi-reach-game} such that player~$2$ owns only $v_3$ and player~$1$ owns all other vertices. The preference relation $\leqRelationStrict[1]$ for player~$1$ is defined from a min-cost-reachability objective with a target set $T_1 = \{v_1\}$. The preference relation $\leqRelationStrict[2]$ for player~$2$ is defined from a B\"uchi objective with a target set $T_2 = \{v_2\}$.
    Let us consider the strategy profile $\sigma = (\sigma_1,\sigma_2)$ defined by two memoryless strategies such that $\sigma_1(v_0) = v_3$ and $\sigma_2(v_3) = v_0$.\footnote{As $v_1$ and $v_2$ have only one successor, the strategy is trivially defined for those vertices.} It is an NE from the initial vertex $v_0$ with outcome $\outcomefrom{\sigma}{v_0} = (v_0v_3)^\omega$. Player~$1$ has no profitable deviation if player~$2$ sticks to his strategy $\sigma_2$: it is not possible to visit vertex $v_1$. Player~$2$ has also no profitable deviation. There exists another NE $\sigma' = (\sigma'_1,\sigma'_2)$ from $v_0$ such that
    \begin{itemize}
        \item $\sigma_1'(hv_0) = v_2$ if the history $h$ visits $v_1$, and to $\sigma_1'(hv_0) = v_3$ otherwise,
        \item $\sigma'_2$ is the memoryless strategy such that $\sigma_2'(v_3) = v_1$.
    \end{itemize}
    In that case, the NE outcome is $\outcomefrom{\sigma'}{v_0} = v_0v_3v_1(v_0v_2)^\omega$. Note that both players prefer the second NE as $\outcomefrom{\sigma}{v_0} \leqRelationStrict[i] \outcomefrom{\sigma'}{v_0}$ for $i = 1,2$.

    Let us slightly modify the relation of player~$1$ such that $\leqRelationStrict[1]$ is defined from a lexicographic order using two objectives: a min-cost-reachability objective $\Omega_1^1$ with $T_1$ and a B\"uchi objective $\Omega_1^2$ with $T_2$.
    We have $x \leqRelationStrict[1] y$ if and only if ($\Omega_1^1(x) < \Omega_1^1(y)$) or ($\Omega_1^1(x) = \Omega_1^1(y)$ and $\Omega_1^2(x) < \Omega_1^2(y)$). If we consider the two previous strategy profiles, $\sigma'$ is still an NE, but $\sigma$ is no longer an NE as player~$1$ has a profitable deviation. Indeed, with the memoryless strategy $\tau_1$ such that $\tau_1(v_0) = v_2$, we get $\outcomefrom{\sigma}{v_0} = (v_0v_3)^\omega \leqRelationStrict[1] \outcomefrom{\tau_1,\sigma_2}{v_0} = (v_0v_2)^\omega$.
    \lipicsEnd
\end{example}

\begin{example}\label{ex:no-nash-not-omega-recognizable}
    In this example, we show that there does not always exist an NE in games with $\omega$-automatic preference relations. Consider the simple one-player game $\game$ with two vertices $v_0, v_1$, the edges $(v_0,v_0), (v_0,v_1), (v_1,v_1)$, and whose preference relation $\leqRelationStrict[1]$ is defined from a max-reward-reachability objective with a target set $T = \{v_1\}$. This game has no NE from the initial vertex $v_0$.\footnote{A similar example is given in~\cite{LeRoux-Pauly-Equilibria}.} Indeed, if the strategy of player~$1$ is to loop on $v_0$, then he has a profitable deviation by going to $T$ at some point, and if his strategy is to loop $k$ times in $v_0$ and then go to $T$, then he has a profitable deviation by looping one more time in $v_0$ before going to $T$.
    \lipicsEnd
\end{example}

In this paper, we investigate the following problems.

\begin{problems*} \label{problem}
    \begin{itemize}
        \item The \emph{NE checking problem} is to decide, given a game $\game$, an initial vertex $v_0$, and a strategy profile $\sigma = (\sigma_i)_{i \in \Players}$ where each strategy $\sigma_i$ is defined by a Mealy machine $\machine{i}$, whether $\sigma$ is an NE from $v_0$ in $\game$.
        \item The \emph{NE outcome checking problem} is to decide, given a game $\game$ and a lasso $\pi$, whether $\pi$ is the outcome of an NE in $\game$.
        \item The \emph{NE existence problem} is to decide, given a game $\game$ and an initial vertex $v_0$, whether there exists an NE from $v_0$ in $\game$.
        \item The \emph{constrained NE existence problem} is to decide, given a game $\game$, an initial vertex $v_0$, and a lasso $\pi_i$ for each player~$i$, whether there exists an NE from $v_0$ in $\game$ with an outcome $\rho$ such that $\pi_i \leqRelationStrict[i] \rho$ for all players~$i \in \Players$.
    \end{itemize}
\end{problems*}

\subparagraph*{Main Results.}
 Let us state our main results. We consider games $\game = (\arena,(\leqRelationStrict[i])_{i \in \Players})$ on the arena $\arena = (V,E,\Players,(V_i)_{i\in\Players})$, where each preference relation $\mathord{\leqRelationStrict[i]} \subseteq V^\omega \times V^\omega$ is $\omega$-automatic. We denote by $\aut{A}_i$ the \DPW{} accepting $\leqRelationStrict[i]$ and by $\{0,1,\ldots,d_i\}$ its set of priorities. We say that a problem $L$ is $\mathsf{Parity}$-hard if there exists a polynomial reduction from the problem of deciding the winner of a two-player zero-sum parity game to $L$.

\begin{theorem}[restate=nashchecking,name=]
\label{theorem:nash-checking-pspace}
    The NE checking problem is \pspaceComplete{}.
\end{theorem}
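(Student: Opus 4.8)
The plan is to establish membership in \pspace{} and \pspaceHard{}ness separately.

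\medskip
\noindent\textbf{Membership in \pspace{}.}
Fix a game $\game = (\arena,(\leqRelationStrict[i])_{i\in\Players})$, an initial vertex $v_0$, and Mealy machines $\machine{i}$ encoding the candidate profile $\sigma = (\sigma_i)_{i\in\Players}$. First I would observe that the outcome $\rho = \outcomefrom{\sigma}{v_0}$ is a lasso of polynomial size: it is obtained by following the synchronized product of the arena with all the machines $\machine{i}$, a deterministic system, so $\rho = \mu(\nu)^\omega$ with $|\mu\nu|$ polynomial in $|\arena| + \sum_i |\machine{i}|$. The profile $\sigma$ is \emph{not} an NE iff some player~$i$ has a profitable deviation, i.e., there exists a play $\rho'$ consistent with $\sigma_{-i}$ from $v_0$ with $\rho \leqRelationStrict[i] \rho'$. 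The key step is that checking the existence of such a $\rho'$ for a fixed $i$ is a reachability/one-player question in a suitable product: build the product of (a)~the arena $\arena$, (b)~the Mealy machines $\machine{j}$ for $j\neq i$ (which constrain the moves at vertices of coalition players), and (c)~the \DPW{} $\aut{A}_i$ reading the pair $(\rho,\rho')$ — here the first component $\rho$ is the fixed lasso, so it is driven by a small counter/pointer along $\mu(\nu)^\omega$, and the second component $\rho'$ is the free run chosen by player~$i$. In this product, player~$i$ controls the vertices in $V_i$ and all remaining choices are deterministic; a profitable deviation exists iff player~$i$ can force an accepting run of $\aut{A}_i$, i.e., win a one-player parity game on the product. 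This product has polynomially many components, hence exponential size, but it can be explored on the fly in polynomial space; deciding whether player~$i$ wins the induced one-player parity game is in \nl{} (indeed \conl{} for its complement) in the size of the product, hence in \pspace{} overall. Iterating over all players~$i$ and complementing gives the \pspace{} upper bound. I would also remark that since $\aut{A}_i$ has priorities $\{0,\dots,d_i\}$ with $d_i$ polynomially bounded, the parity analysis causes no blow-up beyond what is already accounted for.

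\medskip
\noindent\textbf{\pspaceHard{}ness.}
For the lower bound I would reduce from a canonical \pspaceComplete{} problem, e.g., the emptiness/universality-style problem for the intersection of deterministic finite or Büchi automata, or directly from a linear-bounded Turing machine / \QBF{}. The natural route here is to encode, in a two-player game with a single player whose preference relation is $\omega$-automatic, a condition of the form ``for all coalition responses, the outcome stays in some $\omega$-regular set'', which is exactly what the NE checking problem tests (no profitable deviation = universal quantification over the deviator's plays against a fixed $\sigma_{-i}$). Concretely, I would take an instance given by polynomially many deterministic automata $\mathcal{B}_1,\dots,\mathcal{B}_m$ and build a game and a candidate profile so that player~$1$ has a profitable deviation iff the intersection $\bigcap_k \lang{\mathcal{B}_k}$ is nonempty (the product of the $m$ automata is exponential, but the input is the $m$ separate automata, so this is a genuine \pspace{}-hardness source); the preference relation $\leqRelationStrict[1]$ is a fixed two-priority ($\omega$-regular) relation comparing the proposed outcome to any play, and the coalition's Mealy machines $\machine{j}$ for $j\neq 1$ together with the arena structure simulate the synchronized reading of all $\mathcal{B}_k$'s.

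\medskip
\noindent\textbf{Main obstacle.}
The routine direction is membership; the delicate part is the hardness reduction. The main difficulty will be engineering the arena together with the coalition's Mealy machines so that (i)~the candidate profile $\sigma$ has the right, easily-computable lasso outcome, (ii)~the only freedom left in the game is exactly the word fed to the conjunction of automata, with the automata states tracked by the coalition machines rather than by player~$1$, and (iii)~the preference relation witnessing a deviation is a \emph{fixed} $\omega$-automatic relation independent of the instance (so the hardness does not secretly come from the relation). Ensuring that coalition players, who play according to fixed finite-memory strategies, can faithfully carry the exponential-size product information requires distributing that information across the several machines $\machine{j}$ and the polynomially-bounded arena, which is the technical crux of the reduction.
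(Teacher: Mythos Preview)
Your proposal is correct and follows essentially the same approach as the paper on both sides. For membership, the paper builds exactly the on-the-fly product you describe (arena, coalition machines $\machine{j}$ for $j\neq i$, and the \DPW{} $\aut{A}_i$ with one track driven by the fixed outcome $\rho$) and checks non-emptiness in polynomial space; one small slip in your write-up is that $\rho$ is \emph{not} a polynomial-size lasso in general, since the synchronized product of $\arena$ with all $\machine{i}$ has size $|V|\cdot\prod_i|M_i|$, exponential in $|\Players|$---but this is harmless, as you yourself note later that the product is exponential and explored on the fly.

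For hardness, the paper reduces from the acceptance problem for linear bounded deterministic Turing machines rather than from intersection non-emptiness of deterministic automata, but the core mechanism is precisely the one you single out as the crux: the exponential configuration space (tape contents in the paper; the product of the $\mathcal{B}_k$ in your sketch) is distributed across the Mealy machines of $n$ non-deviating players, each of polynomial size, while one extra player has a trivial Mealy machine and a simple reachability-style preference, so that the given profile fails to be an NE exactly when that player's unique deviation triggers an accepting simulation.
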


\begin{theorem}\label{theorem:OutcomeCheck}
    The NE outcome checking problem is in \np{} $\cap$ \conp{} and $\mathsf{Parity}$-hard.
\end{theorem}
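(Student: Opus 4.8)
The plan is to reduce the problem, in both directions, to solving two-player zero-sum parity games, by using the classical characterisation of NE outcomes through \emph{punishing strategies} of the coalitions (in the spirit of the suspect game of~\cite{PatriciaBouyerBMU15}). Write the input lasso as $\pi = \mu(\nu)^\omega$ and let $v_0 = \first{\pi}$. I would first establish that $\pi$ is the outcome of an NE in $\game$ if and only if \emph{(a)}~$\pi$ is a genuine play of $\arena$ from $v_0$ and $\pi \not\leqRelationStrict[i]\pi$ for every player~$i$, and \emph{(b)}~for every player~$i$, every index $k$ with $\pi_k \in V_i$, and every successor $v' \neq \pi_{k+1}$ of $\pi_k$, coalition~$-i$ has a strategy, from the history $\pi_0\cdots\pi_k v'$, forcing every continuation $\rho$ to satisfy $\pi \not\leqRelationStrict[i] \pi_0\cdots\pi_k v'\rho$. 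Direction $(\Rightarrow)$ is immediate from an NE $\sigma$ with outcome $\pi$: take $\tau_i = \sigma_i$ for (a), and the restriction of $\sigma_{-i}$ to histories extending the deviation prefix for (b). For $(\Leftarrow)$, one stitches the witnesses of (b), which live in pairwise disjoint subtrees, into a single profile in which every player follows $\pi$ while the history is a prefix of $\pi$ and otherwise reacts to the first deviation of the owner~$i$ of the deviating vertex according to coalition~$-i$'s witness; its outcome is $\pi$ and no unilateral deviation is profitable.

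\emph{Reduction to parity games and membership.} Fix $i$. As $\pi$ is a lasso, every suffix of $\pi$ is ultimately periodic; feeding such a suffix as the first component into $\aut{A}_i$ (tracking it by a finite automaton of size at most $|\pi|$) and complementing the parity condition turns $\aut{A}_i$ into a \DPW{} over the alphabet $V$ of size $\bigO{|\aut{A}_i| \cdot |\pi|}$ that reads a single play $\rho$ and accepts it iff $\pi \not\leqRelationStrict[i] \pi_0\cdots\pi_k v'\rho$. Its synchronous product with $\arena$ is a two-player zero-sum parity game $\game_i^\pi$ with $\bigO{|V|\cdot|\aut{A}_i|\cdot|\pi|}$ vertices and $\bigO{d_i}$ priorities, in which condition (b) for a pair $(k,v')$ states exactly that coalition~$-i$, as protagonist, wins from the vertex encoding the situation just after the deviation. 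Since $\pi$ is a lasso, only polynomially many such post-deviation situations arise (a vertex, a state of $\aut{A}_i$, and a phase of $\pi$, the last two being ultimately periodic in $k$), so checking (b) for player~$i$ amounts to solving $\game_i^\pi$ once and testing membership of those situations in coalition~$-i$'s winning region; condition (a) is checkable in polynomial time. As parity games are memoryless-determined, a positive instance is certified by (a) together with one memoryless winning strategy for each coalition in each of the polynomially many games $\game_i^\pi$ — verifiable in polynomial time, since fixing such a strategy leaves a one-player parity condition on a graph — so the problem is in \np{}; a negative instance is certified by a violation of (a) or by a triple $(i,k,v')$ with a memoryless winning strategy for player~$i$ in $\game_i^\pi$, so the problem is in \conp{}.

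\emph{$\mathsf{Parity}$-hardness.} Given a two-player zero-sum parity game $G$ with initial vertex $v_0$ and players Eve (who wants the maximal priority seen infinitely often to be even) and Adam, I would build a two-player game $\game$: add fresh vertices $v_{\mathrm{in}}$ and $w$ owned by player~$1$ with edges $(v_{\mathrm{in}},v_0)$, $(v_{\mathrm{in}},w)$, $(w,w)$, keep the vertices, edges and priorities of $G$, and let player~$1$ own Adam's vertices and player~$2$ own Eve's. Take $\leqRelationStrict[1]$ defined by $x \leqRelationStrict[1] y$ iff $\Omega_1(x) < \Omega_1(y)$, where $\Omega_1(y) = 1$ iff $y$ enters $G$ (that is, $y_1 = v_0$) and the maximal priority seen infinitely often along $y$ is odd, and $\Omega_1(y) = 0$ otherwise; this is an $\omega$-regular condition, so $\leqRelationStrict[1]$ is $\omega$-automatic. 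Let $\leqRelationStrict[2]$ be the empty relation, and set $\pi = v_{\mathrm{in}}(w)^\omega$, so that $\Omega_1(\pi) = 0$ and $\pi \not\leqRelationStrict[i]\pi$ for $i = 1,2$. On $\pi$ the only genuine choice is player~$1$'s at $v_{\mathrm{in}}$, so the sole candidate profitable deviation is player~$1$ moving to $v_0$. If Eve wins $G$, choose player~$2$'s punishing strategy to be Eve's winning strategy in $G$: every continuation then stays in $G$ with even maximal priority infinitely often, hence has $\Omega_1 = 0$ and is not preferred to $\pi$ by player~$1$, so $\pi$ is an NE outcome. If Adam wins $G$, then against any strategy of player~$2$ player~$1$ can force a continuation with $\Omega_1 = 1 > 0 = \Omega_1(\pi)$, a profitable deviation, so no NE has outcome $\pi$. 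By determinacy of $G$, $\pi$ is an NE outcome in $\game$ iff Eve wins $G$, and the construction is polynomial.

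\emph{Main obstacle.} I expect the delicate part to be the NE-outcome characterisation together with the bookkeeping that collapses the a priori infinite set of deviation indices $k$ into a polynomial-size family of polynomial-size parity games — concretely, hardwiring the fixed lasso $\pi$ as the first component of $\aut{A}_i$, complementing its parity condition, and exploiting the ultimate periodicity of the run of $\aut{A}_i$ on $(\pi,\pi)$.
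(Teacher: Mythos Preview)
Your proposal is correct and follows essentially the same route as the paper: for membership, both hardwire the lasso $\pi$ into the first component of each $\aut{A}_i$, complement the parity condition, take the product with the arena, and reduce the question to whether coalition~$-i$ wins the resulting polynomial-size parity game from every deviation point along $\pi$, with memoryless determinacy supplying the \np{} and \conp{} certificates; for hardness, both attach a fresh self-looping vertex owned by player~$1$ to the initial vertex of a given parity game, set $\leqRelationStrict[2] = \varnothing$, and let $\leqRelationStrict[1]$ encode the parity objective so that the loop is an NE outcome iff one side of the parity game wins. Your explicit clause~(a), $\pi \not\leqRelationStrict[i] \pi$, is a welcome extra care for the corner case where player~$i$ owns no vertex on $\pi$ (recall the relations are not assumed irreflexive), which the paper's formulation handles only implicitly.
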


\begin{theorem}[restate=NEexistenceGeneral,name=]
\label{theorem:NEexistenceGeneral}
    The NE existence problem is exponential in $|V|$, $\Pi_{i \in \Players}|\aut{A}_i|$, and $\Sigma_{i\in \Players}d_i$, thus doubly exponential in $|\Players|$. If the number of players is fixed (resp.\ for a one-player game), this problem is in \exptime{} and \pspace{}-hard (resp.\ \pspace{}-complete).
\end{theorem}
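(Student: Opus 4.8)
\emph{Reformulation as a zero-sum game.} We plan to reduce NE existence to the existence of a winning strategy in a three-player zero-sum game with imperfect information, following the classical deviation/punishment characterization but taking care that an $\omega$-automatic relation compares \emph{two} plays read synchronously. The guiding observation is that a play $\pi$ from $v_0$ is the outcome of an NE iff some strategy profile realizes $\pi$ and, whenever one player $i$ unilaterally leaves $\pi$, the remaining players can steer the resulting play $\rho$ so that $\pi\not\leqRelationStrict[i]\rho$; since the arena is turn-based, the first vertex at which the play leaves $\pi$ identifies the deviator, so the per-deviation coalition responses never conflict and can be glued into a single profile. Accordingly we build a game $\game_{\pcp}$ played by $\proverone$, $\challenger$ and $\provertwo$: $\proverone$ produces, vertex by vertex, a reference play $\pi$; at any moment $\challenger$ may trigger a deviation, accusing the owner $i$ of the current vertex and picking an off-$\pi$ successor; afterwards $\challenger$ plays for player $i$ while $\provertwo$ plays for the coalition $\Players\setminus\{i\}$, together producing $\rho$, and $\proverone$ keeps extending $\pi$. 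The team $\{\proverone,\provertwo\}$ wins a play if $\challenger$ never deviates, and otherwise wins iff $\pi\not\leqRelationStrict[i]\rho$; the latter is decided by $\aut{A}_i$ read on the pair $(\pi,\rho)$, whose run has to be forked at the deviation vertex, so the game must also track, along the common prefix, one state per automaton $\aut{A}_j$ — making the winning condition a positive Boolean combination of parity conditions with $\bigO{\sum_{j\in\Players}d_j}$ priorities. Imperfect information is essential: $\proverone$ must commit to $\pi$ without observing $\challenger$'s deviation (otherwise the reference outcome could react to the deviation, which no NE does), and $\provertwo$, who enters only at the deviation point, does not observe $\proverone$'s continuation of $\pi$. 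The first milestone is to prove that the team wins $\game_{\pcp}$ from the initial configuration iff $\game$ has an NE from $v_0$, with a finite-memory winning strategy yielding an NE with finite-memory strategies.

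\emph{Upper bounds.} Since the information within the team is nested ($\proverone$ sees nothing, $\provertwo$ sees the deviation play, $\challenger$ sees everything), a knowledge-based subset construction turns $\game_{\pcp}$ into an equivalent perfect-information two-player zero-sum game with a generalized parity condition, whose vertices are obtained from a pair of arena vertices, a control mode, and a tuple recording one state per $\aut{A}_j$, to which the powerset construction is applied; this gives a game of size $2^{\bigO{|V|\cdot\prod_{j\in\Players}|\aut{A}_j|}}$ with $\bigO{\sum_{j\in\Players}d_j}$ priorities. As solving a generalized parity game is polynomial in its size and exponential in its number of priorities, the overall procedure is exponential in $|V|$, in $\prod_{j\in\Players}|\aut{A}_j|$, and in $\sum_{j\in\Players}d_j$; and since $\prod_{j\in\Players}|\aut{A}_j|$ is already exponential in $|\Players|$, it is doubly exponential in $|\Players|$. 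When $|\Players|$ is fixed, $\prod_{j\in\Players}|\aut{A}_j|$ and $\sum_{j\in\Players}d_j$ are polynomial in the input, so the algorithm runs in \exptime{}. For a one-player game the construction degenerates and it is cleaner to argue directly: an NE from $v_0$ exists iff some play from $v_0$ is maximal for $\leqRelationStrict[1]$, i.e.\ iff $\Plays_{v_0}\not\subseteq\mathsf{proj}_1\bigl(\mathord{\leqRelationStrict[1]}\cap(V^\omega\times\Plays_{v_0})\bigr)$; the right-hand side is recognized by a nondeterministic parity automaton polynomial in $|\aut{A}_1|$ and $|V|$, and testing this non-inclusion is an $\omega$-automata containment question that an on-the-fly subset construction settles in \pspace{}.

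\emph{Lower bounds.} We intend to prove \pspaceHard{}ness already for one player (hence for any fixed number of players) by reduction from the non-universality problem for B\"uchi automata, which is \pspaceComplete{}. From an NBA $N$ over an alphabet $\Sigma$, we take the arena whose vertices form a (suitably enlarged) copy of $\Sigma$ with all edges present, and the \DPW{} $\aut{A}_1$ over $\Sigma\times\Sigma$ that uses the second component to guess and check a run of $N$ on the first component; then $\mathsf{proj}_1(\mathord{\leqRelationStrict[1]})=\lang{N}$, so a maximal play exists iff $\lang{N}$ is not universal. Combined with the \pspace{} upper bound, this gives \pspaceComplete{}ness for one player.

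\emph{Main obstacle.} The crux will be the design and correctness proof of $\game_{\pcp}$: organizing the interplay of $\proverone$, $\challenger$ and $\provertwo$, handling the synchronous reading — and the forking at the deviation vertex — of the runs of the $\aut{A}_j$ on $(\pi,\rho)$, and, most importantly, choosing the information partition so that the team realizes exactly the NE punishment strategies (no more, no less) while the resulting knowledge game stays singly exponential, which is what makes the \exptime{} bound for a fixed number of players go through.
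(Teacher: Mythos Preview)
Your plan is essentially the paper's approach: the same three-player \pcp{} game with $\proverone$ building the candidate outcome under imperfect information, $\challenger$ proposing a deviation, and $\provertwo$ retaliating; then a reduction to a two-player zero-sum game and a knowledge construction; the one-player \pspace{} upper bound via a maximal-play/universality argument; and \pspace{}-hardness via NBA non-universality.

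One point deserves care. You give $\provertwo$ \emph{less} information than the paper does (the paper lets $\provertwo$ observe everything), and you then justify the three-to-two reduction by calling the team information ``nested''. But by your own description it is not nested in the usual sense: $\proverone$ knows the continuation of $\pi$ (he is producing it) while $\provertwo$ does not, so their views are incomparable. The reason the reduction still goes through is structural rather than hierarchical: $\proverone$'s observation is entirely determined by his own action history (what the paper, following \cite{SPE-NCRS-quanti-parity-P1CP2}, packages as \emph{player-stability} and \emph{action-stability} of the observation function), so the two Provers can be merged into a single Prover whose action set is enlarged to functions from $\provertwo$-states to $\provertwo$-actions. You correctly flag this step as the crux; just be aware that ``nested information'' is not the right hook—the argument hinges on $\proverone$'s observation being reconstructible from public data.
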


\begin{theorem}[restate=constrainedNEexistenceGeneral,name=]
\label{theorem:constrainedNEexistenceGeneral}
    The constrained NE existence problem, with the constraints given by lassoes $(\pi_i)_{i\in \Players}$, is exponential in $|V|$, $\Pi_{i\in \Players}|\aut{A}_i|$, $\Pi_{i\in \Players}|\pi_i|$, and doubly exponential in $\Sigma_{i\in \Players}d_i$, thus also doubly exponential in $|\Players|$.
    If the number of players and each $d_i$ are fixed (resp.\ for a one-player game), this problem is in \exptime{} and \pspace{}-hard (resp.\ \pspaceComplete{}).
\end{theorem}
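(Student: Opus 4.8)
The plan is to reuse, essentially verbatim, the reduction underlying \cref{theorem:NEexistenceGeneral}, taking advantage of its modularity. Recall that the unconstrained NE existence problem is solved by building a three-player zero-sum game with imperfect information between $\proverone$, $\challenger$ and $\provertwo$: $\proverone$ produces a candidate outcome $\rho$ vertex by vertex, $\challenger$ may at any moment declare that some player~$i$ deviates and then controls the moves of that player, while $\provertwo$ controls the coalition~$-i$ and must punish the deviation. Imperfect information arises because, once $\challenger$ has branched off, $\provertwo$ no longer sees the intended continuation of $\rho$ and hence must keep a set of possible states of $\aut{A}_i$. Player~$\proverone$ wins a play along which no deviation is ever declared, as well as a play along which player~$i$ deviates and the produced play $\rho'$ satisfies $(\rho,\rho')\notin R_i$; since the complement of each $R_i$ is a \DPW{} with $O(d_i)$ priorities, the knowledge (subset) construction composed with the products of the $\aut{A}_i$'s yields a perfect-information game whose size is exponential in $|V|$, $\Pi_{i\in\Players}|\aut{A}_i|$ and $\Sigma_{i\in\Players}d_i$.

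To incorporate the constraints, I would modify only the winning condition of $\proverone$. For every player~$i$, since the constraint $\pi_i$ is a lasso, freezing the first input component of $\aut{A}_i$ so that it follows $\pi_i$ forever turns $\aut{A}_i$ into a deterministic parity automaton $\aut{B}_i$ over the alphabet $V$, of size $|\aut{A}_i|\cdot|\pi_i|$ and with the same priority set $\{0,\dots,d_i\}$, which reads a play $\rho$ and accepts iff $\pi_i \leqRelationStrict[i] \rho$. I would run all the $\aut{B}_i$ synchronously along the main play maintained by $\proverone$ and replace the winning case ``no deviation is ever declared'' by ``no deviation is ever declared and every $\aut{B}_i$ accepts $\rho$''. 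Thus $\proverone$ now wins iff either some deviation of a player~$i$ is punished, or no deviation occurs and $\bigwedge_{i\in\Players}(\pi_i \leqRelationStrict[i]\rho)$ holds; soundness and completeness follow from \cref{theorem:NEexistenceGeneral} together with the correctness of the $\aut{B}_i$'s. As before, finite-memory strategies suffice in the resulting zero-sum game and they translate back to a constrained NE with finite-memory strategies whenever one exists.

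For the complexity, the extra products with the $\aut{B}_i$ multiply the state space by $\Pi_{i\in\Players}|\pi_i|$, so it remains exponential in $|V|$, $\Pi_{i\in\Players}|\aut{A}_i|$ and $\Pi_{i\in\Players}|\pi_i|$. The genuinely new cost is that the winning condition of $\proverone$ now contains the conjunction $\bigwedge_{i\in\Players}(\aut{B}_i\text{ accepts})$ -- a positive Boolean combination of parity conditions -- instead of a condition that, along each individual play, reduces to a single parity condition (in the unconstrained algorithm only one player's deviation is ever tested along a play, so disjoint priority ranges suffice). Since the subset construction that resolves imperfect information needs a deterministic objective automaton, this generalized-parity condition must first be determinized into a single parity automaton, at a cost exponential in the total number of priorities $\Sigma_{i\in\Players}d_i$; solving the resulting perfect-information parity game is then polynomial in its size, so the overall bound is doubly exponential in $\Sigma_{i\in\Players}d_i$ and hence also in $|\Players|$. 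When $|\Players|$ and every $d_i$ are fixed, $\Sigma_{i\in\Players}d_i$ is a constant while $\Pi_{i\in\Players}|\aut{A}_i|$ and $\Pi_{i\in\Players}|\pi_i|$ are polynomial, so the construction has exponential size and the problem is in \exptime{}; for a one-player game there is no coalition, the imperfect-information machinery disappears, and a direct \pspace{} algorithm (guess a lasso outcome of bounded length, then check both maximality and $\pi_1 \leqRelationStrict[1]\rho$ on the fly) gives membership in \pspace{}.

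For the lower bounds, \pspace{}-hardness (and \pspace{}-completeness in the one-player case) is obtained by reusing the reductions behind the lower bounds of \cref{theorem:NEexistenceGeneral} and \cref{theorem:nash-checking-pspace}, instantiated with constraints $\pi_i$ that every outcome of the constructed instances already satisfies (e.g.\ by extending each relation with a least play that every other play strictly dominates), so that the reduction is unaffected. The step I expect to be delicate is precisely this interaction between the constraints and the deviation structure: verifying that strengthening only the no-deviation case is both sound and complete, and confirming that the determinization blow-up is confined to the parameter $\Sigma_{i\in\Players}d_i$, so that fixing the $d_i$'s together with a fixed number of players really brings the complexity back down to \exptime{}.
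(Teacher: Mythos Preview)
Your high-level plan matches the paper almost exactly: keep the $\pcp$ game, leave $W_{dev}$ untouched, and strengthen only $W_{acc}$ to require $\bigwedge_{i\in\Players}(\pi_i\leqRelationStrict[i]\rho)$ using the product with the \DPWs{} $\aut{B}_i$ of size $|\aut{A}_i|\cdot|\pi_i|$; the paper does precisely this (it calls your $\aut{B}_i$ the automata $\aut{A}'_i$), then turns the conjunction of parity conditions into a Streett condition with $d=\Sigma_i d_i$ pairs and determinizes it into a Rabin automaton of size $2^{O(d\log d)}$, which is where the doubly-exponential dependence on $\Sigma_i d_i$ comes from. Your hardness idea (add a fresh least play and use it as the constraint, reducing from the unconstrained NE existence problem) is also exactly the paper's reduction.

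Two points deserve correction. First, in your recap of the $\pcp$ game you give the imperfect information to $\provertwo$; in the paper it is $\proverone$ who is blind (he cannot see whether $\challenger$ has deviated), and this is what forces $\rho$ to be fixed independently of the deviation. $\provertwo$ has perfect information, which is essential: the punishing coalition $\sigma_{-i}$ is allowed to depend on $\rho$ and on where the deviation starts. Your inverted description would not yield the right correspondence with NEs, so make sure the actual construction follows the paper, not your summary.

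Second, your one-player \pspace{} membership argument (``guess a lasso outcome of bounded length, then check maximality and the constraint on the fly'') is not justified as stated. The set of maximal plays is the complement of a projection and may only be recognized by an exponentially large deterministic automaton, so you have no polynomial bound on the lasso length to guess, and checking maximality of an exponentially long lasso on the fly is not obviously in \pspace{}. The paper avoids this by working with the negation: it expresses ``there is no constrained NE outcome'' as $\forall x\,\exists y,\ x\leqRelationStrict' y \lor x\notin\lang{\aut{B}}$, builds a polynomial-size nondeterministic Streett automaton for the inner existential, and then performs a \pspace{} universality test. You should replace your guess-and-check sketch by this universality argument (or supply the missing length bound).
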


The proofs of these theorems are detailed in the next two sections. In \cref{section:omegaRec}, we reconsider the studied problems in the special case of games with \emph{$\omega$-recognizable} relations.

\section{NE Checking and NE Outcome Checking Problems}
\label{section:nash-checking}

We first prove \cref{theorem:nash-checking-pspace}, stating the \pspace{}-completeness of the NE checking problem. The hardness is limited to a sketch of proof; the full technical details are given in \cref{app:hardness-nash-checking}.

\begin{proof}[Proof of \cref{theorem:nash-checking-pspace}]
    We begin with the membership result. Given the Mealy machines $\machine{i} = (M_i, m_0^j, \alpha_U^i, \alpha_N^i)$, $i \in \Players$, and the strategies $\sigma_i$ they define, we have to check whether $\sigma = (\sigma_i)_{i \in \Players}$ is an NE from a given initial vertex $v_0$. Equivalently, we have to check whether there exists a strategy $\tau_i$ for some player~$i$ such that $\outcomefrom{\sigma}{v_0} \leqRelationStrict[i] \outcomefrom{\tau_i,\sigma_{-i}}{v_0}$ (in which case $\sigma$ is not an NE). That is, whether there exists $i$ such that the language
    \[
    L_i = \{(x,y) \in V^\omega \times V^\omega \mid x \leqRelationStrict[i] y, ~ x = \outcomefrom{\sigma}{v_0}, ~ y \text{ consistent with $\sigma_{-i}$ and starting at $v_0$} \}
    \] is non-empty. We are going to describe a generalized \DPW{} $\aut{B}_i$, with a conjunction of three parity conditions, that accepts $L_i$. We proceed as follows.
    \begin{enumerate}
        \item The set $\{(x,y) \in V^\omega \times V^\omega \mid x \leqRelationStrict[i] y\}$ is accepted by the given \DPW{} $\aut{A}_i$ that accepts $\leqRelationStrict[i]$.
        \item The outcome $\outcomefrom{\sigma}{v_0}$ is a lasso obtained from the product of the arena $\arena$ and all $\machine{j}$. We can define a \DPW{}, of size exponential in the number of players, that only accepts $\outcomefrom{\sigma}{v_0}$.
        \item Finally, consider the product $\arena'$ of the arena $\arena$ with all $\machine{j}$, with $j \ne i$. We denote by $V'$ the set of vertices of $\arena'$, where each vertex is of the form $(v,(m_j)_{j \neq i})$, with $v \in V$ and $m_j$ a memory state of $\machine{j}$. The set of plays $y$ consistent with $\sigma_{-i}$ and starting at $v_0$ is accepted by a \DPW{} whose set of states is $V' \cup \{s_0\}$ with $s_0$, a new state, its initial state, all those states with priority $0$, and whose transition function $\delta$ is such that $\delta((v,(m_j)_{j\neq i}),v') = (v',(m'_j)_{j\neq i})$ for $\alpha_U^j(m_j,v) = m'_j$, and $\delta(s_0,v_0) = (v_0,(m^j_0)_{j\neq i})$. Note that $\delta$ is a function as each $\machine{j}$ is deterministic and that this \DPW{} is of exponential size in the number of players.
    \end{enumerate}
    The announced automaton $\aut{B}_i$ is the product of the automata defined in the previous steps. It has exponential size and can be constructed on the fly, hence leading to a \pspace{} algorithm. Indeed, to check whether $L_i$ is non-empty, we guess a lasso $\mu(\nu)^\omega$ and its exponential length, and check whether the guessed lasso is accepted by $\aut{B}_i$. This only requires a polynomial space as the lasso is guessed on the fly, state by state, while computing the maximum priority occurring in $\nu$ for each priority function, and the length $|\mu\nu|$ is stored in binary. Finally, we repeat this procedure for each automaton $\aut{B}_i$, $i \in \Players$.

    \medskip

    \begin{figure}[t]
    \centering
        \begin{tikzpicture}[x=0.75pt,y=0.75pt,yscale=-1,scale=.8,every node/.style={scale=.8}]
            \draw (243.76,108.21) .. controls (243.76,108.21) and (243.76,108.21) .. (243.76,108.21) -- (274.91,108.21) .. controls (274.91,108.21) and (274.91,108.21) .. (274.91,108.21) -- (274.91,126) .. controls (274.91,126) and (274.91,126) .. (274.91,126) -- (243.76,126) .. controls (243.76,126) and (243.76,126) .. (243.76,126) -- cycle;
            \draw [fill={rgb,255:red,0;green,0;blue,0},fill opacity=0.03,dash pattern={on 4.5pt off 4.5pt}] (285.19,114.01) .. controls (285.19,104.47) and (292.93,96.73) .. (302.47,96.73) -- (420.43,96.73) .. controls (429.98,96.73) and (437.71,104.47) .. (437.71,114.01) -- (437.71,174.12) .. controls (437.71,183.66) and (429.98,191.4) .. (420.43,191.4) -- (302.47,191.4) .. controls (292.93,191.4) and (285.19,183.66) .. (285.19,174.12) -- cycle;
            \draw (301.36,110) .. controls (301.36,106.64) and (304.08,103.91) .. (307.44,103.91) -- (345.46,103.91) .. controls (348.82,103.91) and (351.55,106.64) .. (351.55,110) -- (351.55,122.83) .. controls (351.55,126.19) and (348.82,128.91) .. (345.46,128.91) -- (307.44,128.91) .. controls (304.08,128.91) and (301.36,126.19) .. (301.36,122.83) -- cycle;
            \draw (274.91,116.21) -- (297.96,116.02);
            \draw [shift={(300.96,116)}, rotate=179.53,fill={rgb,255:red,0;green,0;blue,0},line width=0.08,draw opacity=0] (5.36,-2.57) -- (0,0) -- (5.36,2.57) -- (3.56,0) -- cycle;
            \draw (250.8,150.46) .. controls (250.8,150.46) and (250.8,150.46) .. (250.8,150.46) -- (269.06,150.46) .. controls (269.06,150.46) and (269.06,150.46) .. (269.06,150.46) -- (269.06,168.07) .. controls (269.06,168.07) and (269.06,168.07) .. (269.06,168.07) -- (250.8,168.07) .. controls (250.8,168.07) and (250.8,168.07) .. (250.8,168.07) -- cycle;
            \draw (259.56,126) -- (259.49,146.38);
            \draw [shift={(259.48,149.38)}, rotate=270.19,fill={rgb,255:red,0;green,0;blue,0},line width=0.08,draw opacity=0] (5.36,-2.57) -- (0,0) -- (5.36,2.57) -- (3.56,0) -- cycle;
            \draw (262.51,168.21) .. controls (270.01,186.6) and (249.69,188.47) .. (254.81,171.19);
            \draw [shift={(255.71,168.61)}, rotate=111.8,fill={rgb,255:red,0;green,0;blue,0},line width=0.08,draw opacity=0] (5.36,-2.57) -- (0,0) -- (5.36,2.57) -- (3.56,0) -- cycle;
            \draw (301.36,160.4) .. controls (301.36,157.04) and (304.08,154.31) .. (307.44,154.31) -- (345.46,154.31) .. controls (348.82,154.31) and (351.55,157.04) .. (351.55,160.4) -- (351.55,173.23) .. controls (351.55,176.59) and (348.82,179.31) .. (345.46,179.31) -- (307.44,179.31) .. controls (304.08,179.31) and (301.36,176.59) .. (301.36,173.23) -- cycle;
            \draw (380.36,110) .. controls (380.36,106.64) and (383.08,103.91) .. (386.44,103.91) -- (424.46,103.91) .. controls (427.82,103.91) and (430.55,106.64) .. (430.55,110) -- (430.55,122.83) .. controls (430.55,126.19) and (427.82,128.91) .. (424.46,128.91) -- (386.44,128.91) .. controls (383.08,128.91) and (380.36,126.19) .. (380.36,122.83) -- cycle;
            \draw (380.56,161.4) .. controls (380.56,158.04) and (383.28,155.31) .. (386.64,155.31) -- (424.66,155.31) .. controls (428.02,155.31) and (430.75,158.04) .. (430.75,161.4) -- (430.75,174.23) .. controls (430.75,177.59) and (428.02,180.31) .. (424.66,180.31) -- (386.64,180.31) .. controls (383.28,180.31) and (380.56,177.59) .. (380.56,174.23) -- cycle;
            \draw (533.23,129.53) .. controls (533.23,129.53) and (533.23,129.53) .. (533.23,129.53) -- (566.27,129.53) .. controls (566.27,129.53) and (566.27,129.53) .. (566.27,129.53) -- (566.27,152.05) .. controls (566.27,152.05) and (566.27,152.05) .. (566.27,152.05) -- (533.23,152.05) .. controls (533.23,152.05) and (533.23,152.05) .. (533.23,152.05) -- cycle;
            \draw (259.89,94.27) -- (259.83,105.65);
            \draw [shift={(259.81,108.65)}, rotate=270.31,fill={rgb,255:red,0;green,0;blue,0},line width=0.08,draw opacity=0] (5.36,-2.57) -- (0,0) -- (5.36,2.57) -- (3.56,0) -- cycle;
            \draw (426.86,143.36) .. controls (453.96,134.07) and (509.9,136.57) .. (530.21,141.69);
            \draw [shift={(532.83,142.43)}, rotate=197.65,fill={rgb,255:red,0;green,0;blue,0},line width=0.08,draw opacity=0] (5.36,-2.57) -- (0,0) -- (5.36,2.57) -- (3.56,0) -- cycle;
            \draw (544.91,129.01) .. controls (537.66,107.62) and (561.04,108.12) .. (555.5,125.97);
            \draw [shift={(554.51,128.61)}, rotate=293.33,fill={rgb,255:red,0;green,0;blue,0},line width=0.08,draw opacity=0] (5.36,-2.57) -- (0,0) -- (5.36,2.57) -- (3.56,0) -- cycle;
            \draw (245.9,113.4) node [anchor=north west,inner sep=0.75pt,align=left] {$v_{init}$};
            \draw (308.1,109) node [anchor=north west,inner sep=0.75pt,font=\normalsize,align=left] {$(q_{1},1)$};
            \draw (252.6,152) node [anchor=north west,inner sep=0.75pt,align=left] {$\#$};
            \draw (304.1,159.4) node [anchor=north west,inner sep=0.75pt,font=\normalsize,align=left] {$(q_{m},1)$};
            \draw (387.1,109) node [anchor=north west,inner sep=0.75pt,font=\normalsize,align=left] {$(q_{1},n)$};
            \draw (383.3,160.4) node [anchor=north west,inner sep=0.75pt,font=\normalsize,align=left] {$(q_{m},n)$};
            \draw (448.05,155.13) node [anchor=north west,inner sep=0.75pt,font=\footnotesize,align=left] {if $q_{i} \in \{q_{accept},q_{reject}\}$};
            \draw (536.63,136.8) node [anchor=north west,inner sep=0.75pt,font=\normalsize,align=left] {$v_{end}$};
            \draw (321.93,135.47) node [anchor=north west,inner sep=0.75pt,font=\footnotesize,align=left] {Connected part};
        \end{tikzpicture}
        \caption{The game used for \pspace{}-hardness of \cref{theorem:nash-checking-pspace}.}
        \label{fig:reduction-nash-checking-game}
    \end{figure}

    For the \pspaceHard{}ness, we use a reduction from the membership problem for linear bounded deterministic Turing machines (LBTMs), known to be \pspaceComplete{}~\cite{books-lba-pspace-complete-GareyJ79}, to the complement of the NE checking problem. Recall that an LBTM $T$ has a limited memory such that the tape head must remain in the $n$ cells that contain the input word $w$.

    We give only a sketch of proof. First, let us show how we encode any configuration of the LBTM. For the current word written on the tape, we associate one player per cell, and we say that the letter in the $i$-th cell, $i \in \{1, \ldots, n\}$, is the current memory state of the Mealy machine $\aut{M}_i$ of player~$i$. Then we define an arena where each vertex is of the form $(q,i)$, for a state $q$ of $T$ and the current position $i$ of the tape head, and such that player~$i$ owns all the vertices $(q,i)$. Second, we simulate transitions of the LBTM with the Mealy machines: $\aut{M}_i$ can describe the next vertex according to its memory state. For example, from vertex $(q,i)$ and memory state $a$ for player~$i$, $\aut{M}_i$ moves to vertex $(q',i+1)$ and updates its memory state to $a'$ if the LBTM says that from state $q$ and letter $a$, the tape head must write $a'$ and go right, and that the next state is $q'$.

    This construction allows us to completely simulate the LBTM with an arena, described in \cref{fig:reduction-nash-checking-game}. We add an extra player~$n+1$ who decides whether to let the other players follow their Mealy machine to simulate the LBTM on the given word, or go to a sink state $\#$. With his preference relation $\leqRelationStrict[n+1]$, player~$n+1$ prefers a play visiting a vertex $(q_{accept},i)$, for any $i$, to any other play. His Mealy machine goes from $v_{init}$ to $\#$. Thus, by defining $\mathord{\leqRelationStrict[i]} = \emptyset$ for other players, the strategy profile given by all Mealy machines is not an NE if and only if it is profitable for player~$n+1$ to let the other players simulate the LBTM on $w$, i.e., this simulation visits $q_{accept}$.
\end{proof}

Let us now prove \cref{theorem:OutcomeCheck} stating the complexity of the NE outcome checking problem.

\begin{proof}[Proof of \cref{theorem:OutcomeCheck}]
    Let us begin with the membership result. Given a lasso $\pi$ starting at $v_0$, checking whether $\pi$ is an NE outcome amounts to finding a strategy profile $\sigma = (\sigma_i)_{i\in\Players}$ with outcome $\pi$ such that for all $i \in \Players$ and all strategies $\tau_i$, we have $\pi \not\leqRelationStrict[i] \outcomefrom{\tau_i,\sigma_{-i}}{v_0}$. In other words, given $\sigma$ a strategy profile partially defined such that $\pi = \outcomefrom{\sigma}{v_0}$, our goal is to check whether, for all $i$, there exists $\sigma_{-i}$ that extends this partially defined profile such that for all $\tau_i$, $\pi \not\leqRelationStrict[i] \outcomefrom{\tau_i,\sigma_{-i}}{v_0}$. For this purpose, we explain the algorithm in \np{} $\cap$ \conp{} for one given player~$i \in \Players$, and then repeat it for the other players.

    Let us consider $L_i = \{x \in V^\omega \mid \pi \not\leqRelationStrict[i] x\}$. This set is accepted by a \DPW{} $\aut{B}_i$ constructed as the product of the complement of $\aut{A}_i$ and the lasso $\pi$. Clearly, the size of $\aut{B}_i$ is polynomially bounded in the sizes of $\aut{A}_i$ and $\pi$.
    So, $L_i$ contains all the deviations that are not profitable for player~$i$ compared to $\pi$. Now, it suffices to decide whether the coalition~$-i$ has a strategy $\sigma_{-i}$ against player~$i$ to ensure that every play consistent with $\sigma_{-i}$ lies in $L_i$. As $L_i$ is accepted by the \DPW{} $\aut{B}_i$, this amounts to solving a zero-sum parity game $\mathcal{H}_i$ (of polynomial size) defined directly from $\aut{B}_i$. The details are as follows.

    Suppose that $\aut{B}_i$ has a set $Q$ of states, an initial state $q_0$, and a transition function $\delta_{\aut{B}_i} : Q \times V \rightarrow Q$. Let us define the game $\mathcal{H}_i$, where the two players are $A$ and $B$. Its set of vertices is the Cartesian product $V \times Q$, such that player~$A$ (resp.\ player~$B$) controls the vertices $(v,q)$ with $v \in V_{i}$ (resp.\ $v \not\in V_{i}$). In other words, $A$ has the role of player~$i$ while $B$ has the role of the coalition~$-i$. As $\aut{B}_i$ is deterministic, it is seen as an observer, and its states are information added to the vertices of $V$. Hence, the edges of $\mathcal{H}_i$ are of the form $((v,q),(v',q'))$ such that $(v,v') \in E$ and $q' = \delta_{\aut{B}_i}(q,v)$. We define a parity objective for player~$B$ as follows: the priority of each vertex $(v,q)$ of $\mathcal{H}_i$ is equal to the priority of $q$ in $\aut{B}_i$. Consequently, a play in $\mathcal{H}_i$ is won by player~$B$ if and only if the projection on its first component belongs to $L_i$. For the constructed game $\mathcal{H}_i$, from every vertex $(v,q)$, we can decide in \np{} $\cap$ \conp{} which player wins in $\mathcal{H}_i$ together with a memoryless winning strategy for that player~\cite{lncs2500}.

    To obtain an algorithm in \np{}, it remains to check whether $\pi$, seen as a lasso in $\mathcal{H}_i$, only crosses vertices $(v,q)$ that are winning for player~$B$ whenever $v \in V_i$. Indeed, in this case, we can deduce from a winning strategy $\tau_B$ from $(v,q)$ for player~$B$, a strategy $\sigma_{-i}$ for the coalition~$-i$ such that for all $\tau_i$, $\pi \not\leqRelationStrict[i] \outcomefrom{\tau_i,\sigma_{-i}}{v_0}$. Similarly, to obtain an algorithm in \conp{}, we check whether $\pi$ in $\mathcal{H}_i$ crosses at least one vertex $(v,q)$ that is winning for player~$A$ and deduce a winning strategy for player~$i$.

    \medskip

    We continue with the hardness result, with a reduction from the problem of deciding whether player~$1$ has a winning strategy in a zero-sum parity game. We reduce this problem to the complement of the NE outcome checking problem, to establish its $\mathsf{parity}$-hardness. Let $\mathcal{H}$ be a parity game with players~$1$ and $2$, an arena $\arena$ with $V$ as set of vertices, an initial vertex $v_0$, and a priority function $\alpha: V \rightarrow \{0,\ldots,d\}$. We construct a new game $\game = (\arena', \leqRelationStrict[1],\leqRelationStrict[2])$ with the same players, whose arena $\arena'$ is a copy of $\arena$ with an additional vertex $v_0'$ owned by player~$1$, with $v_0$ and itself as successors (see \cref{fig:outcome-checking-parity-reduction}). Given $V' = V \cup \{v'_0\}$, the preference relation $\leqRelationStrict[2]$ is empty, accepted by a one-state \DPW{}, while the preference relation $\leqRelationStrict[1]$ is defined as follows: $x \leqRelationStrict[1] y$ if and only if $x = (v_0')^\omega$ and $y = (v_0')^m y'$, with $m \geq 0$ and $y'$ is a play in $\aut{H}$ starting at $v_0$ and satisfying the parity condition $\alpha$. A \DPW{} $\aut{A}_1$ accepting $\leqRelationStrict[1]$ is depicted in \cref{fig:outcome-checking-parity-reduction}, it is constructed with a copy of the arena $\arena$ and a new state $q_0$ with priority~1.

    The proposed reduction is correct. Indeed, suppose that $\pi = (v'_0)^\omega$ is not an NE outcome. As $\leqRelationStrict[2]$ is empty, there cannot be profitable deviations for player~$2$. This means that for each strategy profile $\sigma = (\sigma_1,\sigma_2)$ with outcome $\pi$, there exists a deviating strategy $\tau_1$ of player~$1$ such that $\pi \leqRelationStrict[1] \rho$ with $\rho = \outcomefrom{\tau_1, \sigma_2}{v'_0}$.
    Thus, by definition of $\leqRelationStrict[1]$, $\rho$ is equal to $(v'_0)^m\rho'$ with $\rho'$ a winning play in $\mathcal{H}$. Hence, transferred to $\aut{H}$, we get that for each strategy $\sigma'_2$ of player~$2$, there exists a strategy $\tau'_1$ of player~$1$ such that $\outcomefrom{\tau'_1, \sigma'_2}{v_0}$ is winning. By determinacy, player~$1$ has thus a winning strategy in $\aut{H}$ from $v_0$.
    The other direction is proved similarly, if player~$1$ has a winning strategy in $\aut{H}$, then transferring this strategy to $\game$ gives a profitable deviation from a strategy with outcome $\pi$.
\end{proof}

\begin{figure}[t]
    \centering
    \begin{minipage}[c]{0.4\textwidth}
        \centering
        \begin{tikzpicture}[x=0.75pt,y=0.75pt,yscale=-1]
            \draw (250.91,120.61) .. controls (250.91,115.98) and (254.46,112.23) .. (258.85,112.23) .. controls (263.24,112.23) and (266.79,115.98) .. (266.79,120.61) .. controls (266.79,125.23) and (263.24,128.98) .. (258.85,128.98) .. controls (254.46,128.98) and (250.91,125.23) .. (250.91,120.61) -- cycle;
            \draw (259.11,99.78) -- (258.91,109.23);
            \draw [shift={(258.85,112.23)},rotate=271.21,fill={rgb,255:red,0;green,0;blue,0},line width=0.08,draw opacity=0] (5.36,-2.57) -- (0,0) -- (5.36,2.57) -- (3.56,0) -- cycle;
            \draw [fill={rgb,255:red,0;green,0;blue,0},fill opacity=0.03,dash pattern={on 4.5pt off 4.5pt}] (320.66,48.95) .. controls (320.66,44.23) and (324.48,40.41) .. (329.19,40.41) -- (363.25,40.41) .. controls (367.97,40.41) and (371.79,44.23) .. (371.79,48.95) -- (371.79,78.63) .. controls (371.79,83.35) and (367.97,87.17) .. (363.25,87.17) -- (329.19,87.17) .. controls (324.48,87.17) and (320.66,83.35) .. (320.66,78.63) -- cycle;
            \draw (290.41,59.61) .. controls (290.41,55.22) and (293.96,51.67) .. (298.34,51.67) .. controls (302.72,51.67) and (306.27,55.22) .. (306.27,59.61) .. controls (306.27,63.99) and (302.72,67.54) .. (298.34,67.54) .. controls (293.96,67.54) and (290.41,63.99) .. (290.41,59.61) -- cycle;
            \draw (306.27,59.61) -- (329.2,59.69);
            \draw [shift={(332.2,59.7)},rotate=180.21,fill={rgb,255:red,0;green,0;blue,0},line width=0.08,draw opacity=0] (5.36,-2.57) -- (0,0) -- (5.36,2.57) -- (3.56,0) -- cycle;
            \draw (332.41,59.61) .. controls (332.41,55.22) and (335.96,51.67) .. (340.34,51.67) .. controls (344.72,51.67) and (348.27,55.22) .. (348.27,59.61) .. controls (348.27,63.99) and (344.72,67.54) .. (340.34,67.54) .. controls (335.96,67.54) and (332.41,63.99) .. (332.41,59.61) -- cycle;
            \draw (254.67,127.84) .. controls (250.15,146.02) and (265.81,146.57) .. (262.76,130.46);
            \draw [shift={(262.11,127.78)},rotate=73.97,fill={rgb,255:red,0;green,0;blue,0},line width=0.08,draw opacity=0] (5.36,-2.57) -- (0,0) -- (5.36,2.57) -- (3.56,0) -- cycle;
            \draw (266.79,120.61) -- (305.15,120.53);
            \draw [shift={(308.15,120.53)},rotate=179.89,fill={rgb,255:red,0;green,0;blue,0},line width=0.08,draw opacity=0] (5.36,-2.57) -- (0,0) -- (5.36,2.57) -- (3.56,0) -- cycle;
            \draw (294.97,51.94) .. controls (292.7,34.93) and (304,33.22) .. (301.88,48.84);
            \draw [shift={(301.37,51.74)},rotate=282.09,fill={rgb,255:red,0;green,0;blue,0},line width=0.08,draw opacity=0] (5.36,-2.57) -- (0,0) -- (5.36,2.57) -- (3.56,0) -- cycle;
            \draw [fill={rgb,255:red,0;green,0;blue,0},fill opacity=0.03,dash pattern={on 4.5pt off 4.5pt}] (300.36,116.54) .. controls (300.36,111.22) and (304.67,106.91) .. (309.99,106.91) -- (371.83,106.91) .. controls (377.15,106.91) and (381.46,111.22) .. (381.46,116.54) -- (381.46,150.05) .. controls (381.46,155.37) and (377.15,159.68) .. (371.83,159.68) -- (309.99,159.68) .. controls (304.67,159.68) and (300.36,155.37) .. (300.36,150.05) -- cycle;
            \draw (307.57,120.91) .. controls (307.57,116.52) and (311.13,112.97) .. (315.51,112.97) .. controls (319.89,112.97) and (323.44,116.52) .. (323.44,120.91) .. controls (323.44,125.29) and (319.89,128.84) .. (315.51,128.84) .. controls (311.13,128.84) and (307.57,125.29) .. (307.57,120.91) -- cycle;
            \draw (361.91,123.76) .. controls (361.91,119.38) and (365.46,115.83) .. (369.84,115.83) .. controls (374.22,115.83) and (377.77,119.38) .. (377.77,123.76) .. controls (377.77,128.14) and (374.22,131.69) .. (369.84,131.69) .. controls (365.46,131.69) and (361.91,128.14) .. (361.91,123.76) -- cycle;
            \draw (326.15,129.61) .. controls (331.38,123.26) and (345.15,123.67) .. (359.02,123.75);
            \draw [shift={(361.91,123.76)},rotate=180,fill={rgb,255:red,0;green,0;blue,0},line width=0.08,draw opacity=0] (5.36,-2.57) -- (0,0) -- (5.36,2.57) -- (3.56,0) -- cycle;
            \draw (290.8,51.33) node [anchor=north west,inner sep=0.75pt,font=\small,align=left] {$v'_{0}$};
            \draw (247.39,77.97) node [anchor=north west,inner sep=0.75pt,align=left] {$\aut{A}_1$};
            \draw (353.76,68.47) node [anchor=north west,inner sep=0.75pt,align=left] {$\aut{H}$};
            \draw (252.1,116.23) node [anchor=north west,inner sep=0.75pt,font=\small,align=left] {$q_{0}$};
            \draw (333.2,55.93) node [anchor=north west,inner sep=0.75pt,font=\small,align=left] {$v_{0}$};
            \draw (255.69,39.07) node [anchor=north west,inner sep=0.75pt,align=left] {$\game$};
            \draw (267.43,105.4) node [anchor=north west,inner sep=0.75pt,font=\small]  {$v'_{0},v_{0}$};
            \draw (241.83,137.6) node [anchor=north west,inner sep=0.75pt,font=\small,align=left] {$v'_{0},v'_{0}$};
            \draw (303.06,143.77) node [anchor=north west,inner sep=0.75pt,align=left] {$\aut{H}$};
            \draw (308.36,117.23) node [anchor=north west,inner sep=0.75pt,font=\small,align=left] {$v_{0}$};
            \draw (329.47,124.4) node [anchor=north west,inner sep=0.75pt,font=\small,align=left] {$v'_{0},v$};
            \draw (365.3,120.88) node [anchor=north west,inner sep=0.75pt,font=\small,align=left] {$v$};
        \end{tikzpicture}
        \caption{The game $\game$ and the \DPW{} $\aut{A}_1$ accepting $\leqRelationStrict[1]$ for the reduction of \cref{theorem:OutcomeCheck}.}
        \label{fig:outcome-checking-parity-reduction}
    \end{minipage}
    \hfill
    \begin{minipage}[c]{0.58\textwidth}
        \centering
        \begin{tikzpicture}[x=0.75pt,y=0.75pt,yscale=-1,scale=.85,every node/.style={scale=.85}]
            \draw (255.77,25.02) .. controls (255.77,23.52) and (256.98,22.3) .. (258.49,22.3) -- (266.65,22.3) .. controls (268.15,22.3) and (269.37,23.52) .. (269.37,25.02) -- (269.37,34.38) .. controls (269.37,35.88) and (268.15,37.1) .. (266.65,37.1) -- (258.49,37.1) .. controls (256.98,37.1) and (255.77,35.88) .. (255.77,34.38) -- cycle;
            \draw (262.5,38.3) -- (262.5,50.1);
            \draw [shift={(262.5,53.1)}, rotate=270,fill={rgb,255:red,0;green,0;blue,0},line width=0.08,draw opacity=0] (3.57,-1.72) -- (0,0) -- (3.57,1.72) -- cycle;
            \draw (233.12,57.69) .. controls (233.12,55.74) and (234.7,54.17) .. (236.64,54.17) -- (290.4,54.17) .. controls (292.34,54.17) and (293.92,55.74) .. (293.92,57.69) -- (293.92,68.25) .. controls (293.92,70.19) and (292.34,71.77) .. (290.4,71.77) -- (236.64,71.77) .. controls (234.7,71.77) and (233.12,70.19) .. (233.12,68.25) -- cycle;
            \draw (262.5,71.9) -- (224.16,98.59);
            \draw [shift={(221.7,100.3)}, rotate=325.16,fill={rgb,255:red,0;green,0;blue,0},line width=0.08,draw opacity=0] (3.57,-1.72) -- (0,0) -- (3.57,1.72) -- cycle;
            \draw (262.5,71.9) -- (300.25,98.57);
            \draw [shift={(302.7,100.3)}, rotate=215.24,fill={rgb,255:red,0;green,0;blue,0},line width=0.08,draw opacity=0] (3.57,-1.72) -- (0,0) -- (3.57,1.72) -- cycle;
            \draw (280.27,104.31) .. controls (280.27,102.48) and (281.75,101) .. (283.58,101) -- (321.29,101) .. controls (323.12,101) and (324.6,102.48) .. (324.6,104.31) -- (324.6,114.25) .. controls (324.6,116.08) and (323.12,117.57) .. (321.29,117.57) -- (283.58,117.57) .. controls (281.75,117.57) and (280.27,116.08) .. (280.27,114.25) -- cycle;
            \draw (297.4,100.7) -- (297.4,117.9);
            \draw (220.1,116.3) -- (220.1,138.1);
            \draw [shift={(220.1,141.1)}, rotate=270,fill={rgb,255:red,0;green,0;blue,0},line width=0.08,draw opacity=0] (3.57,-1.72) -- (0,0) -- (3.57,1.72) -- cycle;
            \draw (302.5,117.3) -- (302.5,135.1);
            \draw [shift={(302.5,138.1)}, rotate=270,fill={rgb,255:red,0;green,0;blue,0},line width=0.08,draw opacity=0] (3.57,-1.72) -- (0,0) -- (3.57,1.72) -- cycle;
            \draw (253.43,141.98) .. controls (253.43,140) and (255.04,138.4) .. (257.01,138.4) -- (347.35,138.4) .. controls (349.33,138.4) and (350.93,140) .. (350.93,141.98) -- (350.93,152.72) .. controls (350.93,154.7) and (349.33,156.3) .. (347.35,156.3) -- (257.01,156.3) .. controls (255.04,156.3) and (253.43,154.7) .. (253.43,152.72) -- cycle;
            \draw (321.99,138.3) -- (321.99,156.5);
            \draw (302.5,156.9) -- (302.5,186.7);
            \draw [shift={(302.5,189.7)}, rotate=270,fill={rgb,255:red,0;green,0;blue,0},line width=0.08,draw opacity=0] (3.57,-1.72) -- (0,0) -- (3.57,1.72) -- cycle;
            \draw (302.4,206.5) -- (302.4,215.3);
            \draw [shift={(302.4,218.3)}, rotate=270,fill={rgb,255:red,0;green,0;blue,0},line width=0.08,draw opacity=0] (3.57,-1.72) -- (0,0) -- (3.57,1.72) -- cycle;
            \draw (213.77,103.82) .. controls (213.77,102.32) and (214.98,101.1) .. (216.49,101.1) -- (224.65,101.1) .. controls (226.15,101.1) and (227.37,102.32) .. (227.37,103.82) -- (227.37,113.18) .. controls (227.37,114.68) and (226.15,115.9) .. (224.65,115.9) -- (216.49,115.9) .. controls (214.98,115.9) and (213.77,114.68) .. (213.77,113.18) -- cycle;
            \draw (277.27,193.11) .. controls (277.27,191.28) and (278.75,189.8) .. (280.58,189.8) -- (324.59,189.8) .. controls (326.42,189.8) and (327.9,191.28) .. (327.9,193.11) -- (327.9,203.05) .. controls (327.9,204.88) and (326.42,206.37) .. (324.59,206.37) -- (280.58,206.37) .. controls (278.75,206.37) and (277.27,204.88) .. (277.27,203.05) -- cycle;
            \draw (297.4,189.5) -- (297.4,206.7);
            \draw (257.47,25.2) node [anchor=north west,inner sep=0.75pt,align=left] {$v$};
            \draw (191.67,39) node [anchor=north west,inner sep=0.75pt,font=\small,align=left] {\proverone{}: $v\rightarrow v'$?};
            \draw (177.67,76) node [anchor=north west,inner sep=0.75pt,font=\small,align=left] {\challenger{} accepts};
            \draw (235.67,55.6) node [anchor=north west,inner sep=0.75pt,font=\normalsize,align=left] {$v,(v,v')$?};
            \draw (301.2,69.1) node [anchor=north west,inner sep=0.75pt,font=\small,align=left] {{\small \challenger{} deviates with $v\rightarrow u$}\\{\footnotesize (where $v \in V_{j}$)}};
            \draw (282.47,103) node [anchor=north west,inner sep=0.75pt,align=left] {$v'$};
            \draw (299.27,104) node [anchor=north west,inner sep=0.75pt,align=left] {$u,j$};
            \draw (142.27,118) node [anchor=north west,inner sep=0.75pt,font=\small,align=left] {\proverone{}: $v'\rightarrow v''$?};
            \draw (213.53,147.2) node [anchor=north west,inner sep=0.75pt,align=left] {...};
            \draw (311.87,120.6) node [anchor=north west,inner sep=0.75pt,font=\small,align=left] {\proverone{}: $v'\rightarrow v''$?};
            \draw (254.88,139.8) node [anchor=north west,inner sep=0.75pt,font=\normalsize,align=left] {$v',(v',v'')$?};
            \draw (325.23,141.4) node [anchor=north west,inner sep=0.75pt,align=left] {$u,j$};
            \draw (312.2,157.3) node [anchor=north west,inner sep=0.75pt,font=\small,align=left] {{\small \challenger{}: $u\rightarrow u'$}{\footnotesize  (if $u\in V_{j}$)}\\{\small \provertwo{}: $u\rightarrow u'$}{\footnotesize  (if $u'\notin V_{j}$)}};
            \draw (295.47,221.53) node [anchor=north west,inner sep=0.75pt,align=left] {...};
            \draw (213.47,102) node [anchor=north west,inner sep=0.75pt,align=left] {$v'$};
            \draw (279.47,191.8) node [anchor=north west,inner sep=0.75pt,align=left] {$v''$};
            \draw (299.27,190.8) node [anchor=north west,inner sep=0.75pt,align=left] {$u',j$};
        \end{tikzpicture}
        \caption{An illustration of the \pcp{} game intuition: \proverone{} observes the left part of a vertex (all $v,v',v''$), while \challenger{} and \provertwo{} represent the deviating player~$j$ and the coalition~$-j$ in the right part (all $u,u'$). Given $\rho=vv'v''\dots$ and $\rho'=vuu'\dots$, \proverone{} and \provertwo{} aim to ensure $\rho \not\leqRelationStrict[j] \rho'$.}
        \label{fig:intuition-pcp-game}
    \end{minipage}
\end{figure}

\section{NE Existence and Constrained NE Existence Problems}
\label{section:NE-existence}

This section is devoted to the NE existence problem and its constrained variant. We mainly focus on the NE existence problem and explain at the end of the section how to take into account the constraints imposed on the NE outcome.

To solve the NE existence problem, we adapt a recent approach proposed in~\cite{SPE-NCRS-quanti-parity-P1CP2}. The idea is to reduce our problem to solving a \emph{three-player game with imperfect information}.\footnote{Although the underlying game structure of \cite{SPE-NCRS-quanti-parity-P1CP2} is reused, the player roles and correctness arguments differ entirely.} Let us first give some intuition (see also \cref{fig:intuition-pcp-game}) and then the formal definition. We use a reduction to a game with three players: two \emph{Provers} \proverone{} and \provertwo{} and one \emph{Challenger} \challenger. The two Provers aim to build an NE outcome $\rho$ while Challenger contests that it is an NE outcome: \proverone{} has the task of building $\rho$ edge by edge, while \provertwo{} has the task of showing that the deviation $\rho'$ of player~$i$ proposed by \challenger{} is not profitable, i.e., $\rho \not\leqRelationStrict[i] \rho'$.
We need two Provers (we cannot use a two-player zero-sum game), as the construction of $\rho$ cannot depend on one specific deviation and must be fixed, i.e., its construction cannot change according to the deviation $\rho'$ to artificially force $\rho \not\leqRelationStrict[i] \rho'$. This also means that $\proverone$ has to build $\rho$ without knowing when \challenger{} deviates: he has \emph{partial observation} of the game, while \challenger{} and \provertwo{} have perfect information.
This game, called \emph{\pcp{} game}, is articulated in two parts. The first part consists of vertices where \challenger{} does not deviate, where an action of \proverone{} is to suggest an edge $(v,v')$ to extend the current construction of $\rho$, and an action of \challenger{} is either to accept it or to deviate from $\rho$ by choosing another edge $(v,u)$ with $u \neq v'$. Such a deviation corresponds to a deviation by the player~$j$ who owns $v$, leading to the second part of the game. In this part, the vertices must retain the construction of the play $\rho$, the construction of the deviation $\rho'$, and the component $j$ to identify the player who deviated: \proverone{} continues to propose an extension $(v,v')$ for $\rho$ with no interaction with \challenger{}, and \challenger{} and \provertwo{}, representing respectively the deviating player~$j$ and the opposed coalition~$-j$, interact to construct $\rho'$. When the game stays in the first part, the objective of \proverone{} is to produce an NE outcome $\rho$, and if it goes in the second part, \proverone{} has the same goal and the aim of \provertwo{} is to retaliate on the deviations proposed by $\challenger{}$ to guarantee that $\rho'$ is not a profitable deviation. Hence, the vertices of the \pcp{} game also store the current states of the \DPWs{} accepting the preference relations, in a way to compare the outcome $\rho$ with the deviation $\rho'$.

We now proceed to the formal definition of the \pcp{} game. Suppose that we are given a game $\game = (\arena, (\leqRelationStrict[i])_{i \in \Players})$ with $\arena = (V,E,\Players,(V_i)_{i\in\Players})$ and $v_0 \in V$ as the initial vertex, and each relation $\leqRelationStrict[i]$ accepted by a \DPW{} $\aut{A}_i$. We denote each automaton as $\aut{A}_i = (Q_i,q_i^0,V \times V, \delta_i,\alpha_i)$ with $Q_i$ its set of states, $q_i^0$ its initial state, $V \times V$ its alphabet, $\delta_i : Q_i \times (V \times V) \rightarrow Q_i$ its transition function, and $\alpha_i : Q_i \rightarrow \{0,1,\ldots,d_i\}$ its priority function. The game
\[
\pcp(\game) = (S,(S_{\proveroneTight},S_{\challenger},S_{\provertwoTight}),(A_{\proveroneTight},A_{\challenger},A_{\provertwoTight}),
\Delta, Obs, W_{\proverone\provertwoTight})
\]
is a three-player game with partial observation for \proverone, defined as follows.
\begin{itemize}
    \item The set $S$ of vertices are of the form $(v,j,u,(q_i)_{i \in \Players})$ or $(v,j,u,(q_i)_{i \in \Players},(v,v'))$
    such that $v,u \in V$, $j \in \Players \cup \{\bot\}$, $q_i \in Q_i$, and $(v,v') \in E$. Coming back to the intuition given above, $v$ is the current vertex of $\rho$, $j$ is the deviating player (or $\bot$ if \challenger{} did not deviate yet), $u$ is the current vertex of $\rho'$ (if it exists, otherwise $u = v$), $q_i$ is the current state of $\aut{A}_i$ while comparing $\rho$ and $\rho'$.

    Given that we are looking for an NE in $\game$ from some initial vertex $v_0$, we consider the initial vertex $s_0 = (v_0,\bot,v_0,(q_i^0)_i)$ in the \pcp{} game.

    \item The set $S$ is partitioned as $S_{\proveroneTight} \cup S_{\challenger} \cup S_{\provertwoTight}$ such that $S_{\proveroneTight}$ is composed of the vertices $(v,j,u,(q_i)_{i})$, $S_{\challenger}$ is composed of the vertices $(v,j,u,(q_i)_{i},(v,v'))$ such that either $j = \bot$ and $v=u$, or $j \neq \bot$ and $u \in V_j$, and $S_{\provertwoTight}$ is composed of the vertices $(v,j,u,(q_i)_{i},(v,v'))$ such that $j \neq \bot$ and $u \in V \ssetminus V_j$.
    \item The set of actions\footnote{We introduce actions in a way to easily define the transition function $\Delta$.} is, respectively for each player, equal to:
 $A_{\proveroneTight} = \{(v,v') \mid (v,v') \in E\}$ (\proverone{} chooses an edge $(v,v')$ to extend the current construction of $\rho$) and $A_{\challenger} = A_{\provertwoTight} = V$ (\challenger{} and \provertwo{} choose the next vertex $u$ of $\rho'$ in case \challenger{} deviates, otherwise \challenger{} accepts the vertex $v'$ of the edge $(v,v')$ proposed by \proverone{}.)
    \item The transition function is defined as follows:
    \begin{itemize}
        \item for \proverone{}: for each $s = (v,j,u,(q_i)_{i}) \in S_{\proveroneTight}$ and each $(v,v') \in A_{\proveroneTight}$, we have $\Delta(s,(v,v')) = (v,j,u,(q_i)_{i},(v,v'))$,
        \item for \challenger{} who has not yet deviated: for each $s = (v,\bot,v,(q_i)_{i},(v,v')) \in S_{\challenger}$ and each $u \in A_{\challenger}$ with $(v,u) \in E$, we have
        either $u = v'$ and $\Delta(s,u) = (v',\bot,v',(q'_i)_{i})$ (which means that \challenger{} accepts the edge proposed by \proverone{}),
        or $u \neq v'$, $v \in V_j$, and $\Delta(s,u) = (v',j,u,(q'_i)_{i})$ (which means that \challenger{} starts deviating),
        with $q'_i = \delta(q_i,(v,v))$, for each $i \in \Players$, in both cases, i.e., the states of the \DPWs{} are updated.
        \item for \challenger{} who has deviated and \provertwo{}: for each $s = (v,j,u,(q_i)_{i},(v,v'))$ and each $u'$ with $(u,u') \in E$, we have either $u \in V_j$ and thus $s \in S_{\challenger}$, or $u \in V \ssetminus V_j$ and thus $s \in S_{\provertwoTight}$,
        and in both cases, $\Delta(s,u') = (v',j,u',(q'_i)_{i})$ with $q'_i = \delta(q_i,(v,u))$, for each $i \in \Players$.
    \end{itemize}
    \item The observation function $Obs$ for \proverone{}\footnote{Recall that \challenger{} and \provertwo{} have total observation of the \pcp{} game.} is such that $Obs((v,j,u,(q_i)_{i},(v,v'))) = (v,v')$ and
    $Obs((v,j,u,(q_i)_{i})) = v$.
    When $s, s' \in S$ and $Obs(s) = Obs(s')$, we consider that \proverone{} cannot distinguish $s$ and $s'$. Hence, \proverone{} can only observe the vertices $v$ of the initial game $\game$ and the edges $(v,v')$ that he proposes. We naturally extend $Obs$ to histories and plays of the \pcp{} game by applying the observation function on each of their vertices.
    \item To complete the definition of the \pcp{} game, it remains to define the winning condition $W_{\proverone\provertwoTight}$. Let us introduce some notation. Given a vertex $s$, we denote by $\projOne{s}$ (resp.\ $\projDev{s}$, $\projTwo{s}$) the projection on its first (resp.\ second, third) component. For a vertex $s \in S_{\challenger} \cup S_{\provertwoTight}$, we denote by $\projE{s}$ this last component of $s$. Note that if $s \in S_{\proveroneTight}$, then $Obs(s) = \projOne{s}$, and if $s \in S_{\challenger} \cup S_{\provertwoTight}$, then $Obs(s) = \projE{s}$. Given a play $\pi = \pi_0\pi_1\pi_2 \ldots $ of the \pcp{} game starting at the initial vertex $s_0$, $\pi$ is an alternation of vertices of $S_{\proveroneTight}$ and vertices of $S_{\challenger} \cup S_{\provertwoTight}$. Moreover, looking at the first (resp.\ third) components of the vertices of $\pi$, each such component is repeated from one vertex to the next one. Thus, we denote by $\projOne{\pi}$ the projection on the first component of the vertices of $\pi_0\pi_2 \ldots \pi_{2k} \ldots$. Similarly, we use notation $\projTwo{\pi}$ for the projection on the third component. We also define the notation $\projE{\pi}$ for the projection of $\pi_1\pi_3 \ldots \pi_{2k+1} \ldots$ on the last component of its vertices. Note that $\projOne{\pi} = Obs(\pi_0\pi_2 \ldots \pi_{2k} \ldots)$ and $\projE{\pi} = Obs(\pi_1\pi_3 \ldots \pi_{2k+1} \ldots)$. In the play $\pi$, either the second component always remains equal to $\bot$ or ultimately becomes equal to some $j \in \Players$. We use notation $\projDev{\pi}$ to denote this value $\bot$ or $j$. All these notations are also used for histories.

    The set $W_{\proverone\provertwoTight}$ is defined as $W_{\proverone\provertwoTight} = W_{acc} \cup W_{dev}$ where $W_{acc}$ is the set of plays where \challenger{} always agreed with \proverone{} and $W_{dev}$ is the set of plays where \challenger{} deviated but \provertwo{} was able to show that this deviation is not profitable, i.e.,
    \begin{itemize}
        \item $W_{acc} = \{\pi \in \Plays(\pcp(\game)) \mid \projDev{\pi} = \bot \}$,
        \item $W_{dev} = \{\pi \in \Plays(\pcp(\game)) \mid \exists j \in \Players, \projDev{\pi} = j \text{ and } \projOne{\pi} \not\leqRelationStrict[j] \projTwo{\pi} \}$.
    \end{itemize}
    This set $W_{\proverone\provertwoTight}$ is the winning condition for both \proverone{} and \provertwo{} while \challenger{} has the complementary winning condition $S^\omega \ssetminus W_{\proverone\provertwoTight}$.

\end{itemize}

The next theorem states how the \pcp{} game helps to solve the NE existence problem. A strategy $\tau_{\proveroneTight}$ of \proverone{} is \emph{observation-based} if for all histories $h, h'$ ending in a vertex of \proverone{} such that $Obs(h) = Obs(h')$, we have $\tau_{\proveroneTight}(h) = \tau_{\proveroneTight}(h')$.

\begin{theorem}[restate=correspondence,name=]
\label{theorem:correspondence}
    The following statements are equivalent:
    \begin{itemize}
        \item In $\game$, there exists an NE $\sigma = (\sigma_i)_{i \in \Players}$ from $v_0$,
        \item In $\pcp(\game)$, there exists an observation-based strategy $\tau_{\proveroneTight}$ of \proverone{} such that for all strategies $\tau_{\challenger}$ of \challenger{}, there is a strategy $\tau_{\provertwoTight}$ of \provertwo{} such that $\outcomefrom{\tau_{\proveroneTight},\tau_{\challenger},\tau_{\provertwoTight}}{s_0} \in W_{\proverone\provertwoTight}$.
    \end{itemize}
\end{theorem}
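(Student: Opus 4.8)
The plan is to establish the two implications separately, each time centred on the play $\rho$ that a given observation-based strategy of \proverone{} forces on the first component of $\pcp(\game)$-plays. So the first step is to record the following structural fact: at its $k$-th turn, the observation available to \proverone{} is determined by the prefix $\rho_0\cdots\rho_k$ of the first components of the \proverone{}-vertices visited so far, together with the edges it has proposed, and those edges are themselves a function of that prefix; hence an observation-based strategy $\tau_{\proveroneTight}$ fixes a single play $\rho = \rho_0\rho_1\cdots$ of $\game$ from $v_0$ with $\projOne{\pi} = \rho$ for \emph{every} play $\pi$ consistent with $\tau_{\proveroneTight}$, whatever \challenger{} and \provertwo{} do; conversely, every play $\rho$ of $\game$ from $v_0$ is obtained in this way from the observation-based strategy ``at the $k$-th turn propose $(\rho_k,\rho_{k+1})$''. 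On both sides of the equivalence this $\rho$ is the NE outcome.

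For the direction from an NE to \proverone{}, let $\sigma$ be an NE from $v_0$, put $\rho = \outcomefrom{\sigma}{v_0}$, and take $\tau_{\proveroneTight}$ to be the strategy building $\rho$. Given any $\tau_{\challenger}$, I define $\tau_{\provertwoTight}$ so that at each \provertwo{}-vertex (where some player $j = \projDev{}$ has already deviated and the current $\projTwo{}$-vertex $u$ is owned by some $i\neq j$) it plays the move $\sigma_i$ prescribes on the $\projTwo{}$-history seen so far. In the resulting play $\pi$: if \challenger{} never deviates then $\projDev{\pi} = \bot$ and $\pi\in W_{acc}$; otherwise $\projDev{\pi}$ becomes the owner $j$ of the deviation vertex, the play $\rho' := \projTwo{\pi}$ follows $\rho$ up to that vertex and is thereafter consistent with $\sigma_{-j}$ (its $V_j$-moves being \challenger{}'s), hence $\rho' = \outcomefrom{\tau_j,\sigma_{-j}}{v_0}$ for the deviation $\tau_j$ recording \challenger{}'s choices, so, as $\sigma$ is an NE, $\projOne{\pi} = \rho \not\leqRelationStrict[j] \rho' = \projTwo{\pi}$, i.e.\ $\pi\in W_{dev}$. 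Either way $\pi\in W_{\proverone\provertwoTight}$.

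For the converse, given such a $\tau_{\proveroneTight}$ with associated play $\rho$, I substitute $\tau_{\proveroneTight}$ into $\pcp(\game)$, obtaining a two-player zero-sum game between \challenger{} (objective $S^\omega \ssetminus W_{\proverone\provertwoTight}$) and \provertwo{} (objective $W_{\proverone\provertwoTight}$); its winning condition is $\omega$-regular -- read off the $\projDev{}$-component and the priorities $\alpha_j$ carried by the $q_j$-components -- hence Borel, so the game is determined (Martin). The hypothesis says exactly that \challenger{} has no winning strategy there, so \provertwo{} has a winning strategy $\tau_{\provertwoTight}^{*}$. I then define a profile $\sigma$ in $\game$: along $\rho$ each $\sigma_i$ follows $\rho$; as soon as a history leaves $\rho$ -- necessarily at a unique vertex $\rho_k$, owned by a unique player $j$, where some $u\neq\rho_{k+1}$ is chosen -- the members of the coalition $-j$ play, from then on, the moves of $\tau_{\provertwoTight}^{*}$ along the $\pcp(\game)$-history that $\tau_{\proveroneTight}$ and the $\game$-history jointly determine ($\sigma_j$ off $\rho$ being arbitrary). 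Then $\outcomefrom{\sigma}{v_0} = \rho$, and for any player $j$ and deviation $\tau_j$ the outcome $\rho' := \outcomefrom{\tau_j,\sigma_{-j}}{v_0}$ corresponds to a play $\pi$ of $\pcp(\game)$ with $\projOne{\pi} = \rho$, $\projTwo{\pi} = \rho'$, consistent with $\tau_{\proveroneTight}$, with $\tau_{\provertwoTight}^{*}$, and with the \challenger{}-strategy that accepts along $\rho$, deviates at the right place, and then reproduces $\tau_j$ on $V_j$-vertices; hence $\pi\in W_{\proverone\provertwoTight}$, and since the deviation makes $\projDev{\pi} = j$ (so $\pi\notin W_{acc}$) we get $\rho = \outcomefrom{\sigma}{v_0} \not\leqRelationStrict[j] \rho'$. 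Thus $\sigma$ is an NE from $v_0$.

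I expect the converse to be the delicate part: one must first secure the observation that $\tau_{\proveroneTight}$ pins $\rho$ down independently of \challenger{}'s deviation (this is what makes the punishment the coalition inherits from $\tau_{\provertwoTight}^{*}$ well-posed), then set up the correspondence between $\game$-histories that have left $\rho$ and the $\pcp(\game)$-histories carefully enough that each deviation of a player $j$ maps to a $W_{dev}$-play, and finally invoke determinacy precisely to turn the ``$\forall\,\tau_{\challenger}\,\exists\,\tau_{\provertwoTight}$'' hypothesis into a single uniform winning strategy $\tau_{\provertwoTight}^{*}$.
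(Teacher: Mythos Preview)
Your proposal is correct, and the forward direction coincides with the paper's argument. The backward direction, however, takes a genuinely different route. The paper does not invoke determinacy: given $\tau_{\proveroneTight}$, it considers each deviating strategy $\sigma'_j$ separately, takes the corresponding $\tau_{\challenger}$, obtains from the hypothesis \emph{some} $\tau_{\provertwoTight}$, translates that into a punishment $\sigma_{-j}$, and then declares ``the constructed $\sigma$'' to be an NE. This leaves the quantifier swap --- from ``for every deviation there is a punishment'' to ``there is one profile $\sigma$ that punishes every deviation'' --- implicit; as written, the completion of $\sigma$ outside $\rho$ depends on which $\sigma'_j$ was fixed. You instead substitute $\tau_{\proveroneTight}$ into $\pcp(\game)$, observe that the residual two-player game between \challenger{} and \provertwo{} has an $\omega$-regular (hence Borel) objective, and use determinacy to extract a single uniform winning strategy $\tau_{\provertwoTight}^{*}$; all coalition punishments are then read off this one strategy, so the profile $\sigma$ is well-defined from the outset and works simultaneously against every deviation. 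Your route is therefore more explicit on this point, at the modest cost of appealing to determinacy (only $\omega$-regular determinacy is actually needed, not full Martin). The structural fact you isolate first --- that an observation-based $\tau_{\proveroneTight}$ pins down a unique play $\rho$ on the first component whatever \challenger{} and \provertwo{} do --- is precisely what the paper records as a preliminary lemma before the proof.
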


\cref{theorem:correspondence} is the key tool to solve the NE existence problem. It is proved in detail in \cref{app:PCPgame}. We give hereafter a sketch of proof for the membership result of \cref{theorem:NEexistenceGeneral}, which follows the approach proposed in~\cite{SPE-NCRS-quanti-parity-P1CP2}. The \pspaceHard{}ness already holds for one-player games, with a reduction from the existence of a maximal element in a relation $\leqRelationStrict$, which is a \pspaceComplete{} problem and close to the existence of NEs in one-player games. All details are given in \cref{app:nash-existence-tools}.

\begin{proof}[Sketch of Proof of \cref{theorem:NEexistenceGeneral}, Membership]
    By \cref{theorem:correspondence}, deciding whether there exists an NE from $v_0$ in $\game$ reduces to deciding whether there exists an observation-based strategy $\tau_{\proveroneTight}$ of \proverone{} in $\pcp(\game)$ such that for all strategies $\tau_{\challenger}$ of \challenger{}, there is a strategy $\tau_{\provertwoTight}$ of \provertwo{} such that $\outcomefrom{\tau_{\proveroneTight},\tau_{\challenger},\tau_{\provertwoTight}}{s_0} \in W_{\proverone\provertwoTight}$. In~\cite{SPE-NCRS-quanti-parity-P1CP2}, the authors solve the problem they study by solving a similar three-player game with imperfect information. They proceed as follows: \emph{(i)} the winning condition is translated into a Rabin condition\footnote{Recall that a Rabin condition uses a finite set of pairs $(E_j,F_j)_{j \in J}$ in a way to accept plays $\pi$ such that there exists $j \in J$ with $\infOcc{\pi} \cap E_j = \varnothing$ and $\infOcc{\pi} \cap F_j \neq \varnothing$.} on the arena of the \pcp{} game, \emph{(ii)} the three-player game is transformed into a two-player zero-sum Rabin game with imperfect information, and finally \emph{(iii)} classical techniques to remove imperfect information are used to obtain a two-player zero-sum parity game with perfect information.

    In this sketch of proof, we only explain the first step, i.e., how to translate $W_{\proverone\provertwoTight} = W_{acc} \cup W_{dev}$ into a Rabin condition, as the second and third steps heavily use the arguments of~\cite{SPE-NCRS-quanti-parity-P1CP2}.
    To translate $W_{acc}$, we use one pair $(E_1,F_1)$ such that $E_1 = \varnothing$ and $F_1 = \{s \in S \mid \projDev{s} = \bot \}$. To translate $W_{dev}$, notice that $\projDev{\pi} = j$ is equivalent to $\projDev{\pi} \not\in \{\bot\} \cup \Players\ssetminus\{j\}$, and thus $W_{dev} = \cup_{j \in \Players} \{\pi \in \Plays(\pcp(\game)) \mid \projDev{\pi} \not\in \{\bot\} \cup \Players\ssetminus\{j\} \text{ and } \projOne{\pi} \not\leqRelationStrict[j] \projTwo{\pi} \}$. Recall that each relation $\leqRelationStrict[j]$ is accepted by the \DPW{} $\aut{A}_j$ with the priority function $\alpha_j : Q_j \rightarrow \{0,1, \ldots d_j\}$, thus also $\not\leqRelationStrict[j]$ with the modified priority function $\alpha_j+1$. Therefore, $W_{dev}$ can be translated into a Rabin condition on the vertices of $S$ with $\Sigma_{j \in \Players} d_j$ Rabin pairs~\cite{handbook-of-model-checking-orna-kupferman}. Steps \emph{(ii)} and \emph{(iii)} are detailed in \cref{app:nash-existence-tools}, leading to the announced complexity: the NE existence problem is exponential in $|V|$, $\Pi_{i\in \Players}|\aut{A}_i|$, and $\Sigma_{_{i\in \Players}}d_i$.
\end{proof}

Let us finally comment on \cref{theorem:constrainedNEexistenceGeneral} stating the complexity class of the constrained NE existence problem. The detailed proof is presented in \cref{app:constrained-nash-existence}. The approach to proving membership is very similar to that of the NE existence problem, as we only need to modify $W_{acc}$ in a way to include the constraints imposed on the NE outcome. A constraint imposed by a lasso $\pi_i$ can be represented by a \DPW{} $\aut{A}_i'$ accepting the language $\{\rho \in V^\omega \mid \pi_i \leqRelationStrict[i] \rho\}$, with a polynomial size $|\aut{A}_i|\cdot |\pi_i|$. Then it suffices to extend the arena of the \pcp{} game with the states of each $\aut{A}_i'$. The hardness result is obtained by a reduction from the NE existence problem, already for one-player games.

Note that by steps \emph{(i)}-\emph{(iii)}, solving the (constrained) NE existence problem is equivalent to solving a zero-sum parity game with memoryless winning strategies for both players. Therefore, we get the following property:

\begin{corollary}\label{cor:existence-ne-finite-memory}
If there exists a (constrained) NE, then there exists one with finite-memory strategies.
\end{corollary}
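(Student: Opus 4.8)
The plan is to propagate the finite-memory property backwards along the chain of reductions used for the membership part of \cref{theorem:NEexistenceGeneral} (and \cref{theorem:constrainedNEexistenceGeneral}). Recall the remark just before the corollary: by steps \emph{(i)}--\emph{(iii)}, deciding whether a (constrained) NE exists in $\game$ is equivalent to deciding whether the coalition $\{\proverone,\provertwo\}$ wins a two-player zero-sum parity game $G'$ with \emph{perfect} information, where $G'$ is obtained from $\pcp(\game)$ by (i) rewriting $W_{\proverone\provertwoTight}$ as a Rabin condition --- adding $\Sigma_{j \in \Players}d_j$ Rabin pairs, and in the constrained case first extending the arena with the finitely many states of the automata $\aut{A}'_i$ encoding the lasso constraints $\pi_i$ --- then (ii) merging $\proverone$ and $\provertwo$ into a single coalition player against $\challenger$, and (iii) applying the knowledge-based subset construction to remove $\proverone$'s partial observation. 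So suppose a (constrained) NE exists. By \cref{theorem:correspondence} and steps (ii)--(iii), the coalition $\{\proverone,\provertwo\}$ has a winning strategy in $G'$, and since parity games admit memoryless winning strategies, we may take this strategy $\tau'$ to be memoryless.

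Next I would read $\tau'$ back as strategies in $\pcp(\game)$. Each vertex of $G'$ carries a \emph{knowledge set} --- a subset of the finite vertex set $S$ of $\pcp(\game)$ --- so $G'$ has finitely many vertices; unfolding the subset construction and the coalition merge, the memoryless strategy $\tau'$ therefore yields a \emph{finite-memory} coalition strategy in $\pcp(\game)$ whose memory state is just a vertex of $G'$. Its $\proverone$-part $\tau_{\proverone}$ is then observation-based and finite-memory, its $\provertwo$-part $\tau_{\provertwo}$ is finite-memory, and together they witness the second item of \cref{theorem:correspondence} (in fact a single $\tau_{\provertwo}$ works against every $\tau_{\challenger}$).

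Finally I would revisit the construction in the proof of \cref{theorem:correspondence} that turns $(\tau_{\proverone},\tau_{\provertwo})$ into an NE $\sigma = (\sigma_i)_{i \in \Players}$ of $\game$ and check that it preserves finite memory. The equilibrium outcome $\rho$ is the play produced when $\challenger$ always accepts $\proverone$'s proposals; since $\tau_{\proverone}$ is finite-memory, $\rho$ is a lasso, so following it needs only finite memory. Along $\rho$ the players just follow this lasso; as soon as some player $j$ diverges from $\rho$ --- which happens at a vertex owned by $j$, so $j$'s identity is readable from the history --- the coalition $-j$ switches to the retaliation strategy extracted from $\tau_{\provertwo}$, which it can run by simulating the relevant components of the current $\pcp(\game)$ vertex: the finitely many memory states of $\tau_{\proverone}$ and $\tau_{\provertwo}$, the component in $\Players \cup \{\bot\}$, and the states $(q_k)_{k \in \Players}$ of the automata $\aut{A}_k$. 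All of these range over finite sets, so each $\sigma_i$ is implementable by a Mealy machine, which yields the corollary. The main obstacle is exactly this last verification: one has to open up the proof of \cref{theorem:correspondence} and confirm that the information needed to reconstruct each $\sigma_i$ is a fixed product of finite sets along every history, rather than something --- an unbounded prefix of $\rho$, say, or an unbounded counter --- that could grow without bound.
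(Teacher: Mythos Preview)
Your proposal is correct and follows essentially the same route as the paper: both arguments rest on the fact that the final two-player zero-sum parity game obtained via steps (i)--(iii) admits memoryless winning strategies, and then pull this back through the chain of reductions to a finite-memory NE. The paper states this in a single sentence and leaves the back-propagation implicit, whereas you spell out exactly how the memoryless strategy in $G'$ unfolds to finite-memory $(\tau_{\proveroneTight},\tau_{\provertwoTight})$ in $\pcp(\game)$ and then to finite-memory $(\sigma_i)_{i\in\Players}$ via the construction of \cref{theorem:correspondence}; your observation that the two-player merge in step~(ii) yields a \emph{uniform} $\tau_{\provertwoTight}$ is precisely what makes this explicit trace go through.
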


There is a great interest in using the concept of \pcp{} game, as it provides a unified approach to solve the NE existence problem and its constrained variant. With this approach, we could also decide the existence of an NE whose outcome $\rho$ satisfies various combinations of constraints, such as, e.g., $\pi_i \leqRelationStrict[i] \rho \leqRelationStrict[i] \pi'_i$ for one or several players~$i$. The chosen constraints only impact the winning condition $W_{\proverone\provertwo}$ and thus its translation into a Rabin condition.

\section{Hypotheses on Preference Relations}
\label{section:properties-relations}

In the previous sections, we presented several decision algorithms. Since the players' relations $\leqRelationStrict[i]$ are intended to formalize how they prefer one play to another, we may naturally expect them to satisfy certain properties, such as \emph{irreflexivity} and \emph{transitivity}. However, since the relation $\leqRelationStrict[i]$ is accepted by a \DPW{}, its structure can be intricate, and it becomes relevant to verify whether such properties hold. In this section, we address decision problems related to the algorithmic verification of properties of $\leqRelationStrict[i]$. We also explore alternative approaches to modeling preferences between plays, focusing in particular on cases where the \DPW{} accepts a non-strict preference relation $\leqRelation[i]$, i.e., a preorder, rather than a strict partial order.

\subparagraph*{Hypotheses on Relations.} Given a relation $R \subseteq \Sigma^\omega \times \Sigma^\omega$, it is:
\begin{itemize}
    \item \emph{reflexive} (resp.\ \emph{irreflexive}) if for all $x \in \Sigma^\omega$, we have $(x,x) \in R$ (resp.\ $(x,x) \not\in R$),
    \item \emph{transitive} (resp.\ \emph{$\neg$-transitive}) if for all $x,y,z \in \Sigma^\omega$, we have $(x,y) \in R \wedge (y,z) \in R \Rightarrow (x,z) \in R$ (resp.\ $(x,y) \not\in R \wedge (y,z) \not\in R \Rightarrow (x,z) \not\in R$),
    \item \emph{total} if for all $x,y \in \Sigma^\omega$, we have $(x,y) \in R \vee (y,x) \in R$,
\end{itemize}
A reflexive and transitive relation is a \emph{preorder}. An irreflexive and transitive relation is a \emph{strict partial order}. When, in addition, a strict partial order $R$ is $\neg$-transitive, it is a \emph{strict weak order}. The next proposition states that all the relevant properties mentioned above can be \emph{efficiently} verified on the \DPW{} accepting the relation $R$. It is proved in \cref{app:proof-PropertiesRelations}.

\begin{proposition}[restate=propertiesrelations,name=]
\label{prop:PropertiesRelations}
    The problem of deciding whether an $\omega$-automatic relation $R$ is reflexive (resp.\ irreflexive, transitive, $\neg$-transitive, total) is \nlComplete{}.
\end{proposition}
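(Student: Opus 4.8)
The statement has two parts for each of the five properties: membership in \nl{} and \nlHard{}ness. For the membership part, the plan is to reduce each property-checking problem to a non-emptiness (or emptiness) test on a suitable product automaton of polynomial size, and then invoke the standard fact that non-emptiness of an $\omega$-regular language given as a parity automaton is decidable in \nl{} (by guessing a reachable lasso on the fly while tracking the maximal priority on the loop, à la the argument already used in the proof of \cref{theorem:nash-checking-pspace}). Since \nl{} is closed under complement, the emptiness variant is also in \nl{}.

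\textbf{Membership.} For \emph{reflexivity} and \emph{irreflexivity}, the key observation is that the diagonal $\{(x,x) \mid x \in \Sigma^\omega\}$ is accepted by a fixed \DPW{} of constant size over $\Sigma \times \Sigma$ that checks that both components agree letter by letter. So reflexivity fails iff the product of $\aut{A}$ (the \DPW{} for $R$) with the complement of this diagonal automaton — equivalently, restrict $\aut{A}$ to diagonal letters and complement the acceptance — accepts some word; this is a polynomial-size parity-automaton non-emptiness test, hence in \nl{} (and its complement for reflexivity is in \nl{} too). For \emph{transitivity}, I would build an automaton over $\Sigma \times \Sigma \times \Sigma$ reading triples $(x,y,z)$ synchronously: run three copies of $\aut{A}$, one on $(x,y)$, one on $(y,z)$, one on $(x,z)$; transitivity fails iff there is a triple accepted by the first two copies but rejected by the third. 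The acceptance condition ``(copy 1 accepts) $\wedge$ (copy 2 accepts) $\wedge$ (copy 3 rejects)'' is a conjunction of three parity conditions, i.e.\ a generalized/Streett-like condition on a product of polynomial size; since checking non-emptiness of a polynomial-size Streett automaton is still in \nl{} (guess a lasso and verify the finitely many pairs on the loop), this works. \emph{$\neg$-transitivity} is identical with the roles of accept/reject swapped in all three copies. For \emph{totality}, build the product of $\aut{A}$ on $(x,y)$ with a second copy of $\aut{A}$ on $(y,x)$ (reading the swapped pair), with acceptance ``(copy 1 rejects) $\wedge$ (copy 2 rejects)''; totality fails iff this is non-empty, again a polynomial-size conjunction-of-parities non-emptiness test. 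In every case the product has size polynomial in $|\aut{A}|$ and a bounded number of parity conditions, so the whole decision runs in \nl{}.

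\textbf{Hardness.} For \nlHard{}ness, I would reduce from (non-)reachability in a directed graph, which is \nlComplete{} (and \nl{} is closed under complement). Given a graph with a source and a target, one encodes it as a tiny \DPW{} over $\Sigma \times \Sigma$ so that the presence or absence of a particular accepted word corresponds to reachability, and so that the relation $R$ accepted is reflexive/transitive/etc.\ precisely when the target is unreachable (or reachable). Concretely: for reflexivity, arrange the automaton so that some diagonal word $(w,w)^\omega$ is rejected iff a path from source to target exists — then $R$ is reflexive iff no such path exists; since $\aut{A}$ is the input of the problem, this is a logspace reduction. Analogous gadgets handle irreflexivity, totality (use two disjoint components so that $(x,y)$ and $(y,x)$ are simultaneously rejected only along a word encoding a path) and, with a little more care, transitivity and $\neg$-transitivity (build $\aut{A}$ on a three-letter track so that the only potential counterexample triple $(x,y,z)$ is the one tracing a source-to-target path). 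The routine part is drawing these constant-size gadgets explicitly and checking the reduction is logspace-computable.

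\textbf{Main obstacle.} The genuinely delicate point is the \nlHard{}ness reductions for \emph{transitivity} and \emph{$\neg$-transitivity}: one must design the input automaton $\aut{A}$ so that it is ``almost'' transitive — it has at most one offending triple, which exists iff the graph-reachability instance is positive — rather than accidentally being trivially non-transitive for unrelated structural reasons. Getting this gadget right, while keeping it computable in logspace from the reachability instance, is where the real work lies; the membership direction, by contrast, is a routine assembly of product constructions plus the on-the-fly lasso-guessing argument already established earlier in the paper.
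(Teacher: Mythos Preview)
Your membership arguments are correct and essentially match the paper's: build a product of a bounded number of copies of $\aut{A}$ (and its complement, and a swapped copy) over the appropriate alphabet, and test (non-)emptiness of the resulting generalized parity automaton. One precision you should add: the \nl{} bound for non-emptiness relies on the number of parity conditions being \emph{constant} (three for transitivity, two for totality), not merely on the automaton being polynomial-size; the paper proves exactly this as a separate lemma (\cref{prop:emptiness-universality-dpw-generalized-nl-complete}). Writing ``polynomial-size Streett automaton'' without that caveat overstates the claim.

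For hardness you take a genuinely different route. The paper does \emph{not} reduce directly from graph reachability; it reduces from universality (or emptiness) of \DBWs{}, which is already known to be \nlComplete{}. This buys concreteness: for reflexivity the paper simply sets $R = \lang{\aut{A}} \times \lang{\aut{A}}$ (reflexive iff $\aut{A}$ universal); for transitivity it takes $R = (\Sigma^\omega \times \{\#^\omega\}) \cup (\{\#^\omega\} \times \Sigma^\omega) \cup (\lang{\aut{A}} \times \lang{\aut{A}}) \cup \{(\#^\omega,\#^\omega)\}$ over $\Sigma \cup \{\#\}$ and checks directly that any $x \notin \lang{\aut{A}}$ yields the violating triple $(x,\#^\omega,x)$. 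The ``gadget'' is just the input \DBW{} wired into a fixed-shape relation, so the logspace-computability is immediate.

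Your direct-from-reachability plan is plausible in principle, but the sketch has real gaps. The phrase ``constant-size gadgets'' is wrong: the automaton you build must encode the entire input graph, so it is input-sized. More seriously, for transitivity your ``build $\aut{A}$ on a three-letter track'' is unclear, since $\aut{A}$ reads \emph{pairs}; you must engineer a binary relation over a chosen alphabet whose \emph{only} possible transitivity violation witnesses an $s$--$t$ path, while every other triple is harmless. That is precisely the non-routine design problem you flag as the main obstacle, and you have not actually solved it. The paper's detour through \DBW{} universality sidesteps this entirely.
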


\subparagraph*{Variants on our Setting.}
Let us first observe that all the lower bounds established for the decision problems about NEs remain valid even when the players' preference relations $R_i$ are assumed to be \emph{strict partial orders}.\footnote{The upper bounds do not need more than the $\omega$-automaticity of each $R_i$} This implies that taking this additional property into account does not yield any important advantage in terms of complexity class.

We now consider an alternative setting in which each relation $R_i$ is not a strict preference, but rather a \emph{preorder} (where some plays are declared equivalent). To discuss this further, let us recall the relationship between strict partial orders and preorders. From a strict partial order $\leqRelationStrict$, one can obtain a preorder $\leqRelation$ by taking its reflexive closure, i.e., $x \leqRelation y$ if $x \leqRelationStrict y$ or $x = y$. When $\leqRelationStrict$ is a strict weak order, another preorder which is total, is obtained by defining $x \leqRelation y$ if $y \not\leqRelationStrict x$. In both cases, if $\leqRelationStrict$ is $\omega$-automatic, $\leqRelation$ is also $\omega$-automatic (resp.\ by a generalized \DPW{} with a disjunction of two parity conditions, or by a \DPW{}). Conversely, from any preorder $\leqRelation$, we can define a strict partial order $\leqRelationStrict$ such that $x \leqRelationStrict y$ if $x \leqRelation y \wedge y \not\leqRelation x$. Every strict partial order can be constructed this way. Moreover, if $\leqRelation$ is total, then $\leqRelationStrict$ is a strict weak order. We can also define the equivalence relation $\equivRelation$ such that $x \equivRelation y$ if $x \leqRelation y \wedge y \leqRelation x$. The equivalence class of $x$ is denoted $\equivClass{x}{}$. Again, if $\leqRelation$ is $\omega$-automatic, then $\leqRelationStrict$ and $\equivRelation$ are both $\omega$-automatic (by a generalized \DPW{} with a conjunction of two parity conditions).

For games with preorders $R_i$, we keep the same upper bounds, except that we only have an \np{} membership for~\cref{theorem:OutcomeCheck}. Indeed, it requires solving a generalized parity game with a disjunction of two parity conditions (instead of a parity game), solvable in \np{}~\cite{generalized-parity-game}. All lower bounds remain valid by carefully modifying the preference relations used in the reductions into preorders, see \cref{app:reduction-preorders}.

Finally, recall that given a lasso $\pi$, it is easy to construct an automaton that accepts all plays related to $\pi$ according to an $\omega$-automatic preference relation. This set being $\omega$-regular, all standard verification techniques for $\omega$-regular languages can be applied. For example, one may wish to verify that all plays preferred to $\pi$ satisfy a given $\omega$-regular property. In such cases, the full range of verification algorithms developed for $\omega$-regular languages can be used.

\begin{figure}
    \centering
    \begin{tikzpicture}[automaton,every node/.style={scale=.7},node distance=1.1]
        \node[system] (v0) {$v_0$};
        \node[system] (v1) [left=of v0] {$v_1$};
        \node[system] (v2) [right=of v0] {$v_2$};

        \path (v0) edge            (v1)
                   edge            (v2)
              (v1) edge[loop left] (v1)
              (v2) edge[loop right] (v2);
    \end{tikzpicture}
    \caption{An arena with one player.}
    \label{fig:aut-no-nash-game-multireach}
\end{figure}

\subparagraph*{Alternative Definition of NE.} Given a game $\game = (A,(\leqRelation[i])_{i\in \Players})$ with preorders $\leqRelation[i]$, an NE is a strategy profile $\sigma$ such that for all players~$i$ and all strategies $\tau_i$ of player~$i$, we have $\outcomefrom{\sigma}{v_0} \not\leqRelationStrict[i] \outcomefrom{\tau_i,\sigma_{-i}}{v_0}$. An alternative definition asks for all~$i$ and~$\tau_i$ that $\outcomefrom{\tau_i,\sigma_{-i}}{v_0} \leqRelation[i] \outcomefrom{\sigma}{v_0}$~\cite{PatriciaBouyerBMU15}. The two definitions yield different notions of NE (unless all $\leqRelation[i]$ are total). In this paper, we do not consider the second definition, due to the nonexistence of NEs in very simple games. Let us consider the one-player game $\game$ depicted in \cref{fig:aut-no-nash-game-multireach}, where, from the initial vertex $v_0$, player~$1$ has the choice between $v_0v_1^\omega$ and $v_0v_2^\omega$. We consider the preorder $\leqRelation[1]$ equal to $\{(x,x) \mid x \in \{v_0,v_1,v_2\}^\omega\}$. Clearly, $v_0v_1^\omega \not\leqRelation[1] v_0v_2^\omega$ and $v_0v_2^\omega \not\leqRelation[1] v_0v_1^\omega$, showing that there is no NE from $v_0$ for this alternative definition (while $v_0v_1^\omega$ and $v_0v_2^\omega$ are both NE outcomes with the first definition). This phenomenon appears as soon as there are two incomparable plays.

\section{\texorpdfstring{$\omega$}{ω}-Recognizable Relations}
\label{section:omegaRec}

In this section, we suppose that we have a game $\game = (\arena,(\leqRelation[i])_{i \in \Players})$ whose relations $\leqRelation[i]$ are \emph{$\omega$-recognizable} and \emph{preorders}. We recall that $\leqRelation[i]$ is $\omega$-recognizable if it is of the form $\cup_{i=1}^{\ell} X_i \times Y_i$ where $X_i, Y_i \subseteq \Sigma^\omega$ are $\omega$-regular languages over $\Sigma$. Any $\omega$-recognizable relation is $\omega$-automatic (see~\cite{BookSakarovitch}), and deciding whether an $\omega$-automatic relation accepted by a \DPW{} is $\omega$-recognizable is \nlComplete{}~\cite{lics23-rational-relations}. For each $\leqRelation[i]$, we use the related relations $\leqRelationStrict[i]$ and $\equivRelation[i]$ as defined in the previous section.

In \cref{ex:no-nash-not-omega-recognizable}, we presented a one-player game with no NE. The reason for the absence of NE is that $\leqRelationStrict[1]$ has an unbounded infinite ascending chain. This situation cannot happen for $\omega$-recognizable preorders, as highlighted in the next proposition, easily derived from~\cite{rational-relations-automatic-loding} (its proof is given in \cref{app:recognizable-finite-index}). This motivates the interest in games with $\omega$-recognizable preference relations.

\begin{proposition}[restate=recognizablefiniteindex,name=]
\label{prop:recognizable-finite-index}
    An $\omega$-automatic preorder $\mathord{\leqRelation} \subseteq \Sigma^\omega \times \Sigma^\omega$ is $\omega$-recognizable if and only if its induced equivalence relation \equivRelation{} has finite index.
\end{proposition}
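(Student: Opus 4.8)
The plan is to characterize $\omega$-recognizability through the structure of the equivalence classes of $\equivRelation$, following the classical equivalence between recognizability and finiteness of a suitable index, adapted to the infinite-word setting of~\cite{rational-relations-automatic-loding}. First I would establish the easy direction: assume $\equivRelation$ has finite index $k$, with equivalence classes $C_1,\dots,C_k$. Since $\leqRelation$ is $\omega$-automatic, so is $\equivRelation$ (as recalled in \cref{section:properties-relations}, by a generalized \DPW{} with a conjunction of two parity conditions), and from this one can show that each class $C_m$ is an $\omega$-regular language over $\Sigma$: indeed $C_m = \{y \mid (x_m, y) \in \mathord{\equivRelation}\}$ for any fixed representative $x_m \in C_m$, and fixing the first component of an $\omega$-automatic relation to a single word yields an $\omega$-regular language (project away the first track, which is now a fixed lasso-like or more generally $\omega$-regular constraint — one can take $x_m$ to be any word in the regular language $C_m$, but to be safe, observe that for any $\omega$-regular $X$ the set $\{y \mid \exists x \in X,\ (x,y)\in R\}$ need not be regular, so instead use that membership in $C_m$ is definable as $(x,y) \in \mathord{\equivRelation}$ with $x$ ranging over the single class, which is itself handled by noting $\leqRelation$ respects classes). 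The cleanest route is: $C_m$ is $\equivRelation$-closed, and since there are finitely many classes, each $C_m$ is a finite union of... — actually the direct argument is that $\leqRelation$ is a union of products $\bigcup_{C_m \leqRelation C_{m'}} C_m \times C_{m'}$, so once each $C_m$ is shown $\omega$-regular we are done. To get $C_m$ $\omega$-regular: take the \DPW{} for $\equivRelation$, fix a representative $x_m$; but a single word is not $\omega$-regular-friendly unless it is eventually periodic. So the argument must instead appeal to~\cite{rational-relations-automatic-loding}: finitely many classes, each a union of "equivalence blocks" that are $\omega$-regular because the relation is $\omega$-automatic and the classes partition $\Sigma^\omega$ into finitely many pieces, each of which is the $\equivRelation$-class hence $\omega$-automatic-definable with a fixed first coordinate, and by a pumping/congruence argument each is $\omega$-regular. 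Writing $\leqRelation = \bigcup_{m,m' : C_m \leqRelation C_{m'}} C_m \times C_{m'}$ (well-defined since $\leqRelation$ is a preorder and hence respects $\equivRelation$ on both sides) exhibits $\leqRelation$ as $\omega$-recognizable.

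For the converse, assume $\leqRelation = \bigcup_{i=1}^{\ell} X_i \times Y_i$ with each $X_i, Y_i$ $\omega$-regular. The goal is to bound the index of $\equivRelation$ by a function of $\ell$. The key observation is that for any $x$, its "upper set" $\{z \mid x \leqRelation z\}$ and "lower set" $\{z \mid z \leqRelation x\}$ are each determined by which of the finitely many languages $X_i$ (resp.\ $Y_i$) contain $x$: specifically $\{z \mid x \leqRelation z\} = \bigcup_{i : x \in X_i} Y_i$ and, using reflexivity and transitivity to get symmetry of the "Boolean type", $\{z \mid z \leqRelation x\} = \bigcup_{i : x \in Y_i} X_i$. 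Since $x \equivRelation x'$ holds precisely when these two sets coincide for $x$ and $x'$ (this uses that $\leqRelation$ is a preorder, so $x \equivRelation x'$ iff they have the same up-set iff the same down-set — one direction is immediate, the other follows because $x$ lies in its own up-set by reflexivity), the equivalence class of $x$ is determined by the pair of subsets $(\{i : x \in X_i\}, \{i : x \in Y_i\}) \subseteq \{1,\dots,\ell\}^2$, giving at most $2^{2\ell}$ classes. Hence $\equivRelation$ has finite index.

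The main obstacle I anticipate is the easy direction, specifically showing that each $\equivRelation$-class is genuinely $\omega$-regular rather than merely the preimage of a single word — this is where one must cite~\cite{rational-relations-automatic-loding} for the fact that in an $\omega$-automatic equivalence relation of finite index every class is $\omega$-regular (a congruence-saturation argument on the \DPW{}). The converse direction is essentially a counting argument and should go through cleanly, the only care needed being to verify that reflexivity and transitivity of $\leqRelation$ really do make the "Boolean type" $(\{i : x\in X_i\},\{i : x \in Y_i\})$ a complete invariant for $\equivRelation$; a subtle point is that $x\in X_i$ does not by itself force $x\in Y_i$ even though $x\leqRelation x$, because the witness index for $(x,x)\in\leqRelation$ may differ, but this does not harm the argument since the up-set and down-set are still correctly computed from the type, and equal types yield equal up-sets hence $x\equivRelation x'$.
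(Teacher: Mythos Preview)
Your plan is correct in both directions, but the paper takes a much shorter route. It invokes directly a characterization lemma from~\cite{rational-relations-automatic-loding} (their Lemma~3): an $\omega$-automatic relation $R$ is $\omega$-recognizable if and only if two auxiliary equivalence relations have finite index, namely $E_1 = \{(x,y) \mid \forall z,\ x\mathrel{R}z \Leftrightarrow y\mathrel{R}z\}$ and $E_2 = \{((x_1,x_2),(y_1,y_2)) \mid x_1\mathrel{R}x_2 \Leftrightarrow y_1\mathrel{R}y_2\}$. The paper then observes that $E_2$ has at most two classes, and proves in two lines that $E_1 = \mathord{\equivRelation}$ using reflexivity and transitivity of $\leqRelation$. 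Both directions of the proposition thus follow at once from the cited lemma.

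Your converse direction is essentially the same argument unpacked: your ``same Boolean type $\Rightarrow$ same up-set $\Rightarrow$ $\equivRelation$-equivalent'' is exactly the proof that $E_1=\mathord{\equivRelation}$ together with the observation that $E_1$ has finite index when $\leqRelation$ is a finite union of rectangles. (Incidentally, since you show that equal up-sets already force $\equivRelation$, the $X$-type alone suffices and the bound improves to $2^{\ell}$.) Your forward direction re-derives the other half of the cited lemma by hand. The one soft spot you identify --- showing that each $\equivRelation$-class is $\omega$-regular --- can be filled cleanly without a black-box citation: every class containing an ultimately periodic word $x_m$ is $\omega$-regular (project the \DPW{} for $\equivRelation$ after fixing the first track to the lasso $x_m$); the union of such classes is $\omega$-regular, and its complement is an $\omega$-regular language containing no ultimately periodic word, hence empty. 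So every class has an ultimately periodic representative and is $\omega$-regular, after which your product decomposition $\leqRelation=\bigcup_{C_m\leqRelation C_{m'}} C_m\times C_{m'}$ finishes the argument.
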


Thanks to this result, we can partition $\Sigma^\omega$ as a \emph{finite lattice} given by a partial order induced by $\leqRelation$ on the equivalence classes of $\equivRelation$. In particular, there always exists a maximal (resp.\ minimal) element in this lattice. Examples of $\omega$-recognizable preorders are numerous: those deriving from any Boolean combination of $\omega$-regular objectives or any multidimensional objective where each dimension is defined using an $\omega$-regular objective.
In the subclass of games with $\omega$-recognizable preorders, the main difference is the \emph{existence of NEs}.

\begin{theorem}[restate=existenceNE,name=]
\label{theorem:existenceNE}
    When the preference relations of a game are all $\omega$-recognizable preorders, then there always exists an NE composed of finite-memory strategies.
\end{theorem}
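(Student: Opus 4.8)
The plan is to reduce the problem to finitely many $\omega$-regular ``value levels'' and then combine the classical punishment-based characterisation of Nash equilibria with the existence argument of~\cite{Gradel-Ummels-08}.

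First, by \cref{prop:recognizable-finite-index} each preorder $\leqRelation[i]$ has finite index: $V^\omega$ splits into finitely many $\equivRelation[i]$-classes, which form a finite poset $C_i$ ordered by the order induced by $\leqRelation[i]$, and every class $c \in C_i$ is an $\omega$-regular language over $V$. For a play $\rho$ write $[\rho]_i \in C_i$ for its class. For each player~$i$ and each $c \in C_i$, the set of plays $x$ with $c \leqRelationStrict[i] [x]_i$ is a finite union of classes, hence $\omega$-regular, so the two-player zero-sum game on $\arena$ in which player~$i$ tries to force such an outcome against the coalition~$-i$ is determined and won with finite-memory strategies~\cite{lncs2500}. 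Since the target depends on the whole outcome (the relations need not be prefix-independent), this game is really played on the product of $\arena$ with a deterministic automaton reading the outcome; as the equilibrium outcome we aim to build will be a lasso, only finitely many automaton states are relevant. Write $\mathrm{Win}_i^{>c}$ for the winning region of player~$i$ in this (product) game.

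The key reduction is the \emph{punishment characterisation}: there is an NE from $v_0$ iff there exist a tuple $(c_i)_{i \in \Players} \in \prod_{i} C_i$ and a play $\rho$ from $v_0$ with $[\rho]_i = c_i$ for every~$i$ such that, for every~$i$, $\rho$ avoids $V_i \cap \mathrm{Win}_i^{>c_i}$. For the ``only if'' direction one takes an NE $\sigma$ with outcome $\rho$, sets $c_i := [\rho]_i$, and notes that otherwise player~$i$ could follow $\sigma_i$ up to such a vertex and then switch to his winning strategy, obtaining an outcome strictly above $\rho$ for $\leqRelationStrict[i]$ and contradicting that $\sigma$ is an NE. For the ``if'' direction one uses trigger strategies: all players play $\rho$; as soon as some player~$i$ leaves $\rho$ (necessarily at one of his own vertices, hence moving to a vertex still outside $\mathrm{Win}_i^{>c_i}$ by hypothesis, so a vertex from which the coalition can force the outcome to a class that is not strictly greater than $c_i$), the other players switch to a finite-memory coalition strategy achieving this; no unilateral deviation is then profitable. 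Since every class is $\omega$-regular and the (product) arena is finite, the witnessing play $\rho$, when it exists, can always be taken to be a lasso, and together with the finite-memory punishing strategies this yields an NE with finite-memory strategies; alternatively, once existence is settled, \cref{cor:existence-ne-finite-memory} already gives a finite-memory NE.

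It thus remains to show that a suitable pair $\bigl((c_i)_i,\rho\bigr)$ always exists, which is the crux. For a single player this is transparent: among the finitely many classes realised by plays from $v_0$ one picks a maximal one $c_1$ and a play $\rho$ realising it; no vertex of $\rho$ can lie in $\mathrm{Win}_1^{>c_1}$, since from there one could realise a strictly larger class from $v_0$. This is exactly where $\omega$-recognisability is used: without finite index, $C_1$ may carry an infinite ascending chain with no maximal realisable class, which is precisely why the one-player game of \cref{ex:no-nash-not-omega-recognizable} has no NE. For the general case I would follow~\cite{Gradel-Ummels-08} and fix the values player by player in a fixed order: at stage~$i$, set $c_i$ to a value player~$i$ is guaranteed to realise in the current sub-arena, delete the vertices of $V_i \cap \mathrm{Win}_i^{>c_i}$ (and those that thereby become dead-ends, while keeping $v_0$ alive), and recurse on the remaining players. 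Finiteness of each $C_i$ makes the procedure well defined and terminating, and one then checks that the final sub-arena still admits a play from $v_0$ realising the chosen tuple. The main obstacle is exactly this last step — ensuring the greedily chosen values are mutually \emph{consistent} and jointly realisable by a single play — which is where the bookkeeping of~\cite{Gradel-Ummels-08} must be carried out carefully in the presence of the automaton-product components.
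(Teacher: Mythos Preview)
Your punishment characterisation is essentially the right target, and the reduction to $\omega$-regular objectives via the finite index of each $\equivRelation[i]$ is the correct starting point. However, the proof has a genuine gap precisely where you say it does: the existence of a witnessing tuple $\bigl((c_i)_i,\rho\bigr)$. Saying ``fix the values player by player following~\cite{Gradel-Ummels-08}'' is not a proof here. Even for two players, deleting $V_1 \cap \mathrm{Win}_1^{>c_1}$ from the product arena can destroy the plays that would have realised any reasonable $c_2$ for player~$2$, and conversely; you give no argument that the greedy choices are jointly realisable, and with \emph{partial} orders (so that ``maximal realisable class'' is not unique and the $c_i$'s are only partially comparable) this bookkeeping is delicate. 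In short, you have reduced the theorem to a lemma you have not proved and explicitly flag as the ``main obstacle''.

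The paper avoids this difficulty by a trick you do not consider: it first embeds each $\omega$-recognizable preorder $\leqRelation[i]$ into a \emph{total} $\omega$-recognizable preorder $\leqRelation[i]'$ (\cref{prop:partial-preorder-to-total-preorder}), via a topological sort of the finite lattice of $\equivRelation[i]$-classes. Any NE for the total preorders is automatically an NE for the original ones, because the embedding preserves $\leqRelationStrict[i]$. With total preorders one has a well-defined value $val_i(v)$ at every vertex (equality of lower and upper values follows from determinacy of the threshold games), and the paper then builds the NE outcome by letting each player follow an optimal strategy tied to the earliest prefix at which his value stabilises, rather than by your greedy player-by-player pruning. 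This is where the actual work is, and it replaces the step you left open. Finite memory then comes for free from \cref{cor:existence-ne-finite-memory}, as you note.
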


The proof of \cref{theorem:existenceNE} requires two steps. We first prove the existence of an NE under the assumption that each preference relation $\leqRelation[i]$ is a \emph{total} preorder and then without this assumption. (Note that we get an NE composed of finite-memory strategies by \Cref{cor:existence-ne-finite-memory}). The first step can be obtained as a corollary of~\cite[Theorem 15]{LeRoux-Pauly-Equilibria} that guarantees the existence of an NE in the case of strict weak orders $\leqRelationStrict[i]$. Recall that the relation $\leqRelationStrict[i]$ induced by a preorder $\leqRelation[i]$ is a strict weak order if $\leqRelation[i]$ is total. Nevertheless, we provide a proof of this first step in \cref{app:NEtotal}, inspired by the work of~\cite{Gradel-Ummels-08} and~\cite{BrihayePS13}, where the existence of NEs is studied through the concept of value and optimal strategy (see below and in \cref{app:NEtotal}).

The second step is obtained thanks to an embedding of partial preorders into total preorders, as described in the next proposition. \cref{theorem:existenceNE} easily follows (see \cref{app:generalNE}).

\begin{proposition}[restate=embeddingpartialtotal,name=]
\label{prop:partial-preorder-to-total-preorder}
    Any $\omega$-recognizable preorder \leqRelation{} can be embedded into an $\omega$-recognizable total preorder $\leqRelation'$. Moreover, for all $x,y$, if $x \Join y$, then $x \Join' y$, for $\Join$ $\in \{\leqRelation,\leqRelationStrict,\geqRelation,\geqRelationStrict,\equivRelation\}$.
\end{proposition}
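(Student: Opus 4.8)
The plan is to reduce the statement to the elementary fact that every finite partial order extends to a linear order (a topological sort, i.e.\ the finite case of Szpilrajn's theorem). The starting point is \cref{prop:recognizable-finite-index}: since $\leqRelation$ is an $\omega$-recognizable preorder, its induced equivalence relation $\equivRelation$ has finite index, with, say, classes $C_1, \dots, C_k$. The relation $\leqRelation'$ we build lives on the same set $\Sigma^\omega$, so that $\leqRelation'$ extends $\leqRelation$ and the ``embedding'' is simply the identity map.

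First I would check that each class $C_m$ is $\omega$-regular. Writing $\mathord{\leqRelation} = \bigcup_{j=1}^{\ell} X_j \times Y_j$ with all $X_j, Y_j$ $\omega$-regular, let $A_1, \dots, A_N$ be the atoms of the finite Boolean algebra of subsets of $\Sigma^\omega$ generated by the $X_j$ and $Y_j$; each $A_p$ is $\omega$-regular and they partition $\Sigma^\omega$. Since each $X_j$ and $Y_j$ is a union of atoms, $\leqRelation$ is a union of rectangles $A_p \times A_q$, so whether $x \leqRelation y$ holds depends only on the atoms containing $x$ and $y$. Using reflexivity of $\leqRelation$, any two elements of a single atom are then $\equivRelation$-equivalent, so every $\equivRelation$-class is a union of atoms and hence $\omega$-regular. (Alternatively, one passes from a \DPW{} for $\leqRelation$ to one for $\equivRelation = \mathord{\leqRelation} \cap \mathord{\leqRelation}^{-1}$ and invokes the fact -- already used in the proof of \cref{prop:recognizable-finite-index} -- that a finite-index $\omega$-automatic equivalence relation has $\omega$-regular classes.)

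Next I would linearize the quotient order. The preorder $\leqRelation$ descends to a well-defined partial order $\sqsubseteq$ on $\{C_1, \dots, C_k\}$, namely $C_m \sqsubseteq C_{m'}$ iff $x \leqRelation y$ for some (equivalently every) $x \in C_m$, $y \in C_{m'}$: well-definedness and transitivity come from transitivity of $\leqRelation$, and antisymmetry from the definition of $\equivRelation$. Fixing any linear order $\le$ on this finite set that refines $\sqsubseteq$, I set $\mathord{\leqRelation'} := \bigcup_{C_m \le C_{m'}} C_m \times C_{m'}$. Being a finite union of products of $\omega$-regular languages, $\leqRelation'$ is $\omega$-recognizable, and reflexivity, transitivity and totality are inherited directly from $\le$, so $\leqRelation'$ is a total preorder.

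It remains to verify the five implications, which is routine once the previous two steps are in place. For $x, y \in \Sigma^\omega$, let $C_x, C_y$ be their $\equivRelation$-classes. If $x \leqRelation y$, then $C_x \sqsubseteq C_y$, hence $C_x \le C_y$, hence $x \leqRelation' y$; if moreover $y \not\leqRelation x$ then $C_x \ne C_y$, so antisymmetry of $\le$ gives $y \not\leqRelation' x$, i.e.\ $x \leqRelationStrict' y$. The cases $\Join = \geqRelation$ and $\Join = \geqRelationStrict$ are symmetric, and $x \equivRelation y$ forces $C_x = C_y$, whence $x \leqRelation' y$ and $y \leqRelation' x$, i.e.\ $x \equivRelation' y$. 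The only step demanding genuine care is the $\omega$-regularity of the equivalence classes; the linearization and the implications are pure finite combinatorics that never touch the automata again.
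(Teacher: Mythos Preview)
Your proposal is correct and follows essentially the same approach as the paper: both use \cref{prop:recognizable-finite-index} to obtain finitely many $\equivRelation$-classes, perform a topological sort / linear extension of the induced partial order on classes, and define $\leqRelation'$ as the corresponding union of rectangles $C_m \times C_{m'}$. Your treatment is slightly more explicit in justifying the $\omega$-regularity of the equivalence classes via the atom decomposition, whereas the paper simply invokes the $\omega$-recognizability of $\leqRelation$; otherwise the arguments coincide.
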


We now focus on \emph{prefix-independent} relations $R$, such that for all $x,y \in \Sigma^\omega$, $(x,y) \in R \Leftrightarrow \forall u,v \in \Sigma^*, (ux,vy) \in R$.
From our proof of \cref{theorem:existenceNE}, when the relations $\leqRelation[i]$ are total and prefix-independent, we can derive the following \emph{characterization} of NE outcomes in terms of values (the proof is given in \cref{app:carac}). In this context, for each player~$i$ and vertex $v$ of $\game$, there always exists a \emph{value} $val_i(v)$ (which is an equivalence class of $\equivRelation[i]$) and \emph{optimal strategies} $\sigma_i^v$ for player~$i$ and $\sigma_{-i}^v$ for the coalition~$-i$ such that $\sigma_i^v$ (resp.\ $\sigma_{-i}^v$) ensures consistent plays $\pi$ starting at $v$ such that $val_i(v) \leqRelation[i] [\pi]_{i}$ (resp.\ $[\pi]_{i} \leqRelation[i] val_i(v)$) (see \cref{app:NEtotal}). Such an NE characterization is well-known for games with classical objectives (see, e.g., the survey~\cite{Bruyere17}).

\begin{theorem}[restate=characterization,name=]
\label{theorem:characterization}
    Let $\game$ be a game such that each preference relation $\leqRelation[i]$ is an $\omega$-recognizable preorder, total, and prefix-independent. Then a play $\rho = \rho_0\rho_1\ldots$ is an NE outcome if and only if for all vertices $\rho_n$ of $\rho$, if $\rho_n \in V_i$, then $val_i(\rho_n) \leqRelation[i] [\rho]_{i}$.
\end{theorem}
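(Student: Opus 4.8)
The plan is to prove the two implications of this "folklore-style" characterization of NE outcomes, adapting the classical argument (as in~\cite{Bruyere17}) to the setting of $\omega$-recognizable, total, prefix-independent preorders, where the relevant notions of \emph{value} and \emph{optimal strategy} are the ones established in \cref{app:NEtotal}. Throughout, I will use \cref{prop:recognizable-finite-index}: since each $\equivRelation[i]$ has finite index, each $val_i(v)$ is well-defined as an equivalence class, and optimal strategies $\sigma_i^v$, $\sigma_{-i}^v$ exist and can be taken finite-memory; prefix-independence guarantees that $val_i(v)$ depends only on $v$ and that $[\pi]_i$ is unchanged when we prepend a finite history to $\pi$.

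For the ($\Rightarrow$) direction, I would start from an NE $\sigma$ with $\outcomefrom{\sigma}{\rho_0} = \rho$. Suppose for contradiction that some vertex $\rho_n \in V_i$ satisfies $val_i(\rho_n) \not\leqRelation[i] [\rho]_i$; since $\leqRelation[i]$ is total, this gives $[\rho]_i \leqRelationStrict[i] val_i(\rho_n)$. I would then construct a profitable deviation $\tau_i$ for player~$i$: follow $\sigma_i$ up to the history $\rho_0\dots\rho_n$, and from there switch to the optimal strategy $\sigma_i^{\rho_n}$ guaranteeing that the continuation $\pi$ from $\rho_n$ satisfies $val_i(\rho_n) \leqRelation[i] [\pi]_i$. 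The resulting outcome $\outcomefrom{\tau_i,\sigma_{-i}}{\rho_0}$ is $\rho_0\dots\rho_{n-1}\pi$, so by prefix-independence its class is $[\pi]_i \geqRelation[i] val_i(\rho_n) \geqRelationStrict[i] [\rho]_i$, contradicting that $\sigma$ is an NE. (I must be slightly careful about whether $\rho_n$ is the \emph{first} time player~$i$ owns a vertex on $\rho$ after which the deviation is triggered — but since $\sigma_i^{\rho_n}$ only needs the current vertex $\rho_n$, the splice is clean.)

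For the ($\Leftarrow$) direction, I would assume $\rho$ satisfies the condition and build an NE $\sigma$ with outcome $\rho$ using the standard \emph{trigger/punishment} strategy profile: all players cooperate to produce $\rho$, and as soon as some player~$j$ deviates (first deviation occurs at a history $hv$ with $v \in V_j$, $v = \rho_{|h|}$, but the next vertex is $v' \neq \rho_{|h|+1}$), the remaining players switch to the coalition-optimal strategy $\sigma_{-j}^{v'}$ in $\game$, punishing $j$ by forcing the continuation $\pi'$ from $v'$ to satisfy $[\pi']_j \leqRelation[j] val_j(v')$. The outcome of this profile is $\rho$ by construction. To check it is an NE, consider any deviation of player~$j$; let $v'$ be the first off-$\rho$ vertex reached, which is a successor of some $\rho_m \in V_j$ (where $\rho_m$ is the vertex just before the deviation). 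The deviating outcome is $\rho_0\dots\rho_{m-1}\rho_m\pi'$ with $[\pi']_j \leqRelation[j] val_j(v')$, and since $v'$ is a successor of $\rho_m$ we have $val_j(v') \leqRelation[j] val_j(\rho_m)$ by the standard "value cannot increase in one step against an optimal adversary" inequality (or directly: the coalition can play $\sigma_{-j}^{v'}$ after that step, so from $\rho_m$ the coalition can hold player~$j$ below $val_j(v')$, hence $val_j(\rho_m) \geqRelation[j]$... — actually I want $val_j(v') \leqRelation[j] val_j(\rho_m)$, which holds because from $\rho_m$ the coalition's optimal strategy ensures $\leqRelation[j] val_j(\rho_m)$, and one admissible coalition response to player~$j$ moving to $v'$ is $\sigma_{-j}^{v'}$, forcing $\leqRelation[j] val_j(v')$; combining, $val_j(v') \leqRelation[j] val_j(\rho_m)$). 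By the hypothesis on $\rho$, $val_j(\rho_m) \leqRelation[j] [\rho]_j$. Chaining and using prefix-independence, $[\rho_0\dots\rho_{m-1}\rho_m\pi']_j = [\pi']_j \leqRelation[j] val_j(v') \leqRelation[j] val_j(\rho_m) \leqRelation[j] [\rho]_j$, so the deviation is not profitable. Finally, all strategies used ($\sigma_{-j}^{v'}$ and the "follow $\rho$" strategy) are finite-memory, so the NE is finite-memory.

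The main obstacle I anticipate is making the value machinery fully rigorous in this generalized order-theoretic setting — in particular, justifying the one-step inequality $val_j(v') \leqRelation[j] val_j(\rho_m)$ and the very existence of optimal strategies $\sigma_i^v$, $\sigma_{-i}^v$ for \emph{partial}-order-valued (class-valued) objectives. This is exactly what \cref{app:NEtotal} is meant to supply: the finite-index property from \cref{prop:recognizable-finite-index} reduces the "value" to one of finitely many equivalence classes, so the game toward player~$i$ becomes, essentially, a finite union of $\omega$-regular objectives and the determinacy/optimality results from~\cite{Gradel-Ummels-08,BrihayePS13} apply. Once those lemmas are in place, the characterization proof is a routine splice-and-punish argument; the delicate points are purely bookkeeping about where the first deviation occurs and the repeated appeals to prefix-independence to discard finite prefixes.
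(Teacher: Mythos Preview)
Your proof is correct and follows essentially the same approach as the paper's: both directions use the standard deviation/punishment argument with optimal strategies and repeated appeals to prefix-independence. The paper's ($\Leftarrow$) is marginally simpler: it has the coalition punish via $\sigma_{-j}^{\rho_m}$ (the coalition-optimal strategy from the on-path vertex $\rho_m$, which already covers any move of player~$j$ from $\rho_m$) rather than from the off-path successor $v'$, so it obtains $[\pi']_j \leqRelation[j] val_j(\rho_m)$ directly and avoids your extra one-step inequality $val_j(v') \leqRelation[j] val_j(\rho_m)$.
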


In this theorem, we can weaken the hypothesis of prefix-independency into prefix-linearity. A relation $R$ is \emph{prefix-linear} if, for all $x,y \in \Sigma^\omega$ and $u \in \Sigma^*$, $(x,y) \in R$ implies $(ux,uy) \in R$. In that case, the condition $val_i(\rho_n) \leqRelation[i] [\rho]_{i}$ in \cref{theorem:characterization} must be replaced by $val_i(\rho_n) \leqRelation[i] [\rho_{\geq n}]_{i}$. Moreover, deciding whether a relation $R$ is prefix-independent (resp.\ prefix-linear) is \nlComplete{}. All proofs and details are provided in \cref{app:carac}.

\section{Conclusion}
\label{section:conclusion-future-work}

In this work, we have introduced a general framework for defining players' preferences via $\omega$-automatic preference relations instead of fixed reward functions. It subsumes several classical settings, including the Boolean setting with $\omega$-regular objectives and quantitative models such as min-cost-reachability, as well as combinations of several such objectives.

In this framework, we have studied the complexity of four fundamental problems related to NEs, notably with a novel use of the \pcp{} game setting recently introduced in \cite{SPE-NCRS-quanti-parity-P1CP2}. This approach enables a broader applicability and more reusable results. It contrasts sharply with most existing work that is typically focused on specific reward functions.

We hope that our framework will serve as a basis for exploring additional problems such as decision problems about subgame perfect equilibria (which are NEs in any subgame of a game~\cite{Osborne1994}), or the rational synthesis problem as studied in \cite{KupfermanPV16}. New results will lead to the development of general and modular solutions for a wider class of questions in the theory of infinite games played on graphs.

\bibliography{bibliography}

\appendix

\section{Hardness of \texorpdfstring{\cref{theorem:nash-checking-pspace}}{Theorem~\ref{theorem:nash-checking-pspace}}}
\label{app:hardness-nash-checking}

\nashchecking*

\begin{proof}[Proof of \cref{theorem:nash-checking-pspace}, hardness]
    We use a reduction from the membership problem for linear bounded deterministic Turing machines (LBTMs), also called Linear Bounded Automata~\cite{sipser13}, known to be \pspaceComplete{}~\cite{books-lba-pspace-complete-GareyJ79}, to the complement of the NE checking problem. Recall that an LBTM $T$ has a limited memory such that the tape head must remain in cells containing the input word $w$. The membership problem asks whether $w$ is accepted by $T$.

    Suppose that $T$ is defined on a set of states $Q = \{q_1,\dots,q_m\}$ with $q_1$ its initial state and containing $q_{accept}, q_{reject}$, the word $w$ is equal to $w_1w_2\dots w_n$, and the alphabet is $\Sigma = \{a_1,\dots, a_s\}$. We construct a $(n+1)$-player game $\game = (V,E,(V_i)_{1\leq i\leq n+1},(\leqRelationStrict[i])_{1 \leq i\leq n+1})$ as illustrated in \cref{fig:reduction-nash-checking-game}, and $n+1$ Mealy machines $(\machine{i})_{1\leq i\leq n+1}$ in the following way. Intuitively, each input tape cell is associated with a player and will be simulated by a Mealy machine for this player, encoding actions based on the current configuration of the LBTM. We begin with the description of the game:
    \begin{itemize}
        \item $V = (Q \times \{1,\dots,n\}) \cup \{v_{init},\#,v_{end}\}$, with three fresh vertices $v_{init}$, $\#$, and $v_{end}$,
        \item $v_{init}$ is the initial vertex and has two successors: $\#$ and $(q_1,1)$,
        \item $\#$ and $v_{end}$ are sink vertices with a self-loop,
        \item $V_{n+1} = \{v_{init},\#,v_{end}\}$ and for each $i \in \{1,\dots,n\}$, $V_i = Q \times \{i\}$,
        \item $(q,i) \rightarrow (q',j)$ for all $q,i,q',j$ such that $q \not\in \{q_{accept},q_{reject}\}$, otherwise $(q,i) \rightarrow v_{end}$,
        \item for each $i \in \{1,\dots,n\}$, the preference relation $\leqRelationStrict[i]$ is empty, therefore accepted by a one-state \DPW{},
        \item the preference relation $\leqRelationStrict[n+1]$ is defined such that $v_{init} \#^\omega \leqRelationStrict[n+1] v_{init} x$ if and only if $x$ contains $\{q_{accept}\} \times \{1,\dots,n\}$. More precisely, we define $\mathord{\leqRelationStrict[n+1]} = L^c \times L$ where $L = \{v_{init}x \in V^\omega \mid x \text{ visits } \{q_{accept}\} \times \{1,\dots,n\} \}$ and $L^c = V^\omega \setminus L$.
    \end{itemize}
    Finally, let us show that $\leqRelationStrict[n+1]$ is accepted by a \DBW{} of constant size. Both languages $L$ and $L^c$ are accepted by a \DBW{} as illustrated in \cref{fig:reduction-nash-checking-pref-n+1}. Thus, the relation $L^c \times L$ is accepted by a generalized \DBWs{} with a conjunction of two B\"uchi conditions of constant size, thus by a \DBW{} which is in particular a \DPW{}.

    Let us now describe the Mealy machines. The Mealy machine of player~$n+1$ defines the memoryless strategy $\sigma_{n+1}$ such that $\sigma_{n+1}(v_{init}) = \#$. For player~$i \in \{1,\dots,n\}$, the Mealy machine $\machine{i}$ simulates the LBTM $T$ when the tape head scans the $i$-th tape cell:
    \begin{itemize}
        \item the set of memory states is $M = \Sigma \cup \{a_{init},a_{end}\}$, with fresh symbols $a_{init}, a_{end} \not\in \Sigma$, where $a_{init}$ is the initial memory state,
        \item the update function $\alpha_U : M \times V \rightarrow M$ indicates which symbol is written by $T$ in the $i$-th cell when the tape head scans this cell; it is defined as follows:
        \begin{itemize}
            \item from the state $a \neq a_{init}, a_{end}$ and the vertex $(q,j) \in V$, stay on state $a$ if $j \neq i$, or go to $a'$ if $j = i$ and $(q,a) \rightarrow (q',a',Move)$ is a transition of $T$, or go to $a_{end}$ if $q \in \{q_{accept},q_{reject}\}$.
            \item from the state $a_{end}$, the only available vertex is $v_{end}$; thus stay on $a_{end}$.
            \item from the state $a_{init}$, the only available vertex is $v_{init}$; thus go to state $a$ such that $w_i = a$.
        \end{itemize}
        \item the next-move function $\alpha_N : M \times V_i \rightarrow V$ (recall that $V_i = Q \times \{i\}$) indicates the move ($R$ or $L$) of $T$ when the tape head scans the $i$-th cell; it is defined as follows. From the state $a \neq a_{init}, a_{end}$ and vertex $(q,i)$, output $(q',j)$ such that $(q,a) \rightarrow (q',a',Move)$ is a transition of $T$ and $j = \min(n,i+1)$ if $Move = R$, $j = \max(0,i-1)$ if $Move = L$.
    \end{itemize}
    \begin{figure}
        \centering
        \begin{tikzpicture}[x=0.75pt,y=0.75pt,yscale=-1,scale=.85,every node/.style={scale=.85}]
            \draw (143.29,138.11) -- (176.84,138.11);
            \draw [shift={(179.84,138.11)}, rotate=180,fill={rgb,255:red,0;green,0;blue,0},line width=0.08,draw opacity=0] (5.36,-2.57) -- (0,0) -- (5.36,2.57) -- (3.56,0) -- cycle;
            \draw (127.41,138.11) .. controls (127.41,133.48) and (130.96,129.73) .. (135.35,129.73) .. controls (139.74,129.73) and (143.29,133.48) .. (143.29,138.11) .. controls (143.29,142.73) and (139.74,146.48) .. (135.35,146.48) .. controls (130.96,146.48) and (127.41,142.73) .. (127.41,138.11) -- cycle;
            \draw (179.84,138.11) .. controls (179.84,133.48) and (183.4,129.73) .. (187.79,129.73) .. controls (192.17,129.73) and (195.73,133.48) .. (195.73,138.11) .. controls (195.73,142.73) and (192.17,146.48) .. (187.79,146.48) .. controls (183.4,146.48) and (179.84,142.73) .. (179.84,138.11) -- cycle;
            \draw (269.18,138.52) .. controls (269.18,133.9) and (272.74,130.15) .. (277.13,130.15) .. controls (281.51,130.15) and (285.07,133.9) .. (285.07,138.52) .. controls (285.07,143.15) and (281.51,146.9) .. (277.13,146.9) .. controls (272.74,146.9) and (269.18,143.15) .. (269.18,138.52) -- cycle;
            \draw (271.39,138.52) .. controls (271.39,135.19) and (273.96,132.48) .. (277.13,132.48) .. controls (280.29,132.48) and (282.86,135.19) .. (282.86,138.52) .. controls (282.86,141.86) and (280.29,144.56) .. (277.13,144.56) .. controls (273.96,144.56) and (271.39,141.86) .. (271.39,138.52) -- cycle;
            \draw (195.73,138.11) -- (266.18,138.5);
            \draw [shift={(269.18,138.52)}, rotate=180.33,fill={rgb,255:red,0;green,0;blue,0},line width=0.08,draw opacity=0] (5.36,-2.57) -- (0,0) -- (5.36,2.57) -- (3.56,0) -- cycle;
            \draw (184.39,129.77) .. controls (181.46,108.48) and (195.74,108.28) .. (192.56,127.91);
            \draw [shift={(191.99,130.81)}, rotate=282.86,fill={rgb,255:red,0;green,0;blue,0},line width=0.08,draw opacity=0] (5.36,-2.57) -- (0,0) -- (5.36,2.57) -- (3.56,0) -- cycle;
            \draw (274.12,130.19) .. controls (271.19,108.89) and (285.48,108.7) .. (282.29,128.32);
            \draw [shift={(281.72,131.23)}, rotate=282.86,fill={rgb,255:red,0;green,0;blue,0},line width=0.08,draw opacity=0] (5.36,-2.57) -- (0,0) -- (5.36,2.57) -- (3.56,0) -- cycle;
            \draw (109.26,138.48) -- (124.41,138.17);
            \draw [shift={(127.41,138.11)}, rotate=178.81,fill={rgb,255:red,0;green,0;blue,0},line width=0.08,draw opacity=0] (5.36,-2.57) -- (0,0) -- (5.36,2.57) -- (3.56,0) -- cycle;
            \draw (375.29,128.61) -- (426.84,128.61);
            \draw [shift={(429.84,128.61)}, rotate=180,fill={rgb,255:red,0;green,0;blue,0},line width=0.08,draw opacity=0] (5.36,-2.57) -- (0,0) -- (5.36,2.57) -- (3.56,0) -- cycle;
            \draw (359.41,128.61) .. controls (359.41,123.98) and (362.96,120.23) .. (367.35,120.23) .. controls (371.74,120.23) and (375.29,123.98) .. (375.29,128.61) .. controls (375.29,133.23) and (371.74,136.98) .. (367.35,136.98) .. controls (362.96,136.98) and (359.41,133.23) .. (359.41,128.61) -- cycle;
            \draw (429.84,128.61) .. controls (429.84,123.98) and (433.4,120.23) .. (437.79,120.23) .. controls (442.17,120.23) and (445.73,123.98) .. (445.73,128.61) .. controls (445.73,133.23) and (442.17,136.98) .. (437.79,136.98) .. controls (433.4,136.98) and (429.84,133.23) .. (429.84,128.61) -- cycle;
            \draw (432.06,128.61) .. controls (432.06,125.27) and (434.62,122.56) .. (437.79,122.56) .. controls (440.95,122.56) and (443.52,125.27) .. (443.52,128.61) .. controls (443.52,131.94) and (440.95,134.65) .. (437.79,134.65) .. controls (434.62,134.65) and (432.06,131.94) .. (432.06,128.61) -- cycle;
            \draw (434.39,120.27) .. controls (431.46,98.98) and (445.74,98.78) .. (442.56,118.41);
            \draw [shift={(441.99,121.31)}, rotate=282.86,fill={rgb,255:red,0;green,0;blue,0},line width=0.08,draw opacity=0] (5.36,-2.57) -- (0,0) -- (5.36,2.57) -- (3.56,0) -- cycle;
            \draw (341.26,128.98) -- (356.41,128.67);
            \draw [shift={(359.41,128.61)}, rotate=178.81,fill={rgb,255:red,0;green,0;blue,0},line width=0.08,draw opacity=0] (5.36,-2.57) -- (0,0) -- (5.36,2.57) -- (3.56,0) -- cycle;
            \draw (442.18,153.52) .. controls (442.18,148.9) and (445.74,145.15) .. (450.13,145.15) .. controls (454.51,145.15) and (458.07,148.9) .. (458.07,153.52) .. controls (458.07,158.15) and (454.51,161.9) .. (450.13,161.9) .. controls (445.74,161.9) and (442.18,158.15) .. (442.18,153.52) -- cycle;
            \draw (444.39,153.52) .. controls (444.39,150.19) and (446.96,147.48) .. (450.13,147.48) .. controls (453.29,147.48) and (455.86,150.19) .. (455.86,153.52) .. controls (455.86,156.86) and (453.29,159.56) .. (450.13,159.56) .. controls (446.96,159.56) and (444.39,156.86) .. (444.39,153.52) -- cycle;
            \draw (457.62,150.19) .. controls (475.79,145.34) and (477.06,162.46) .. (459.28,157.91);
            \draw [shift={(456.64,157.12)}, rotate=18.85,fill={rgb,255:red,0;green,0;blue,0},line width=0.08,draw opacity=0] (5.36,-2.57) -- (0,0) -- (5.36,2.57) -- (3.56,0) -- cycle;
            \draw (373.85,132.98) -- (440.32,147.9);
            \draw [shift={(443.25,148.56)}, rotate=192.65,fill={rgb,255:red,0;green,0;blue,0},line width=0.08,draw opacity=0] (5.36,-2.57) -- (0,0) -- (5.36,2.57) -- (3.56,0) -- cycle;
            \draw (148.2,126.23) node [anchor=north west,inner sep=0.75pt,font=\footnotesize,align=left] {$v_{init}$};
            \draw (209.78,126.4) node [anchor=north west,inner sep=0.75pt,font=\footnotesize,align=left] {$q_{accept},*$};
            \draw (145.16,98.15) node [anchor=north west,inner sep=0.75pt,font=\small,align=left] {$v$, for $v\neq (q_{accept},j)$};
            \draw (388.8,116.73) node [anchor=north west,inner sep=0.75pt,font=\footnotesize,align=left] {$v_{init}$};
            \draw (368.23,132.93) node [anchor=north west,inner sep=0.75pt,font=\footnotesize,rotate=-12.8,align=left] {$v$, for $v\neq v_{init}$};
            \draw (327.39,94) node [anchor=north west,inner sep=0.75pt,align=left] {$L^c$};
            \draw (99.89,104.5) node [anchor=north west,inner sep=0.75pt,align=left] {$L$};
            \draw (274.16,97.28) node [anchor=north west,inner sep=0.75pt,font=\small,align=left] {$V$};
            \draw (473.66,147.48) node [anchor=north west,inner sep=0.75pt,font=\footnotesize,align=left] {$V$};
            \draw (394.56,90.15) node [anchor=north west,inner sep=0.75pt,font=\small,align=left] {$v$, for $v \neq (q_{accept},j)$};
        \end{tikzpicture}
        \caption{The \DBWs{} accepting the languages $L$ and $L^c$ used to define the preference relation $\leqRelationStrict[n+1]$ in the game of \cref{fig:reduction-nash-checking-game}.}
        \label{fig:reduction-nash-checking-pref-n+1}
    \end{figure}

    By the construction given above, we know that $(\machine{i})_i$ produces the outcome $v_{init} \#^\omega$. As player~$n+1$ controls $v_{init}$, no deviation of player~$i \in \{1,\dots,n\}$ modifies this outcome, i.e., these deviations are not profitable. However, player~$n+1$ has exactly one possible deviation, to go from $v_{init}$ to $(q_1,1)$, thus leading to some play $v_{init} \pi$. This deviation is profitable if $v_{init} \#^\omega \leqRelationStrict[n+1] v_{init} \pi$, and we know by construction that $v_{init} \#^\omega \leqRelationStrict[n+1] v_{init} \pi$ if and only if $\pi$ visits the state $q_{accept}$, i.e., $T$ accepts $w$ (as $\pi$ represents an execution of the LBTM $T$ thanks to the definition of $\machine{1},\dots,\machine{n}$). This shows the correctness of the reduction. Moreover, this is a polynomial reduction. Therefore, the NE checking problem is \pspace{}-hard.
\end{proof}

\section{Correctness of the \texorpdfstring{\pcp{}}{P1CP2} Game}
\label{app:PCPgame}

\correspondence*
\addvspace{\topsep}

Before proving this result, we study in more details the correspondence between strategies in $\game$ and strategies in $\pcp(\game)$. We first note that if a strategy $\tau_{\proveroneTight}$ of \proverone{} is observation-based, we get that for all plays $\pi, \pi'$ starting at $s_0$ and consistent with $\tau_{\proveroneTight}$,
\begin{align}
\label{eq:obsBased}
\projOne{\pi} = \projOne{\pi'} \text{ and } \projE{\pi} = \projE{\pi'}.
\end{align}
Given a play $\rho$ of $\game$ starting at $v_0$, we call \emph{simulation of $\rho$} any strategy $\tau_{\proveroneTight}$ of \proverone{} such that $\rho = \projOne{\pi}$ for all plays $\pi$ starting at $s_0$ and consistent with $\tau_{\proveroneTight}$.

\begin{lemma}\label{lem:existence-simulation-prover-one}
  Let $\rho$ be a play of $\game$ starting at $v_0$. There always exists a simulation $\tau_{\proveroneTight}$ of $\rho$ that is observation-based.
\end{lemma}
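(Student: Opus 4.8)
The plan is to build $\tau_{\proveroneTight}$ \emph{explicitly as a function of the observation alone}, so that it is observation-based by construction, and then to check that every play consistent with it has first-component projection exactly $\rho$.

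First I would record the shape of the observation along a play. Since every play of $\pcp(\game)$ alternates between vertices of $S_{\proveroneTight}$ and vertices of $S_{\challenger} \cup S_{\provertwoTight}$, and since $Obs$ returns the first component on $S_{\proveroneTight}$-vertices and the proposed edge on the others, for a history $h = \pi_0\pi_1\cdots\pi_{2k}$ ending in a \proverone{}-vertex the word $Obs(h)$ is alternating of the form $w_0\,(a_0)\,w_1\,(a_1)\cdots w_k$ with $w_\ell \in V$ and $a_\ell \in E$; in particular its length $2k+1$ determines the index $k$, and its last letter is $\projOne{\last{h}}$.

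Next, for each $k \geq 0$ I define the \emph{canonical observation} $o_k := \rho_0\,(\rho_0,\rho_1)\,\rho_1\,(\rho_1,\rho_2)\cdots(\rho_{k-1},\rho_k)\,\rho_k$, which is well formed because $\rho$ is a play, so each $(\rho_\ell,\rho_{\ell+1}) \in E$. I then set $\tau_{\proveroneTight}(h) := (\rho_k,\rho_{k+1})$ whenever $Obs(h) = o_k$ for some $k$ (this is unambiguous, since the $o_k$ have pairwise distinct lengths, and it is a legal action of \proverone{} because then $\projOne{\last{h}} = \rho_k$ as noted above), and $\tau_{\proveroneTight}(h) := (\projOne{\last{h}}, w)$ for some fixed successor $w$ of $\projOne{\last{h}}$ in $\arena$ in every other case (such a $w$ exists since every vertex of $\arena$ has a successor). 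Since the whole definition depends only on $Obs(h)$, the strategy $\tau_{\proveroneTight}$ is observation-based.

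Finally I would prove by induction on $k$ that for every play $\pi = \pi_0\pi_1\cdots$ from $s_0$ consistent with $\tau_{\proveroneTight}$ we have $\pi_{2k} \in S_{\proveroneTight}$, $\projOne{\pi_{2k}} = \rho_k$, and $Obs(\pi_0\cdots\pi_{2k}) = o_k$. The base case is immediate from $\pi_0 = s_0 = (v_0,\bot,v_0,(q_i^0)_i)$ with $v_0 = \rho_0$. For the inductive step, the hypothesis gives $Obs(\pi_0\cdots\pi_{2k}) = o_k$, hence by consistency \proverone{} plays $\tau_{\proveroneTight}(\pi_0\cdots\pi_{2k}) = (\rho_k,\rho_{k+1})$, so $\pi_{2k+1}$ has last component $(\rho_k,\rho_{k+1})$ and observation $(\rho_k,\rho_{k+1})$. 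Inspecting the three cases of $\Delta$ for a move of \challenger{} or \provertwo{} out of $\pi_{2k+1}$, in every case $\pi_{2k+2} \in S_{\proveroneTight}$ and its first component is the vertex $\rho_{k+1}$ proposed by \proverone{} (not the possibly different deviation vertex $u$), and its observation is $\rho_{k+1}$; so $Obs(\pi_0\cdots\pi_{2k+2}) = o_k\,(\rho_k,\rho_{k+1})\,\rho_{k+1} = o_{k+1}$. Consequently $\projOne{\pi} = \rho_0\rho_1\rho_2\cdots = \rho$, i.e., $\tau_{\proveroneTight}$ is a simulation of $\rho$.

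There is no deep obstacle here; the only points requiring care are (i) that $\tau_{\proveroneTight}$ is a genuinely well-defined legal function of the observation — guaranteed because the $o_k$ have distinct lengths and the last letter of the observation supplies the required source vertex — and (ii) the fact that the first component of $\pi_{2k+2}$ is the vertex proposed by \proverone{} rather than the deviation vertex, which is precisely what the transition function of the \pcp{} game yields, so \challenger{}'s and \provertwo{}'s moves never derail the construction of $\rho$.
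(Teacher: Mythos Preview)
Your proof is correct and follows essentially the same approach as the paper: define $\tau_{\proveroneTight}$ to propose $(\rho_k,\rho_{k+1})$ whenever the history ``looks like'' the first $k$ steps of $\rho$, and an arbitrary legal edge otherwise. The paper phrases the matching condition as $\projOne{hs} = \rho_0\cdots\rho_k$ rather than $Obs(hs) = o_k$, but since in any history from $s_0$ the observed edges are exactly the pairs of consecutive first components, these two conditions are equivalent; your explicit induction on $k$ just spells out what the paper leaves implicit.
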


\begin{proof}
    Consider a play $\rho = \rho_0\rho_1 \ldots$ of $\game$ starting at $v_0$. The required strategy $\tau_{\proveroneTight}$ is constructed as follows. For all histories $hs$ such that $s \in S_{\proveroneTight}$, we define $\tau_{\proveroneTight}(hs) = (\rho_{k},\rho_{k+1})$ if $\projOne{hs} = \rho_0\dots\rho_{k}$ is a prefix of $\rho$. Otherwise, we define $\tau_{\proveroneTight}(hs)$ as an arbitrary action $e \in E$ such that $\tau_{\proveroneTight}(h's') = e$ for all $h's'$ with $Obs(h's') = Obs(hs)$. By construction, $\tau_{\proveroneTight}$ is observation-based because whenever $Obs(h's') = Obs(hs)$, then $\projOne{h's'} = \projOne{hs}$.
    Moreover, for any play $\pi$ starting at $s_0$ and consistent with $\tau_{\proveroneTight}$, we have $\rho = \projOne{\pi}$.
\end{proof}

Let $\tau_{\proveroneTight}$ be an observation-based strategy of \proverone{} and $\rho$ be the play of $\game$ of which it is the simulation.
\begin{itemize}
    \item We say that a strategy $\tau_{\challenger}$ of \challenger{} is \emph{$\tau_{\proveroneTight}$-accepting} if for all histories $hs$ starting at $s_0$ and consistent with $\tau_{\proveroneTight}$, if $\projE{s} = (v,v')$, then $\tau_{\challenger}(hs) = v'$. In this case, the plays $\pi$ consistent with $\tau_{\proveroneTight}$ and $\tau_{\challenger}$ all satisfy $\projDev{\pi} = \bot$.

    \item Let $hs$ be a history starting at $s_0$ and consistent with $\tau_{\proveroneTight}$ such that $\projDev{h} = \bot$ and $\projDev{hs} = j$. Then, all plays $\pi$ consistent with $\tau_{\proveroneTight}$ having $hs$ as prefix, satisfy $\projDev{\pi} = j$ in addition to \eqref{eq:obsBased}. Thus, only the third component of the vertices of those $\pi$ can vary.\footnote{The fourth component equal to $(q_i)_i$ derive from the first and third components.} Therefore, when $\tau_{\proveroneTight}$ is fixed:
    \begin{itemize}
        \item Any strategy $\tau_{\challenger}$ is equivalent to a strategy $\sigma_j$ of player~$j$ such that $\tau_{\challenger}(h's') = \sigma_{j}(\projTwo{h's'})$ for all histories $h's'$ ending in $s' \in S_{\challenger}$ and having $hs$ as prefix. It is called a \emph{$\sigma_j$-deviation from $\tau_{\proveroneTight}$}.
        \item Any strategy $\tau_{\provertwoTight}$ is equivalent to a strategy profile $\sigma_{-j} = (\sigma_i)_{i \neq j}$ such that $\tau_{\provertwoTight}(h's') = \sigma_{i}(\projTwo{h's'})$ for all histories $h's'$ ending in $s' \in S_{\provertwoTight}$ with $\projOne{s} \in V_i$ and having $hs$ as prefix. It is called a \emph{$\sigma_{-j}$-punishment} strategy.
    \end{itemize}
\end{itemize}

\begin{proof}[Proof of \cref{theorem:correspondence}]
    Let us suppose that there exists an NE $\sigma = (\sigma_i)_{i\in\Players}$ from $v_0$ in $\game$ and let $\rho = \outcomefrom{\sigma}{v_0}$ be its outcome. By \cref{lem:existence-simulation-prover-one}, there exists an observation-based strategy $\tau_{\proveroneTight}$ of \proverone{} such that for any play $\pi$ starting at $s_0$ and consistent with $\tau_{\proveroneTight}$, we have $\projOne{\pi} = \rho$. Let $\tau_{\challenger}$ be a strategy of \challenger{}.
    \begin{itemize}
        \item If $\tau_{\challenger}$ is $\tau_{\proveroneTight}$-accepting, then we get $\projDev{\pi} = \bot$ for all plays $\pi$ starting at $s_0$ and consistent with both $\tau_{\proveroneTight}$ and $\tau_{\challenger}$, and
        any strategy $\tau_{\provertwoTight}$, i.e., $\outcomefrom{\tau_{\proveroneTight},\tau_{\challenger},\tau_{\provertwoTight}}{s_0} \in W_{acc}$.
        \item Otherwise, $\tau_{\challenger}$ is a $\sigma'_j$-deviation from $\tau_{\proveroneTight}$, for a strategy $\sigma'_j$ of some player~$j$. Consequently, we define $\tau_{\provertwoTight}$ from the strategy profile $\sigma_{-j} = (\sigma_i)_{i \neq j}$ such that $\tau_{\provertwoTight}$ is a $\sigma_{-j}$-punishment strategy. Therefore, for $\pi' = \outcomefrom{\tau_{\proveroneTight},\tau_{\challenger},\tau_{\provertwoTight}}{s_0}$, we get $\projTwo{\pi'} = \outcomefrom{\sigma'_j,\sigma_{-j}}{v_0}$. As $\sigma$ is an NE by hypothesis, we have $\projOne{\pi'} = \rho \not\leqRelationStrict[j] \projTwo{\pi'}$, meaning that $\pi' \in W_{dev}$.
    \end{itemize}

    We now suppose that in $\pcp(\game)$, there exists an observation-based strategy $\tau_{\proveroneTight}$ of \proverone{} such that for all strategies $\tau_{\challenger}$ of \challenger{}, there is a strategy $\tau_{\provertwoTight}$ of \provertwo{} such that $ \outcomefrom{\tau_{\proveroneTight},\tau_{\challenger},\tau_{\provertwoTight}}{s_0} \in W_{\proverone\provertwoTight}$. Let us explain how to deduce an NE $\sigma = (\sigma_i)_{i}$ from $v_0$ in $\game$. We choose a strategy $\tau_{\challenger}$ that is $\tau_{\proveroneTight}$-accepting, which gives by hypothesis a strategy $\tau_{\provertwoTight}$ of \provertwo. Observe that the resulting outcome $\pi = \outcomefrom{\tau_{\proveroneTight},\tau_{\challenger},\tau_{\provertwoTight}}{s_0}$ is such that $\projDev{\pi} = \bot$, so $\pi \in W_{acc}$. Let us define $\rho = \projOne{\pi}$, it will be the outcome of the NE $\sigma$ we want to construct.

    We now partially define the strategy profile $\sigma$ in a way to produce the outcome $\rho$. It remains to define $\sigma_{i}(hv)$ for all $i \in \Players$ and all histories $hv$ which are not prefix of $\rho$.

    Consider a strategy $\sigma'_j$ for player~$j$ such that there exists a history $hv$, with $v \in V_j$, prefix of $\rho$, but where $hv \cdot \sigma'_j(hv)$ is not. Then, we can consider $\tau_{\challenger}$ that is a $\sigma'_j$-deviation from $\tau_{\proveroneTight}$. By hypothesis, there exists a strategy $\tau_{\provertwoTight}$ such that $\pi' = \outcomefrom{\tau_{\proveroneTight},\tau_{\challenger},\tau_{\provertwoTight}}{s_0} \in W_{\proverone,\provertwoTight}$. Since $\projDev{\pi'} = j$, it must be the case that $\pi' \in W_{dev}$, i.e., $\projOne{\pi'} \not\leqRelationStrict[j] \projTwo{\pi'}$. Hence, we can complete the definition of $\sigma$ outside $\rho$ by $\sigma_{-j}$ thanks to the strategy $\tau_{\provertwoTight}$ seen as a $\sigma_{-j}$-punishment strategy. It follows that $\projTwo{\pi'} = \outcomefrom{\sigma'_j,\sigma_{-j}}{v_0}$. Therefore, by \eqref{eq:obsBased}, $\rho = \projOne{\pi} = \projOne{\pi'} \not\leqRelationStrict[j] \projTwo{\pi'} = \outcomefrom{\sigma'_j,\sigma_{-j}}{v_0}$, showing that $\outcomefrom{\sigma'_j,\sigma_{-j}}{v_0}$ is not a profitable deviation. We conclude that the constructed $\sigma$ is an NE.
\end{proof}

\section{NE Existence Problem}
\label{app:nash-existence-tools}

\NEexistenceGeneral*

\begin{proof}[Proof of \cref{theorem:NEexistenceGeneral}, membership]
    By \cref{theorem:correspondence}, deciding whether in $\game$, there exists an NE from $v_0$ reduces to deciding whether in $\pcp(\game)$, there exists an observation-based strategy $\tau_{\proveroneTight}$ of \proverone{} such that for all strategies $\tau_{\challenger}$ of \challenger{}, there is a strategy $\tau_{\provertwoTight}$ of \provertwo{} such that $\outcomefrom{\tau_{\proveroneTight},\tau_{\challenger},\tau_{\provertwoTight}}{s_0} \in W_{\proverone\provertwoTight}$. In~\cite{SPE-NCRS-quanti-parity-P1CP2}, the authors solve the problem they study by solving a similar three-player game with imperfect information. They proceed with the following three steps.
    \begin{enumerate}
        \item The winning condition is translated into a Rabin condition on the arena of the \pcp{} game.
        \item Then, the three-player game is transformed into a two-player zero-sum Rabin game with imperfect information.
        \item Finally, classical techniques to remove imperfect information are used to obtain a two-player zero-sum parity game with perfect information.
    \end{enumerate}

For our \pcp{} game, we already explain how to translate the winning condition $W_{\proverone\provertwo}$ into a Rabin condition (first step). With the notations of \cref{section:NE-existence}, this game has
\begin{itemize}
    \item a number of states in $\bigO{|V|^4 \cdot |\Players| \cdot \Pi_{i\in \Players}|\aut{A}_i|}$,
    \item a number of actions in $\bigO{|V|^2}$
    \item a number of Rabin pairs in $\bigO{\Sigma_{i\in \Players}d_i}$
\end{itemize}

Let us explain the second step. To fit the context of~\cite{SPE-NCRS-quanti-parity-P1CP2}, we need to slightly modify our \pcp{} game in three ways:
\begin{itemize}
\item We add dummy vertices such that the three players play in a turn-based way, i.e., according to the turn sequence $(\proverone \challenger \provertwo)^\omega$.
\item Plays and histories include actions, i.e., they are sequences alternating vertices and actions, such that the histories always end with a vertex.
\item Due to the modification of plays and histories, the observation function $Obs$ is extended to actions, i.e., $Obs(a) = a$ is $a \in A_{\proverone}$ and $Obs(a) = \#$ is $a \in A_{\challenger} \cup A_{\provertwo}$ (\proverone{} only observes his own actions, the other actions are not visible\footnote{The situation is a little different in~\cite{SPE-NCRS-quanti-parity-P1CP2} as \proverone{} can also observe the actions of \provertwo{}.}).
\end{itemize}
With these modifications, we can translate our \pcp{} game into a two-player zero-sum Rabin game with imperfect information exactly as in~\cite{SPE-NCRS-quanti-parity-P1CP2} (this is derived from a
transformation introduced in~\cite{Chatterjee014}, specifically adapted to the \pcp{} game context). The main idea is to merge the two Provers into a new single Prover \prover. Imperfect information is used to ensure that this merging does not grant too much knowledge to \prover. In order to let the new Prover have as much actions available as \provertwo{} and stay observation-based, his action set includes, in addition to the actions of $A_{\proverone}$, all functions from $S_{\provertwo}$ to $A_{\provertwo}$. The set of vertices and the Rabin objective remain the same as in the \pcp{} game. This yields a new game, called \pc{} game,
\begin{itemize}
    \item with the same number of vertices and Rabin pairs,
    \item but with an exponential blowup in the number of actions.
\end{itemize}
Details about the new game can be found in~\cite{SPE-NCRS-quanti-parity-P1CP2}. In a way to guarantee the equivalence between both \pcp{} and \pc{} games, some technical properties must be satisfied by the observation function $Obs$ of the \pcp{} game:
\begin{itemize}
    \item The function $Obs$ must be strongly player-stable and thus player-stable (see Corollary 4.1.4 and Lemma 4.1.5 of~\cite{SPE-NCRS-quanti-parity-P1CP2}). This means that for any two histories $h = s_0a_0s_1a_1 \ldots a_{k-1}s_k = h_1s_k$ and $h'= s'_0a'_0s'_1a'_1 \ldots a'_{k-1}s'_k = h'_1s'_k$ such that $s_0 = s'_0$ is the initial vertex of the \pcp{} game and $Obs(h_1) = Obs(h'_1)$, it must follow that each pair of vertices $s_\ell, s'_\ell$, $1 \leq \ell \leq k$, are owned by the same player and $Obs(s_k) = Obs(s'_k)$. It is easy to verify that $s_\ell, s'_\ell$ are owned by the same player as the players play in a turn-based way. Let us explain why $Obs(s_k) = Obs(s_k)$. If $s_{k-1}$ is owned by \proverone{}, then $Obs(s_{k-1}) = Obs(s'_{k-1}) = v$ for some $v$ and $a_{k-1} = Obs(a_{k-1}) = Obs(a'_{k-1}) =a'_{k-1} = (v,v')$ for some $(v,v')$. Hence, $Obs(s_k) = Obs(s'_k) = (v,v')$. If $s_{k-1}$ is owned by \challenger{} or \provertwo{}, then $Obs(s_{k-1}) = Obs(s'_{k-1}) = (v,v')$ for some $(v,v')$ and thus $Obs(s_{k}) = Obs(s'_{k}) = v'$.
    \item The function $Obs$ must be action-stable (see Lemma 3.4.3 of~\cite{SPE-NCRS-quanti-parity-P1CP2}). This means that given $\Delta(s_1,a) = s_2$ and $\Delta(s'_1,a') = s'_2$ such that $Obs(s_1) = Obs(s'_1)$, then
    \begin{itemize}
        \item if $a = a'$, then $Obs(s_2) = Obs(s'_2)$;
        \item if $a,a'$ are visible (i.e., are actions of \proverone) and $Obs(s_2) = Obs(s'_2)$, then $a = a'$.
    \end{itemize}
It is easy to verify this property when $Obs(s_1) = Obs(s'_1) = v$ for some $v$, as $s_1, s'_1$ are then owned by \proverone{} and his actions $a,a'$ are of the form $(v,v')$ for some $v'$. This is also easy to verify when $Obs(s_1) = Obs(s'_1) = (v,v')$ for some $(v,v')$ as $s_1, s'_1$ are then owned by \challenger{} or \provertwo{} whose actions $a,a'$ are not visible.
\end{itemize}
With these technical properties and arguments similar to those of~\cite{SPE-NCRS-quanti-parity-P1CP2}, one can carefully verify that our \pcp{} game can be transformed into an equivalent two-player zero-sum Rabin game with imperfect information exactly as in~\cite{SPE-NCRS-quanti-parity-P1CP2}.

In the third step, the latter \pc{} game is transformed into an equivalent two-player zero-sum parity game with perfect information. The translation is exactly the same as in~\cite{SPE-NCRS-quanti-parity-P1CP2}. The idea to get rid of the imperfect information is to apply standard game-theoretic techniques~\cite{lpar/ChatterjeeD10,RaskinCDH07,Reif84}, by (1) making the Rabin condition visible, that is, such that any two similarly observed plays agree on the winning condition, and by (2) applying the subset construction to recall the set of possible visited vertices, and letting them be observed. Done carefully, this leads to a parity game with
\begin{itemize}
    \item a number of vertices exponential in the number of vertices and in the number of pairs of the \pc{} game (thus of the \pcp{} game),
    \item a parity condition with a number of priorities linear in the number of vertices and the number of pairs of the \pc{} game.
\end{itemize}

Finally, the constructed parity game can be solved in time $n^{\bigO{log(d)}}$ where $n$ is its number of vertices and $d$ its number of priorities~\cite{calude-quasi-poly-parity-game}. It follows that the initial NE existence problem can be decided in time exponential in $|V|$, $\Pi_{_{i\in \Players}}|\aut{A}_i|$, and $\Sigma_{_{i\in \Players}}d_i$.
\end{proof}

\subparagraph{Proof of \cref{theorem:NEexistenceGeneral}, hardness and particular case of one-player games.} We are going to prove that the NE existence problem is \pspaceComplete{} for one-player games. In this way, we get the hardness of \cref{theorem:NEexistenceGeneral}.

Before that, we need to study a deeply connected problem: the problem of deciding the existence of a maximal element for a preference relation $\leqRelationStrict$ (we also consider the existence of a minimal element).

\begin{proposition}
\label{prop:maximum-dpw-pspace}
    The problem of deciding whether an $\omega$-automatic relation $\leqRelationStrict$, accepted by a \DPW{} $\aut{A}$, has a maximal (resp.\ minimal) element is \pspaceComplete{}.
\end{proposition}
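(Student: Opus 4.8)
The plan hinges on the following observation: a word $x$ is maximal for $\leqRelationStrict$ exactly when $(x,y) \notin \lang{\aut{A}}$ for every $y \in \Sigma^\omega$, that is, when $x$ does not belong to the projection $\mathsf{pr}_1(\leqRelationStrict) = \{x \in \Sigma^\omega \mid \exists y \in \Sigma^\omega,\ x \leqRelationStrict y\}$ of the relation onto its first component. Hence $\leqRelationStrict$ has a maximal element if and only if $\mathsf{pr}_1(\leqRelationStrict) \neq \Sigma^\omega$; symmetrically, $\leqRelationStrict$ has a minimal element iff $\mathsf{pr}_2(\leqRelationStrict) \neq \Sigma^\omega$, so it suffices to treat the maximal case.

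For \textbf{membership}, I would build from the \DPW{} $\aut{A} = (Q, q_0, \Sigma \times \Sigma, \delta, \alpha)$ a nondeterministic parity automaton $\aut{B}$ over $\Sigma$ recognizing $\mathsf{pr}_1(\leqRelationStrict)$: $\aut{B}$ keeps the states and priorities of $\aut{A}$, and on reading a letter $a \in \Sigma$ it guesses a letter $b \in \Sigma$ and follows $\delta(\cdot, (a,b))$, so $|\aut{B}| = |\aut{A}|$. Existence of a maximal element is then precisely \emph{non-universality} of $\aut{B}$. Since universality of nondeterministic parity (equivalently, B\"uchi) word automata is in \pspace{} --- one complements $\aut{B}$ into an automaton whose individual states have polynomial size and whose transitions are computable on the fly, then tests emptiness by guessing an accepting lasso, so the whole procedure runs in \npspace{}, which equals \pspace{} --- and \pspace{} is closed under complement, the maximal-element problem lies in \pspace{}. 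The minimal-element problem is handled the same way via $\mathsf{pr}_2$.

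For \textbf{hardness}, I would reduce from the non-universality problem for nondeterministic B\"uchi automata, which is \pspace{}-complete. Given such an automaton $N = (S, s_0, \Gamma, \Delta_N, F)$, let $\Sigma = \Gamma \uplus S$ and let $\aut{A}$ accept the relation $R = R_{\mathrm{run}} \cup R_{\mathrm{bad}}$, where $R_{\mathrm{run}}$ consists of the pairs $(x,y)$ such that $x \in \Gamma^\omega$ and $y \in S^\omega$ encodes an accepting run of $N$ on $x$ (that is, $y$ starts in $s_0$, respects $\Delta_N$ step by step, and visits $F$ infinitely often), and $R_{\mathrm{bad}}$ consists of the pairs $(x,y)$ with $x \notin \Gamma^\omega$. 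Reading both tracks synchronously, ``$y$ encodes a valid run of $N$ on $x$'' is a deterministic local safety check, ``$y$ visits $F$ infinitely often'' is a deterministic B\"uchi condition, and $R_{\mathrm{bad}}$ is realized by redirecting the automaton to an accepting sink as soon as an $S$-letter appears on the first track --- which is exactly the event making the $\Gamma^\omega$ safety check fail. Hence $R$ is recognized by a \DPW{} with a fixed number of priorities, of size polynomial in $|N|$ and computable in polynomial time. By construction $\mathsf{pr}_1(R) = \lang{N} \cup (\Sigma^\omega \setminus \Gamma^\omega)$, so the maximal elements of $\leqRelationStrict$ are exactly the words of $\Gamma^\omega \setminus \lang{N}$; thus $\leqRelationStrict$ has a maximal element iff $\lang{N} \neq \Gamma^\omega$. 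The minimal-element case follows by applying the same construction to the transposed relation $\{(y,x) \mid (x,y) \in R\}$, which is again \DPW{}-recognizable of polynomial size.

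The main obstacle is the \pspace{} membership: one must keep the complementation of the projected automaton $\aut{B}$ within polynomial space, which works because a single state of the complement has polynomial size and its transitions are computable on the fly, so emptiness can be tested by a \npspace{} lasso-guessing procedure. A secondary subtlety lies in the reduction: inputs $x \notin \Gamma^\omega$ must not create spurious maximal elements, which is precisely why the $R_{\mathrm{bad}}$ component is included and folded into the \DPW{} as an accepting sink, keeping the automaton deterministic and of polynomial size.
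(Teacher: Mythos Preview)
Your proof is correct and follows essentially the same route as the paper: for membership you project onto the first coordinate and test universality of the resulting \NPW{}, and for hardness you reduce from \NBW{} non-universality by encoding accepting runs on the second track. Your explicit inclusion of the component $R_{\mathrm{bad}}$ (forcing every $x \notin \Gamma^\omega$ into the projection) is a detail the paper's write-up glosses over, and it is precisely what is needed to prevent such words from becoming spurious maximal elements, so your version is in fact the more careful one.
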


\begin{proof}
    We mainly focus on the existence of a maximal element. We briefly discuss the existence of a minimal element at the end of the proof, as the arguments are similar. We begin with the \pspace{}-membership, by studying the non-existence of a maximal element, i.e., $\forall x ~ \exists y, ~ x \leqRelationStrict y$. The algorithm constructs a nondeterministic parity automaton $\aut{B}$ by taking the projection of $\aut{A}$ to the first component of its labels. This automaton is of polynomial size and accepts the set $\{x \in \Sigma^\omega \mid \exists y \in \Sigma^\omega, x \leqRelationStrict y\}$. Then, one checks whether $\aut{B}$ is universal, with an algorithm in \pspace{}~\cite{handbook-of-model-checking-orna-kupferman}.

    \medskip

    Let us shift to the \pspace{}-hardness. We use a reduction from the non-universality problem of nondeterministic Büchi automata (\NBWs{}) which is \pspaceComplete{}~\cite{handbook-of-model-checking-orna-kupferman}. Given an \NBW{} $\aut{A} = (Q,\Sigma,q_0,\delta,F)$, we construct a \DPW{} $\aut{A}'$, depicted in \cref{fig:reduction-pspace-maximum-dpw}, as follows:
    \begin{itemize}
        \item $\aut{A}'$ has the same states and initial state as $\aut{A}$ and an extra state $\bot$,
        \item its alphabet $\Sigma'$ is equal to $Q \cup \Sigma$, where $Q$ and $\Sigma$ are supposed disjoint,
        \item its transition function $\delta'$ is defined as $q' = \delta'(q,(a,q'))$ whenever $q' \in \delta(q,a)$, and $\bot = \delta'(q,(x,y))$ whenever $q = \bot$, or $x \in Q$, or $y \in \Sigma$,
        \item its priority function $\alpha$ uses two priorities as follows: $\alpha(q) = 2$ if $q \in F$, otherwise $\alpha(q) = 1$.
    \end{itemize}
    As $\delta'$ is a function by construction, $\aut{A}'$ is deterministic. We denote by $\leqRelationStrict'$ the relation accepted by $\aut{A}'$. By construction, we have $x \leqRelationStrict' y $ if and only if, $(x,y) \in \Sigma^\omega \times Q^\omega$ implies that $y$ is an accepting run of $\aut{A}$ labeled by $x$. Therefore, $x \in \Sigma^\omega$ is a maximal element if and only if $x \not\in \lang{\aut{A}}$. We thus get the correctness of the polynomial reduction.
    \begin{figure}
        \centering
        \begin{tikzpicture}[x=0.75pt,y=0.75pt,yscale=-1]
            \draw (247.07,114.79) .. controls (247.07,109.26) and (251.55,104.79) .. (257.07,104.79) .. controls (262.59,104.79) and (267.07,109.26) .. (267.07,114.79) .. controls (267.07,120.31) and (262.59,124.79) .. (257.07,124.79) .. controls (251.55,124.79) and (247.07,120.31) .. (247.07,114.79) -- cycle ;
            \draw (311.27,107.59) -- (344.67,107.59) ;
            \draw [shift={(347.67,107.59)},rotate=180,fill={rgb,255:red,0; green,0;blue,0},line width=0.08,draw opacity=0] (5.36,-2.57) -- (0,0) -- (5.36,2.57) -- (3.56,0) -- cycle    ;
            \draw (231.77,115.09) -- (244.07,114.85) ;
            \draw [shift={(247.07,114.79)},rotate=180,fill={rgb,255:red,0;green,0;blue,0},line width=0.08,draw opacity=0] (5.36,-2.57) -- (0,0) -- (5.36,2.57) -- (3.56,0) -- cycle    ;
            \draw [fill={rgb,255:red,0;green,0;blue,0},fill opacity=0.03,dash pattern={on 4.5pt off 4.5pt}] (238.79,99.99) .. controls (238.79,91.78) and (245.45,85.12) .. (253.66,85.12) -- (360.09,85.12) .. controls (368.3,85.12) and (374.96,91.78) .. (374.96,99.99) -- (374.96,151.73) .. controls (374.96,159.94) and (368.3,166.6) .. (360.09,166.6) -- (253.66,166.6) .. controls (245.45,166.6) and (238.79,159.94) .. (238.79,151.73) -- cycle ;
            \draw (291.27,107.59) .. controls (291.27,102.06) and (295.75,97.59) .. (301.27,97.59) .. controls (306.79,97.59) and (311.27,102.06) .. (311.27,107.59) .. controls (311.27,113.11) and (306.79,117.59) .. (301.27,117.59) .. controls (295.75,117.59) and (291.27,113.11) .. (291.27,107.59) -- cycle ;
            \draw (347.67,107.59) .. controls (347.67,102.06) and (352.15,97.59) .. (357.67,97.59) .. controls (363.19,97.59) and (367.67,102.06) .. (367.67,107.59) .. controls (367.67,113.11) and (363.19,117.59) .. (357.67,117.59) .. controls (352.15,117.59) and (347.67,113.11) .. (347.67,107.59) -- cycle ;
            \draw (306.87,115.71) -- (339.28,133.56) ;
            \draw [shift={(341.91,135)}, rotate = 208.84,fill={rgb,255:red,0;green,0;blue,0},line width=0.08,draw opacity=0] (5.36,-2.57) -- (0,0) -- (5.36,2.57) -- (3.56,0) -- cycle    ;
            \draw (342.91,135) .. controls (345.52,130.14) and (351.58,128.32) .. (356.45,130.93) .. controls (361.31,133.55) and (363.14,139.61) .. (360.52,144.47) .. controls (357.9,149.34) and (351.84,151.16) .. (346.98,148.55) .. controls (342.11,145.93) and (340.29,139.87) .. (342.91,135) -- cycle ;
            \draw (307,152) .. controls (338.97,164.94) and (392.81,158.38) .. (423.94,145.29) ;
            \draw [shift={(426.29,144.27)}, rotate = 155.85,fill={rgb,255:red,0;green,0;blue,0},line width=0.08,draw opacity=0] (5.36,-2.57) -- (0,0) -- (5.36,2.57) -- (3.56,0) -- cycle    ;
            \draw (426.07,139.79) .. controls (426.07,134.26) and (430.55,129.79) .. (436.07,129.79) .. controls (441.59,129.79) and (446.07,134.26) .. (446.07,139.79) .. controls (446.07,145.31) and (441.59,149.79) .. (436.07,149.79) .. controls (430.55,149.79) and (426.07,145.31) .. (426.07,139.79) -- cycle ;
            \draw (446.07,138.79) .. controls (464.92,131.94) and (466.99,148.05) .. (448.78,144.87) ;
            \draw [shift={(446.07,144.29)},rotate=14.2,fill={rgb,255:red,0;green,0;blue,0},line width=0.08,draw opacity=0] (5.36,-2.57) -- (0,0) -- (5.36,2.57) -- (3.56,0) -- cycle    ;
            \draw (320.34,90.3) node [anchor=north west,inner sep=0.75pt,font=\footnotesize,align=left] {$a,q'$};
            \draw (250.07,109.4) node [anchor=north west,inner sep=0.75pt,align=left] {$q_{0}$};
            \draw (296.67,103) node [anchor=north west,inner sep=0.75pt,align=left] {$q$};
            \draw (351.47,99.8) node [anchor=north west,inner sep=0.75pt,align=left] {$q'$};
            \draw (343.87,131.2) node [anchor=north west,inner sep=0.75pt,align=left] {$q''$};
            \draw (312.89,122.96) node [anchor=north west,inner sep=0.75pt,font=\footnotesize,rotate=-28.85] [align=left] {$a,q''$};
            \draw (245.1,144) node [anchor=north west,inner sep=0.75pt,align=left] {$\aut{A} '$};
            \draw (387.67,131.54) node [anchor=north west,inner sep=0.75pt,font=\footnotesize,align=left] {$*,q$\\$a,*$};
            \draw (429.57,134.4) node [anchor=north west][inner sep=0.75pt,align=left] {$\bot $};
        \end{tikzpicture}
        \caption{The \DPW{} $\aut{A}'$ used for the \pspaceHard{}ness of \cref{prop:maximum-dpw-pspace}.}
        \label{fig:reduction-pspace-maximum-dpw}
    \end{figure}

    This completes the proof of \cref{prop:maximum-dpw-pspace} for the existence of a maximal element. Let us finally comment on the modifications needed to decide the existence of a minimal element. For the membership result, as the non-existence of a minimal element means $\forall x \exists y, y \leqRelationStrict x$, we have to consider the projection of $\aut{A}'$ on the second component of the labels in step 3. For the hardness result, we have to swap the components of the labels $(a,q)$ of the transitions of the automaton $\aut{A}'$.
\end{proof}

\begin{figure}[t]
    \centering
    \begin{minipage}[c]{0.49\textwidth}
        \centering
        \begin{tikzpicture}[x=0.75pt,y=0.75pt,yscale=-1,xscale=1]
            \draw (247.69,82.91) -- (288.61,82.78);
            \draw [shift={(291.61,82.77)},rotate=179.82,fill={rgb,255:red,0;green,0;blue,0},line width=0.08,draw opacity=0] (5.36,-2.57) -- (0,0) -- (5.36,2.57) -- (3.56,0) -- cycle;
            \draw (231.81,82.91) .. controls (231.81,78.28) and (235.36,74.53) .. (239.75,74.53) .. controls (244.14,74.53) and (247.69,78.28) .. (247.69,82.91) .. controls (247.69,87.53) and (244.14,91.28) .. (239.75,91.28) .. controls (235.36,91.28) and (231.81,87.53) .. (231.81,82.91) -- cycle;
            \draw (219.82,82.99) -- (228.81,82.93);
            \draw [shift={(231.81,82.91)},rotate=179.58,fill={rgb,255:red,0;green,0;blue,0},line width=0.08,draw opacity=0] (5.36,-2.57) -- (0,0) -- (5.36,2.57) -- (3.56,0) -- cycle;
            \draw [fill={rgb,255:red,0;green,0;blue,0},fill opacity=0.03,dash pattern={on 4.5pt off 4.5pt}] (282.39,79.11) .. controls (282.39,72.75) and (287.55,67.6) .. (293.9,67.6) -- (421.72,67.6) .. controls (428.08,67.6) and (433.23,72.75) .. (433.23,79.11) -- (433.23,119.16) .. controls (433.23,125.52) and (428.08,130.67) .. (421.72,130.67) -- (293.9,130.67) .. controls (287.55,130.67) and (282.39,125.52) .. (282.39,119.16) -- cycle;
            \draw (291.33,78.73) .. controls (291.33,76.93) and (292.79,75.47) .. (294.58,75.47) -- (345.56,75.47) .. controls (347.36,75.47) and (348.82,76.93) .. (348.82,78.73) -- (348.82,88.5) .. controls (348.82,90.3) and (347.36,91.76) .. (345.56,91.76) -- (294.58,91.76) .. controls (292.79,91.76) and (291.33,90.3) .. (291.33,88.5) -- cycle;
            \draw (294.13,109.93) .. controls (294.13,108.13) and (295.59,106.67) .. (297.38,106.67) -- (342.16,106.67) .. controls (343.96,106.67) and (345.42,108.13) .. (345.42,109.93) -- (345.42,119.7) .. controls (345.42,121.5) and (343.96,122.96) .. (342.16,122.96) -- (297.38,122.96) .. controls (295.59,122.96) and (294.13,121.5) .. (294.13,119.7) -- cycle;
            \draw (378.33,109.93) .. controls (378.33,108.13) and (379.79,106.67) .. (381.58,106.67) -- (427.35,106.67) .. controls (429.15,106.67) and (430.61,108.13) .. (430.61,109.93) -- (430.61,119.7) .. controls (430.61,121.5) and (429.15,122.96) .. (427.35,122.96) -- (381.58,122.96) .. controls (379.79,122.96) and (378.33,121.5) .. (378.33,119.7) -- cycle;
            \draw (345.42,114.99) -- (376.02,114.99);
            \draw [shift={(379.02,114.99)},rotate=180,fill={rgb,255:red,0;green,0;blue,0},line width=0.08,draw opacity=0] (5.36,-2.57) -- (0,0) -- (5.36,2.57) -- (3.56,0) -- cycle;
            \draw (239.75,91.28) .. controls (241.47,106.58) and (244.63,120.61) .. (250.06,129.93);
            \draw [shift={(251.63,132.4)},rotate=235.18,fill={rgb,255:red,0;green,0;blue,0},line width=0.08,draw opacity=0] (5.36,-2.57) -- (0,0) -- (5.36,2.57) -- (3.56,0) -- cycle;
            \draw (249.01,139.51) .. controls (249.01,134.88) and (252.56,131.13) .. (256.95,131.13) .. controls (261.34,131.13) and (264.89,134.88) .. (264.89,139.51) .. controls (264.89,144.13) and (261.34,147.88) .. (256.95,147.88) .. controls (252.56,147.88) and (249.01,144.13) .. (249.01,139.51) -- cycle;
            \draw (252.87,146.94) .. controls (248.35,165.12) and (264.01,165.67) .. (260.96,149.56);
            \draw [shift={(260.31,146.88)},rotate=73.97,fill={rgb,255:red,0;green,0;blue,0},line width=0.08,draw opacity=0] (5.36,-2.57) -- (0,0) -- (5.36,2.57) -- (3.56,0) -- cycle;
            \draw (322.01,123.17) .. controls (317.38,133.68) and (299.15,139.42) .. (267.84,139.51);
            \draw [shift={(264.89,139.51)},fill={rgb,255:red,0;green,0;blue,0},line width=0.08,draw opacity=0] (5.36,-2.57) -- (0,0) -- (5.36,2.57) -- (3.56,0) -- cycle;
            \draw (248.8,71.23) node [anchor=north west,inner sep=0.75pt,font=\footnotesize,align=left] {$v_{0} ,v_{0}$};
            \draw (292.2,77.33) node [anchor=north west,inner sep=0.75pt,font=\footnotesize,align=left] {$(q_{0} ,v_{0} ,v_{0})$};
            \draw (233,74.93) node [anchor=north west,inner sep=0.75pt,font=\footnotesize,align=left] {$q'_{0}$};
            \draw (294.6,108.53) node [anchor=north west,inner sep=0.75pt,font=\footnotesize,align=left] {$(q,v_{1} ,v_{2})$};
            \draw (377.3,108.13) node [anchor=north west,inner sep=0.75pt,font=\footnotesize,align=left] {$(q',v'_{1} ,v'_{2})$};
            \draw (346.8,101.73) node [anchor=north west,inner sep=0.75pt,font=\scriptsize,align=left] {$v'_{1} ,v'_{2}$};
            \draw (251.2,134.53) node [anchor=north west,inner sep=0.75pt,font=\footnotesize,align=left] {$q_{s}$};
            \draw (247.23,160.7) node [anchor=north west,inner sep=0.75pt,font=\footnotesize,align=left] {$*,*$};
            \draw (201.03,100.9) node [anchor=north west,inner sep=0.75pt,font=\footnotesize,align=left] {$v,*$ for\\$v\neq v_{0}$};
            \draw (312.83,132.9) node [anchor=north west,inner sep=0.75pt,font=\footnotesize,align=left] {$v,*$ for\\$(v_{1} ,v) \notin E$};
        \end{tikzpicture}
        \caption{The \DPW{} $\aut{A}'$ for the \pspace{} membership of \cref{theorem:NEexistenceGeneral} for one-player games.}
        \label{fig:maximum-ne-one-player-product}
    \end{minipage}
    \hfill
    \begin{minipage}[c]{0.49\textwidth}
        \centering
        \begin{tikzpicture}[x=0.75pt,y=0.75pt,yscale=-1]
            \draw (143.26,134.48) -- (158.41,134.17);
            \draw [shift={(161.41,134.11)},rotate=178.81,fill={rgb,255:red,0;green,0;blue,0},line width=0.08,draw opacity=0] (5.36,-2.57) -- (0,0) -- (5.36,2.57) -- (3.56,0) -- cycle;
            \draw (341.29,130.61) -- (378.01,130.33);
            \draw [shift={(381.01,130.31)},rotate=179.57,fill={rgb,255:red,0;green,0;blue,0},line width=0.08,draw opacity=0] (5.36,-2.57) -- (0,0) -- (5.36,2.57) -- (3.56,0) -- cycle;
            \draw (325.41,130.61) .. controls (325.41,125.98) and (328.96,122.23) .. (333.35,122.23) .. controls (337.74,122.23) and (341.29,125.98) .. (341.29,130.61) .. controls (341.29,135.23) and (337.74,138.98) .. (333.35,138.98) .. controls (328.96,138.98) and (325.41,135.23) .. (325.41,130.61) -- cycle;
            \draw (307.26,130.98) -- (322.41,130.67);
            \draw [shift={(325.41,130.61)},rotate=178.81,fill={rgb,255:red,0;green,0;blue,0},line width=0.08,draw opacity=0] (5.36,-2.57) -- (0,0) -- (5.36,2.57) -- (3.56,0) -- cycle;
            \draw [fill={rgb,255:red,0;green,0;blue,0},fill opacity=0.03,dash pattern={on 4.5pt off 4.5pt}] (194.99,117.62) .. controls (194.99,111.98) and (199.57,107.4) .. (205.21,107.4) -- (268.08,107.4) .. controls (273.72,107.4) and (278.3,111.98) .. (278.3,117.62) -- (278.3,153.18) .. controls (278.3,158.82) and (273.72,163.4) .. (268.08,163.4) -- (205.21,163.4) .. controls (199.57,163.4) and (194.99,158.82) .. (194.99,153.18) -- cycle;
            \draw (161.41,134.11) .. controls (161.41,129.72) and (164.96,126.17) .. (169.34,126.17) .. controls (173.72,126.17) and (177.27,129.72) .. (177.27,134.11) .. controls (177.27,138.49) and (173.72,142.04) .. (169.34,142.04) .. controls (164.96,142.04) and (161.41,138.49) .. (161.41,134.11) -- cycle;
            \draw (208.01,122.11) .. controls (208.01,117.72) and (211.56,114.17) .. (215.94,114.17) .. controls (220.32,114.17) and (223.87,117.72) .. (223.87,122.11) .. controls (223.87,126.49) and (220.32,130.04) .. (215.94,130.04) .. controls (211.56,130.04) and (208.01,126.49) .. (208.01,122.11) -- cycle;
            \draw [fill={rgb,255:red,0;green,0;blue,0},fill opacity=0.03,dash pattern={on 4.5pt off 4.5pt}] (372.79,116.63) .. controls (372.79,112.21) and (376.37,108.63) .. (380.79,108.63) -- (408.6,108.63) .. controls (413.02,108.63) and (416.6,112.21) .. (416.6,116.63) -- (416.6,148.49) .. controls (416.6,152.91) and (413.02,156.49) .. (408.6,156.49) -- (380.79,156.49) .. controls (376.37,156.49) and (372.79,152.91) .. (372.79,148.49) -- cycle;
            \draw (381.01,130.31) .. controls (381.01,125.92) and (384.56,122.37) .. (388.94,122.37) .. controls (393.32,122.37) and (396.87,125.92) .. (396.87,130.31) .. controls (396.87,134.69) and (393.32,138.24) .. (388.94,138.24) .. controls (384.56,138.24) and (381.01,134.69) .. (381.01,130.31) -- cycle;
            \draw (172.94,126.97) .. controls (179.17,117.27) and (187.27,117.69) .. (200.39,117.79);
            \draw [shift={(203.2,117.8)},rotate=180,fill={rgb,255:red,0;green,0;blue,0},line width=0.08,draw opacity=0] (5.36,-2.57) -- (0,0) -- (5.36,2.57) -- (3.56,0) -- cycle;
            \draw (172.14,141.77) .. controls (178.02,149.32) and (184.74,152.25) .. (198.81,152.57);
            \draw [shift={(201.6,152.6)},rotate=180,fill={rgb,255:red,0;green,0;blue,0},line width=0.08,draw opacity=0] (5.36,-2.57) -- (0,0) -- (5.36,2.57) -- (3.56,0) -- cycle;
            \draw (177.27,134.11) -- (200.2,134.19);
            \draw [shift={(203.2,134.2)},rotate=180.21,fill={rgb,255:red,0;green,0;blue,0},line width=0.08,draw opacity=0] (5.36,-2.57) -- (0,0) -- (5.36,2.57) -- (3.56,0) -- cycle;
            \draw (162.2,129.83) node [anchor=north west,inner sep=0.75pt,font=\footnotesize,align=left] {$v_{0}$};
            \draw (342,119.53) node [anchor=north west,inner sep=0.75pt,font=\footnotesize,align=left] {$v_{0},v_{0}$};
            \draw (302.59,100) node [anchor=north west,inner sep=0.75pt,align=left] {$\aut{A}'$};
            \draw (147.49,95.3) node [anchor=north west,inner sep=0.75pt,align=left] {$\game$};
            \draw (230.76,115.95) node [anchor=north west,inner sep=0.75pt,font=\small,align=left] {$\forall a \in \Sigma $};
            \draw (207.76,140.15) node [anchor=north west,inner sep=0.75pt,font=\small,align=left] {connected};
            \draw (211.2,119.23) node [anchor=north west,inner sep=0.75pt,font=\footnotesize,align=left] {$a$};
            \draw (396.7,134.62) node [anchor=north west,inner sep=0.75pt,align=left] {$\aut{A}$};
            \draw (382.6,125.63) node [anchor=north west,inner sep=0.75pt,font=\footnotesize,align=left] {$q_{0}$};
            \draw (326.6,122.63) node [anchor=north west,inner sep=0.75pt,font=\footnotesize,align=left] {$q_{0}'$};
        \end{tikzpicture}
        \caption{The reduction used for the \pspace{}-hardness of \cref{theorem:NEexistenceGeneral}.}
        \label{fig:reduction-nash-existence-pspace}
    \end{minipage}
\end{figure}

\begin{proof}[Proof of \cref{theorem:NEexistenceGeneral} for one-player games]
    Let us first present a \pspace{} algorithm for the membership. Intuitively, given a one-player game $\game = (A,\leqRelationStrict)$ and an initial vertex $v_0$, we will show that a play $\pi$ is an NE outcome from $v_0$ if and only if it is maximal for $\leqRelationStrict'$, a well chosen preference relation constructed from $\leqRelationStrict$. Hence, by \cref{prop:maximum-dpw-pspace}, we will get that the existence of an NE is in \pspace{}.

    We define the relation $\leqRelationStrict'$ as follows: for all $x, y \in V^\omega$, we have $x \leqRelationStrict' y$ if and only if either $x,y$ are both plays starting with $v_0$ such that $x \leqRelationStrict y$, or $x$ is not a play starting with $v_0$. Clearly, if $\pi$ is an NE outcome from $v_0$, then it is maximal for $\leqRelationStrict'$. Conversely, if $x$ is maximal for $\leqRelationStrict'$, then $x$ is a play starting with $v_0$ and it is thus an NE outcome as $x$ is maximal.

    It remains to show that $\leqRelationStrict'$ is accepted by some \DPW{} $\aut{A}'$. This automaton, illustrated in \cref{fig:maximum-ne-one-player-product}, is defined from the arena $A$ and the \DPW{} $\aut{A}$ accepting $\leqRelationStrict$ in the following way. In addition to two particular states $q'_0$ and $q_s$ with priority $0$, any state of $\aut{A}'$ is of the form $(q,v_1,v_2)$ with $q$ a state of $\aut{A}$ and $v_1,v_2 \in V$, and has the same priority as $q$ in $\aut{A}$. The state $q'_0$ is the initial state. There is a transition from $(q,v_1,v_2)$ to $(q',v_1',v_2')$ labeled with $(v_1',v_2')$ if $(v_i,v_i') \in E$ for $i \in \{1,2\}$ and $(q,(v_1',v_2'),q')$ is a transition of $\aut{A}$. In addition, from a state $(q,v_1,v_2)$, a transition reading a label $(v_1',v_2')$ such that $(v_1,v_1') \not\in E$ leads to the state $q_s$ with a self loop labeled by any $(v,v') \in V \times V$. Finally, there is a transition from $q_0$ to $(q_0,v_0,v_0)$ with label $(v_0,v_0)$ (where $q_0$ is the initial state of $\aut{A}$), and to $q_s$ for any label $(v,v')$ such that $v \neq v_0$.

    \medskip

    We now prove that the NE existence problem is \pspaceHard{} for one-player games, with a reduction from the existence of a maximal element in an $\omega$-automatic relation, a \pspaceComplete{} problem by \cref{prop:maximum-dpw-pspace}. The reduction works as follows. Let $\leqRelationStrict$ be an $\omega$-automatic relation on $\Sigma$ accepted by a \DPW{} $\aut{A}$.
    We construct a one-player game $\game = (\arena,\leqRelationStrict')$ depicted in \cref{fig:reduction-nash-existence-pspace} as follows. The set of vertices of $A$ is $\Sigma' = \Sigma \cup \{v_0\}$ with a new initial vertex $v_0$; there is an edge between every pair of letters $a,b \in \Sigma$ and an edge between $v_0$ and each letter $a \in \Sigma$. The preference relation $\leqRelationStrict'$ is the one accepted by the \DPW{} $\aut{A}'$ of \cref{fig:reduction-nash-existence-pspace} where the initial state $q'_0$ replaces the initial state $q_0$ of $\aut{A}$ and its priority function is the one of $\aut{A}$ extended to $q_0$ with priority~$1$.

    It remains to show that the reduction is correct. Suppose that $\pi = v_0 x$ is an NE outcome starting in $v_0$ in $\game$. As it is a one-player game, $\pi$ is a maximal play with respect to $\leqRelationStrict'$ among all plays starting in $v_0$. As those plays all belong to $v_0\Sigma^\omega$, it follows that $x$ is a maximal element in $\leqRelationStrict$, by definition of $\leqRelationStrict'$. The other implication is proved similarly: given a maximal element $x$ in $\leqRelationStrict$, we get a play $\pi = v_0x$ that is maximal in $\leqRelationStrict'$ among all plays starting in $v_0$, thus an NE outcome.
\end{proof}

\section{Constrained NE Existence Problem}
\label{app:constrained-nash-existence}

\constrainedNEexistenceGeneral*

\begin{proof}[Sketch of proof of \cref{theorem:constrainedNEexistenceGeneral}, membership]
    From the given game $\game$ and constraints $\pi_i = \mu_i(\nu_i)^\omega$ for each player~$i$, we construct the same \pcp{} game as for the NE existence problem. The only difference is on the winning condition $W_{\proverone\provertwo} = W_{acc} \cup W_{dev}$ where $W_{acc}$ is modified in a way to take into account the constraints:
    \[
    W_{acc} = \{\pi \in \Plays(\pcp(\game)) \mid \projDev{\pi} = \bot \text{ and } \pi_i \leqRelationStrict[i] \projOne{\pi}, ~\forall i \}
    \]
    This modification has an impact on the Rabin condition encoding $W_{\proverone\provertwo}$. To translate $W_{acc}$ into a Rabin condition, we proceed as follows. For each $i$, we construct a \DPW{} $\aut{A}'_i$ accepting the set $\{x \in V^\omega \mid \mu_i(\nu_i)^\omega \leqRelationStrict[i] x \}$. It has $|\aut{A}_i|\cdot |\pi_i|$ states and $d_i$ priorities. We then construct the product $\aut{A}'$ of all those automata $\aut{A}'_i$ that accepts the set $\{x \in V^\omega \mid \mu_i(\nu_i)^\omega \leqRelationStrict[i] x, ~\forall i \}$. This is a generalized \DPW{} whose condition can be translated into a Streett condition with $d = \Sigma_{i \in \Players} d_i$ pairs~\cite{handbook-of-model-checking-orna-kupferman}. The latter automaton is equivalent to a deterministic automaton $\aut{B}$ with $\bigO{|\aut{A}'| \cdot 2^{d \log(d)}}$ states and $d$ Rabin pairs~\cite{phd-safra}. Finally, we replace the arena of the \pcp{} game by its product with the automaton $\aut{B}$. Thanks to the previous argument, $W_{acc}$ is encoded as a Rabin condition on the modified arena with $d$ pairs (without forgetting the condition $\projDev{\pi} = \bot$). As in the proof of the membership result of \cref{theorem:NEexistenceGeneral}, $W_{dev}$ is translated into a Rabin condition with $d$ pairs (step 1), and the rest of the proof (steps 2 and 3) is then the same. Nevertheless, the complexity of the constrained existence problem is different as the number of vertices of the \pcp{} game now also depends on $|\aut{B}|$. It follows that the constrained NE existence problem is exponential in $|V|$, $\Pi_{i\in \Players}|\aut{A}_i|$, $\Pi_{i\in \Players}|\pi_i|$, and doubly exponential in $\Sigma_{i\in \Players}d_i$.
\end{proof}

To obtain the PSPACE-hardness result of \Cref{theorem:constrainedNEexistenceGeneral}, it it enough to prove it for one-player games.

\begin{proof}[Proof of \cref{theorem:constrainedNEexistenceGeneral}, one-player games]
    Let us start with the \pspace{}-membership of the NE constrained problem for one-player games. Given a $\game = (\arena,\leqRelationStrict)$ with a single preference relation, an initial vertex $v_0$, and a constraint given by a lasso $\pi$, we have to decide the existence of an NE $\sigma$ from $v_0$ in $\game$ such that $\pi \leqRelationStrict \outcomefrom{\sigma}{v_0}$.

    First, we define a relation $\leqRelationStrict'$ from $\leqRelationStrict$ as we did for the membership proof of \cref{theorem:NEexistenceGeneral} for one-player games in \cref{app:nash-existence-tools}. Recall that a play $\pi$ is an NE outcome from $v_0$ if and only if it is maximal for $\leqRelationStrict'$. Recall also that $\leqRelationStrict'$ was accepted by a \DPW{} of size $|\aut{A}| \cdot |V|^2$. Second, we define an automaton $\aut{B}$ such that $\lang{\aut{B}} = \{\rho \in \Plays \mid \pi \leqRelationStrict \rho\}$, i.e., $\aut{B}$ is a \DPW{} of size $|\aut{A}| \cdot |\pi|$.

    Our goal is then to find a maximal element of $\leqRelationStrict'$ in $\aut{B}$, i.e., to satisfy
    \[
    \exists x \forall y, ~ x \not\leqRelationStrict' y \wedge x \in \lang{\aut{B}}.
    \]
    As \pspace{} is closed under complementation, we can equivalently study the negation of this property:
    \begin{equation}
    \forall x \exists y, ~ x \leqRelationStrict' y \lor x \not\in \lang{\aut{B}}.
    \label{eq:streett}
    \end{equation}
    The set $\{(x,y) \mid x \leqRelationStrict' y \lor x \not\in \lang{\aut{B}}\}$ is accepted by a generalized \DPW{} with a disjunction of two parity conditions, i.e., by a deterministic Streett automaton. This automaton has size $|V|^2 \cdot |\aut{A}|^2 \cdot |\pi|$. Therefore, the set $\{x \mid \exists y, ~ x \leqRelationStrict' y \lor x \not\in \lang{\aut{B}}\}$ is accepted by a nondeterministic Streett obtained from the previous one by projection on the first component. To check (\ref{eq:streett}), it remains to perform a universality check on the latter automaton, which is done in \pspace{}~\cite{Parityizing-Rabin-Streett-BokerKS10}.

    \medskip

    Let us now prove the \pspaceHard{}ness. We use a reduction from the NE existence problem and suppose that there is only one player, as the NE existence problem is already \pspaceHard{} in this case. Given a game $\game = (\arena,\leqRelationStrict)$ with only player~$1$ with a preference relation $\leqRelationStrict$ accepted by a \DPW{} $\aut{A}$, and $v_0$ an initial vertex, we construct a new game $\game' = (\arena',\leqRelationStrict')$ as depicted in \cref{fig:reduction-pspace-hardness-nash-constrained}. Its arena $\arena'$ has $V' = V \cup \{v'_0,v'_1\}$ as set of vertices, with $v'_0$ and $v'_1$, two new vertices owned by player~$1$. It is a copy of $\arena$ with $v_0'$, the new initial vertex, having $v_0$ and $v'_1$ as successors, and $v'_1$ having itself as successor.
    The preference relation $\mathord{\leqRelationStrict'} \subseteq (V')^\omega \times (V')^\omega$ is the one accepted by the \DPW{} $\aut{A}'$ of \cref{fig:reduction-pspace-hardness-nash-constrained} where the initial state $q'_0$ replaces the initial state $q_0$ of $\aut{A}$ and its priority function is the one of $\aut{A}$ extended to $q'_0$ and $q_s$, both with priority $0$. In addition to the transitions of $\aut{A}$, we have a transition from $q_0'$ to $q_0$ labeled by $(v_0',v_0')$ and for all $v \in V$, a transition from $q_0$ to $q_s$ (resp.\ from $q_s$ to $q_s$) labeled by $(v_1',v)$.
    Note that, thanks to the sink state $q_s$,
    \begin{align}\label{eq:lasso}
        v'_0(v'_1)^\omega \leqRelationStrict' v'_0x \text{ for all } x \in V^\omega.
    \end{align}
    We impose a constraint given by the lasso $v'_0(v_1')^\omega$.

    Let us show that there is an NE in $\game$ from $v_0$ if and only if there is one in $\game'$ from $v'_0$ whose outcome $\pi$ satisfies $v'_0(v_1')^\omega \leqRelationStrict' \pi$ (Recall that NEs are composed of a single strategy, as there is only one player). Let us first suppose that there exists an NE $\sigma$ in $\game$. Let $\sigma'$ be a strategy in $\game'$ defined as $\sigma'(v_0') = v_0$ and $\sigma'(v_0'h) = \sigma(h)$ for any history $h$. Clearly, by construction of $\leqRelationStrict'$, see \eqref{eq:lasso}, we have $v'_0(v_1')^\omega \leqRelationStrict' \outcomefrom{\sigma'}{v'_0}$. Let us explain why the profile $\sigma'$ is an NE. The deviation $v'_0(v'_1)^\omega$ is not profitable for player~$1$, by \eqref{eq:lasso}. Any other deviation is necessarily of the form $v'_0\pi$ with $\pi$ a deviation in $\game$ with respect to $\sigma$. Hence, it cannot be profitable as $\sigma$ is an NE and by definition of $\leqRelationStrict'$. Conversely, suppose that there exists an NE $\sigma'$ in $\game'$ such that $v'_0(v_1')^\omega \leqRelationStrict' \pi' = \outcomefrom{\sigma'}{v'_0}$. Note that $\pi' \neq v'_0(v_1')^\omega$ by \eqref{eq:lasso} and as $\sigma'$ is an NE. Therefore, we can define a strategy profile $\sigma $ in $\game$ such that $\sigma(h) = \sigma'(v_0'h)$ for any history $h$. This is an NE in $\game$ since $\sigma'$ is an NE in~$\game'$.

    This shows that the constrained NE existence problem is \pspace{}-hard for the single player case.
    \begin{figure}
        \centering
        \begin{tikzpicture}[x=0.75pt,y=0.75pt,yscale=-1]
            \draw (320.41,120.61) .. controls (320.41,115.98) and (323.96,112.23) .. (328.35,112.23) .. controls (332.74,112.23) and (336.29,115.98) .. (336.29,120.61) .. controls (336.29,125.23) and (332.74,128.98) .. (328.35,128.98) .. controls (323.96,128.98) and (320.41,125.23) .. (320.41,120.61) -- cycle;
            \draw (302.26,120.98) -- (317.41,120.67);
            \draw [shift={(320.41,120.61)},rotate=178.81,fill={rgb,255:red,0;green,0;blue,0},line width=0.08,draw opacity=0] (5.36,-2.57) -- (0,0) -- (5.36,2.57) -- (3.56,0) -- cycle;
            \draw [fill={rgb,255:red,0;green,0;blue,0},fill opacity=0.03,dash pattern={on 4.5pt off 4.5pt}] (191.66,111.12) .. controls (191.66,106.59) and (195.33,102.91) .. (199.87,102.91) -- (228.43,102.91) .. controls (232.97,102.91) and (236.65,106.59) .. (236.65,111.12) -- (236.65,140.41) .. controls (236.65,144.95) and (232.97,148.63) .. (228.43,148.63) -- (199.87,148.63) .. controls (195.33,148.63) and (191.66,144.95) .. (191.66,140.41) -- cycle;
            \draw (161.41,116.11) .. controls (161.41,111.72) and (164.96,108.17) .. (169.34,108.17) .. controls (173.72,108.17) and (177.27,111.72) .. (177.27,116.11) .. controls (177.27,120.49) and (173.72,124.04) .. (169.34,124.04) .. controls (164.96,124.04) and (161.41,120.49) .. (161.41,116.11) -- cycle;
            \draw [fill={rgb,255:red,0;green,0;blue,0},fill opacity=0.03,dash pattern={on 4.5pt off 4.5pt}] (373.26,109.79) .. controls (373.26,106.05) and (376.29,103.03) .. (380.02,103.03) -- (430.14,103.03) .. controls (433.87,103.03) and (436.9,106.05) .. (436.9,109.79) -- (436.9,133.31) .. controls (436.9,137.04) and (433.87,140.07) .. (430.14,140.07) -- (380.02,140.07) .. controls (376.29,140.07) and (373.26,137.04) .. (373.26,133.31) -- cycle;
            \draw (177.27,116.11) -- (200.2,116.19);
            \draw [shift={(203.2,116.2)},rotate=180.21,fill={rgb,255:red,0;green,0;blue,0},line width=0.08,draw opacity=0] (5.36,-2.57) -- (0,0) -- (5.36,2.57) -- (3.56,0) -- cycle;
            \draw (147.48,116.01) -- (158.41,116.08);
            \draw [shift={(161.41,116.11)},rotate=180.39,fill={rgb,255:red,0;green,0;blue,0},line width=0.08,draw opacity=0] (5.36,-2.57) -- (0,0) -- (5.36,2.57) -- (3.56,0) -- cycle;
            \draw (336.29,120.61) -- (376.78,121);
            \draw [shift={(379.78,121.03)},rotate=180.56,fill={rgb,255:red,0;green,0;blue,0},line width=0.08,draw opacity=0] (5.36,-2.57) -- (0,0) -- (5.36,2.57) -- (3.56,0) -- cycle;
            \draw (380.01,120.31) .. controls (380.01,115.92) and (383.56,112.37) .. (387.94,112.37) .. controls (392.32,112.37) and (395.87,115.92) .. (395.87,120.31) .. controls (395.87,124.69) and (392.32,128.24) .. (387.94,128.24) .. controls (383.56,128.24) and (380.01,124.69) .. (380.01,120.31) -- cycle;
            \draw (161.41,147.11) .. controls (161.41,142.72) and (164.96,139.17) .. (169.34,139.17) .. controls (173.72,139.17) and (177.27,142.72) .. (177.27,147.11) .. controls (177.27,151.49) and (173.72,155.04) .. (169.34,155.04) .. controls (164.96,155.04) and (161.41,151.49) .. (161.41,147.11) -- cycle;
            \draw (169.34,124.04) -- (169.34,136.17);
            \draw [shift={(169.34,139.17)},rotate=270,fill={rgb,255:red,0;green,0;blue,0},line width=0.08,draw opacity=0] (5.36,-2.57) -- (0,0) -- (5.36,2.57) -- (3.56,0) -- cycle;
            \draw (162.7,142.47) .. controls (143.6,139.03) and (141.65,153.47) .. (160.06,151.58);
            \draw [shift={(162.8,151.2)},rotate=170.23,fill={rgb,255:red,0;green,0;blue,0},line width=0.08,draw opacity=0] (5.36,-2.57) -- (0,0) -- (5.36,2.57) -- (3.56,0) -- cycle;
            \draw (458.03,125.65) .. controls (458.03,121.02) and (461.58,117.27) .. (465.97,117.27) .. controls (470.36,117.27) and (473.91,121.02) .. (473.91,125.65) .. controls (473.91,130.27) and (470.36,134.02) .. (465.97,134.02) .. controls (461.58,134.02) and (458.03,130.27) .. (458.03,125.65) -- cycle;
            \draw (472.9,121.47) .. controls (493.53,118.6) and (495.55,131.61) .. (475.92,130.46);
            \draw [shift={(473,130.2)},rotate=6.7,fill={rgb,255:red,0;green,0;blue,0},line width=0.08,draw opacity=0] (5.36,-2.57) -- (0,0) -- (5.36,2.57) -- (3.56,0) -- cycle;
            \draw (387.94,112.37) .. controls (409.05,95.39) and (447.69,93.75) .. (466.42,114.84);
            \draw [shift={(468.1,116.87)},rotate=232.19,fill={rgb,255:red,0;green,0;blue,0},line width=0.08,draw opacity=0] (5.36,-2.57) -- (0,0) -- (5.36,2.57) -- (3.56,0) -- cycle;
            \draw (203.07,116.77) .. controls (203.07,112.39) and (206.63,108.84) .. (211.01,108.84) .. controls (215.39,108.84) and (218.94,112.39) .. (218.94,116.77) .. controls (218.94,121.15) and (215.39,124.7) .. (211.01,124.7) .. controls (206.63,124.7) and (203.07,121.15) .. (203.07,116.77) -- cycle;
            \draw (161.8,108.83) node [anchor=north west,inner sep=0.75pt,font=\small,align=left] {$v'_{0}$};
            \draw (338.8,106.13) node [anchor=north west,inner sep=0.75pt,font=\footnotesize,align=left] {$v'_{0},v'_{0}$};
            \draw (289.59,89.07) node [anchor=north west,inner sep=0.75pt,align=left] {$\aut{A}'$};
            \draw (218.95,128.37) node [anchor=north west,inner sep=0.75pt,align=left] {$\game$};
            \draw (321.2,113.23) node [anchor=north west,inner sep=0.75pt,font=\small,align=left] {$q'_{0}$};
            \draw (161.8,139.83) node [anchor=north west,inner sep=0.75pt,font=\small,align=left] {$v'_{1}$};
            \draw (161.82,79.69) node [anchor=north west,inner sep=0.75pt,align=left] {$\game'$};
            \draw (380.8,115.43) node [anchor=north west,inner sep=0.75pt,font=\small,align=left] {$q_{0}$};
            \draw (459.4,121.83) node [anchor=north west,inner sep=0.75pt,font=\small,align=left] {$q_{s}$};
            \draw (497.8,118.33) node [anchor=north west,inner sep=0.75pt,font=\footnotesize,align=left] {$v'_{1},v$};
            \draw (446.2,87.93) node [anchor=north west,inner sep=0.75pt,font=\footnotesize,align=left] {$v'_{1},v$};
            \draw (410.79,121.07) node [anchor=north west,inner sep=0.75pt,align=left] {$\aut{A}$};
            \draw (474.4,102.13) node [anchor=north west,inner sep=0.75pt,font=\scriptsize,align=left] {$(\forall v\in V)$};
            \draw (204.46,113.49) node [anchor=north west,inner sep=0.75pt,font=\small,align=left] {$v_{0}$};
        \end{tikzpicture}
        \caption{The reduction used for the \pspaceHard{}ness of \cref{theorem:constrainedNEexistenceGeneral}.}
        \label{fig:reduction-pspace-hardness-nash-constrained}
    \end{figure}
\end{proof}

\section{Proof of \texorpdfstring{\cref{prop:PropertiesRelations}}
{Proposition~\ref{prop:PropertiesRelations}}}
\label{app:proof-PropertiesRelations}

For the proof of \cref{prop:PropertiesRelations}, we use logspace reductions to show the \nlHard{}ness results. Such a reduction consists in computing a polynomially bounded function $f$ with a deterministic Turing Machine using three tapes: a read-only input tape containing the input $x$ of length $n$, a write-only output tape that will contain $f(x)$ at the end of the execution, and a read-write work tape with $\log(n)$ cells. The reader can consult~\cite{Computational-Complexity-2009,sipser13} for more details about logspace reductions. Let us also recall that the complexity class \nl{} is closed under complementation~\cite{sipser13}. Before providing the proof of \cref{prop:PropertiesRelations}, we need the following result on generalized nondeterministic parity automata (\NPWs{}) with a positive Boolean combination of a constant number of parity conditions.

\begin{proposition}
\label{prop:emptiness-universality-dpw-generalized-nl-complete}
    The problem of deciding whether an $\omega$-regular language $L \subseteq \Sigma^\omega$ is not empty (resp.\ universal) is \nl{}-complete if $L$ is accepted by a generalized \NPW{} (resp.\ \DPW{}) with a constant number of parity conditions.
\end{proposition}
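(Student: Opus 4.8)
The plan is to prove membership and \nlHard{}ness separately for each of the two problems, starting with \textbf{non-emptiness of a generalized \NPW{}} $\aut{A}$. Write its parity conditions as $C_1,\dots,C_k$, with priority functions $\alpha_1,\dots,\alpha_k$, combined by a positive Boolean combination $\varphi$; recall that $k$ is a constant, and we may assume $k\geq 1$ since for $k=0$ the formula $\varphi$ is a constant and the problem is trivial. The first step is the routine observation that an accepting run may be taken ultimately periodic and that acceptance depends only on $\infOcc{\rho}$: $\rho$ is accepting iff $\varphi$ is true under the assignment that sends $C_j$ to \emph{true} exactly when $\max_{q\in\infOcc{\rho}}\alpha_j(q)$ is even. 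Consequently $L(\aut{A})\neq\varnothing$ iff there exist an assignment $\beta\models\varphi$; integers $p_1,\dots,p_k$ with each $p_j$ of the parity dictated by $\beta(C_j)$; a state $h$ reachable from the initial state; states $r_1,\dots,r_k$ with $\alpha_j(r_j)=p_j$; and a closed walk through $h,r_1,\dots,r_k$ all of whose vertices $q$ satisfy $\alpha_j(q)\le p_j$ for every $j$ (along such a walk the $j$-th maximal priority is exactly $p_j$: it is $\le p_j$ by the restriction and $\ge p_j$ because $r_j$ is visited).

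This characterization yields an \nl{} algorithm. One nondeterministically guesses $\beta$ (this is $\bigO{1}$ bits) and checks $\varphi(\beta)$, which is Boolean-formula evaluation and lies in logarithmic space; guesses $p_1,\dots,p_k$ (together $\bigO{\log|\aut{A}|}$ bits); guesses $h$ and $r_1,\dots,r_k$ (constantly many states); and then performs a constant number of chained reachability queries: that $h$ is reachable from the initial state in $\aut{A}$, and that, inside the subautomaton $\aut{A}'$ induced by the vertices obeying all bounds $\alpha_j(\cdot)\le p_j$, there is a path $h\leadsto r_1\leadsto\cdots\leadsto r_k\leadsto h$ (of length at least one when endpoints coincide). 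Membership of a vertex in $\aut{A}'$ is decided on the fly from the guessed $p_j$; each query is ordinary directed reachability, hence in \nl{}; and \nl{} is closed under sequentially composing a constant number of \nl{} subroutines, so the whole procedure runs in \nl{}.

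For \textbf{universality of a generalized \DPW{}} $\aut{A}$, determinism makes complementation essentially free: replacing $\varphi$ by $\neg\varphi$ and driving the negations to the leaves by De Morgan turns each $\neg C_j$ into a genuine parity condition, namely the one with priority function $\alpha_j+1$, so the complement automaton $\overline{\aut{A}}$ is again a generalized automaton — in particular a generalized \NPW{} — with $k$ parity conditions, and it is obtained from $\aut{A}$ in logarithmic space. Since $\aut{A}$ is deterministic, $L(\aut{A})=\Sigma^\omega$ iff $L(\overline{\aut{A}})=\varnothing$, and the latter is in \nl{} by the previous two paragraphs; hence universality is in \nl{}.

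The \nlHard{}ness of both problems follows by standard encodings of directed $s$--$t$ reachability $\mathsf{STCON}$, which is \nlComplete{}. For non-emptiness: from $(G,s,t)$ build a unary \NPW{} on vertex set $V$ with initial state $s$, one transition per edge of $G$ plus a self-loop at every vertex, and a single parity condition giving priority $0$ to $t$ and $1$ to all other vertices; an accepting run exists iff some run reaches $t$, i.e.\ iff $t$ is reachable from $s$. For universality: from $(G,s,t)$ build a deterministic automaton over alphabet $E\cup\{\#\}$ that traces the walk of $G$ spelled out by the letters read, diverts to an accepting sink on any letter inconsistent with the current vertex, and diverts to a rejecting sink once $t$ is reached; it uses one parity condition, is built in logarithmic space, and fails to be universal exactly when $t$ is reachable from $s$, so non-universality is $\mathsf{STCON}$-hard and, since $\mathsf{NL}=\mathsf{coNL}$, universality is \nlHard{}. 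I expect the main obstacle to be the membership half for non-emptiness: the key point is that constancy of $k$ is precisely what lets a logspace-nondeterministic machine commit to a complete priority profile $(\beta,p_1,\dots,p_k)$ of the sought accepting lasso together with constantly many witness vertices, reducing its existence to a constant-length composition of plain graph-reachability checks — a reduction that breaks down once $k$ is part of the input. Given that, the complementation step for universality and the two reachability reductions are routine.
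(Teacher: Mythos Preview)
Your proof is correct; the overall structure matches the paper's (membership for non-emptiness, then complementation for universality, then hardness), but the \emph{non-emptiness membership} argument is organized differently. The paper invokes a small-model bound (from~\cite{PatriciaBouyerBMU15}) guaranteeing an accepting lasso $\mu(\nu)^\omega$ with $|\mu|,|\nu|\le |Q|^2$, then guesses that lasso on the fly while maintaining, for each of the $k$ conditions, the maximum priority seen in $\nu$; constancy of $k$ is what makes this bookkeeping fit in logarithmic space. You instead guess a full ``priority profile'' $(\beta,p_1,\dots,p_k)$ together with witness states $h,r_1,\dots,r_k$ and reduce existence of the accepting cycle to a constant number of directed-reachability queries in the priority-bounded subgraph. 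Both arguments exploit constancy of $k$ in the same essential way; your decomposition is a bit more modular and avoids citing an external lasso bound, while the paper's on-the-fly guessing is more direct. For \nlHard{}ness the paper simply notes that \NBW{} non-emptiness and \DBW{} universality are already \nlComplete{} and that B\"uchi is a special case of a single parity condition; your explicit $\mathsf{STCON}$ reductions are more self-contained but arrive at the same place (one minor point: in your universality gadget you should state explicitly that the ``walking'' states carry an accepting priority, so that every input not spelling a walk to $t$ is accepted).
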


\begin{proof}
    Let us show the \nl{}-membership for the non-emptiness problem for both statements. Consider a generalized \NPW{} $\aut{A}$ with a positive Boolean combination of $d$ parity conditions, with $d$ constant. Since parity conditions only handle infinite occurrences of states of a run, by~\cite[Proposition 3.1]{PatriciaBouyerBMU15}, we know that if there is a word $w$ accepted by $\aut{A}$, then there exists a lasso $\mu(\nu)^\omega$ accepted by $\aut{A}$ such that $|\mu|,|\nu| \leq |Q|^2$, where $|Q|$ is the number of states of $\aut{A}$. Hence, we can guess the length $n \leq 2|Q|^2$ of such a lasso, and the lasso itself on the fly, state by state. To check whether the guessed lasso is accepted by $\aut{A}$, we retain the maximum priority occurring in $\nu$ for each of the $d$ parity conditions, in a way to verify whether the Boolean combination of those conditions is true or not. At any time, the amount of information to be stored is logarithmic since a state and
    $d$ priorities take logarithmic space when written in binary (recall that $d$ is constant).

    For the universality problem in case $\aut{A}$ is deterministic, it amounts to solve the non-emptiness problem for the complement automaton $\aut{A}'$. The latter automaton is constructed from $\aut{A}$ where in its Boolean combination, each $\vee$ (resp.\ $\wedge$) is replaced by $\wedge$ (resp.\ $\vee$), and the priorities are all incremented by one in each of its parity conditions. So, the size of $\aut{A}'$ is the same and the determinism is preserved. Hence, we can perform the same algorithm as above.

    Since the non-emptiness for \NBWs{} and the universality problems for \DBWs{} are both \nlComplete{}~\cite{handbook-of-model-checking-orna-kupferman}, and a B\"uchi condition is a special case of a generalized parity condition, we get the \nlHard{}ness of every statement.
\end{proof}

We now proceed to the proof of \cref{prop:PropertiesRelations}. We denote by $\aut{A}$ a \DPW{} accepting an $\omega$-automatic relation $R \subseteq \Sigma^\omega \times \Sigma^\omega$.

\propertiesrelations*

\begin{proof}[Proof of \cref{prop:PropertiesRelations} - reflexivity]
    Let us start with the \nl{}-membership. From $\aut{A}$, we construct a new automaton $\aut{A}'$ over $\Sigma$ by first erasing all transitions labeled by $(a,b)$ with $a \neq b$, and then replacing each label $(a,a)$ of the remaining transitions by $a$. Clearly, $\aut{A}'$ is a \DPW{} and we get that $\{(x,x) \mid x \in \Sigma^\omega\} \subseteq \lang{\aut{A}}$ if and only if $\lang{\aut{A}'} = \Sigma^\omega$. Therefore, testing whether $R$ is reflexive reduces to checking the universality of $\aut{A}'$. The latter problem is \nl{} for \DPWs{} by \cref{prop:emptiness-universality-dpw-generalized-nl-complete}.
    Note that $\aut{A}'$ is constructed on the fly from $\aut{A}$ while guessing a lasso not accepted by $\aut{A}'$. This requires to use pointers on states and transitions of $\aut{A}$, those pointers being stored in a logarithmic space of the work tape.

    \medskip

    Let us continue with the \nl{}-hardness. We proceed by reduction from the universality problem for \DBWs{} which is \nlComplete{}~\cite{handbook-of-model-checking-orna-kupferman}. Given such an automaton $\aut{A}$ over $\Sigma$, we construct a new automaton $\aut{A}' = \aut{A} \times \aut{A}$ that accepts the relation $R = \lang{A} \times \lang{A}$ over $\Sigma \times \Sigma$. The automaton $\aut{A}'$ has a size $|\aut{A}|^2$ and is a generalized \DBW{} with a conjunction of two B\"uchi conditions, and can thus be transformed into a \DBW{} $\aut{B}$ of size $2|\aut{A}'|^2$~\cite{handbook-of-model-checking-orna-kupferman}. Moreover, we have that for all $x \in \Sigma^\omega$, $x \in \lang{\aut{A}}$ if and only if $(x,x) \in \lang{\aut{B}}$. That is, $\lang{\aut{A}} = \Sigma^\omega$ if and only if $R$ is reflexive. This establishes the correctness of the reduction. Finally, it is a logspace reduction. Indeed, we need to use pointers on the input tape, stored in the work tape, in a way to iterate on the states and transitions of $\aut{A}$, and construct step by step the states and transitions of $\aut{B}$ on the output tape; we also need to store an extra bit to remember the copy of $\aut{A}'$ (to get the \DBW{}~$\aut{B}$).
\end{proof}

\begin{proof}[Proof of \cref{prop:PropertiesRelations} - irreflexivity]
    For irreflexivity, note that $R$ is irreflexive if and only if $(\Sigma^\omega \times \Sigma^\omega) \ssetminus R$ is reflexive. So, as the complement of a \DPW{} is still a \DPW{} of the same size, we can repeat the same \nl{} algorithm as above with $(\Sigma^\omega \times \Sigma^\omega) \ssetminus R$, and a reduction from deciding the reflexivity of a relation to obtain the \nlHard{}ness.
\end{proof}

\begin{proof}[Proof of \cref{prop:PropertiesRelations} - transitivity]
    We begin with the \nl{}-membership. The relation $R$ is not transitive if there exist $x,y,z \in \Sigma^\omega$ such that $(x,y) \in R$ and $(y,z) \in R$ but $(x,z) \not\in R$. Testing whether $R$ is not transitive thus reduces to testing whether the language
    \[
    L = \{(x,y,z) \in (\Sigma^\omega)^3 \mid (x,y) \in R \wedge (y,z) \in R \wedge (x,z) \not\in R \}
    \]
    is not empty. We can construct an automaton $\aut{A}'$ accepting $L$ which is a generalized \DPW{} with a conjunction of three parity conditions. This automaton has a polynomial size (the complementation of $\aut{A}$ leads to an automaton with the same size as it suffices to increment each priority by $1$). Hence, we can check whether $L$ is not empty with the \nl{} algorithm of \cref{prop:emptiness-universality-dpw-generalized-nl-complete}. Note that $\aut{A}'$ is constructed on the fly while guessing a lasso accepted by $\aut{A}'$.

    \medskip

    We now prove the \nlHard{}ness. We use a reduction from the universality problem of \DBWs{} which is \nlComplete{}~\cite{handbook-of-model-checking-orna-kupferman}. Given a \DBW{} $\aut{A}$ over the alphabet $\Sigma$, we construct the relation $R = R_1 \cup R_2 \cup R_3 \cup R_4$ on $(\Sigma')^\omega \times (\Sigma')^\omega$ with $\Sigma' = \Sigma \cup \{\#\}$ such that:
    \begin{multicols}{2}
    \begin{itemize}
        \item $R_1 = \Sigma^\omega \times \{\#^\omega\}$,
        \item $R_2 = \{\#^\omega\} \times \Sigma^\omega$,
        \item $R_3 = \lang{\aut{A}} \times \lang{\aut{A}}$,
        \item $R_4 = \{(\#^\omega,\#^\omega)\}$.
    \end{itemize}
    \end{multicols}
    Let us show the correctness of the reduction. Suppose that $\lang{\aut{A}} = \Sigma^\omega$, it means that $R_3 = \Sigma^\omega \times \Sigma^\omega$. Let $x,y,z \in (\Sigma')^\omega$ such that $(x,y) \in R$ and $(y,z) \in R$, and let us show that $(x,z) \in R$. If $x = \#^\omega$, then $y \in \Sigma^\omega$ or $y = \# ^\omega$, so $(x,z) \in R_2 \cup R_4$. Otherwise $x \neq \#^\omega$, thus by $R_3 = \Sigma^\omega \times \Sigma^\omega$ and $R_1$, we have $(x,z) \in R_1 \cup R_3$.
    Conversely, if $\lang{\aut{A}} \neq \Sigma^\omega$, then there exists $x \in \Sigma^\omega \ssetminus \lang{\aut{A}}$. So, we have $(x,\#^\omega) \in R_1$, $(\#^\omega,x) \in R_2$ but $(x,x) \not\in R$. Hence, $\lang{\aut{A}} = \Sigma^\omega$ if and only if $R$ is transitive. Furthermore, $R$ is accepted by a \DBW{} $\aut{A}'$ that we construct as follows. Each $R_i$, $i\neq 3$, is trivially accepted by a \DBW{} of constant size. For $R_3$, we construct the automaton $\aut{A} \times \aut{A}$ which is a generalized \DBW{} with a conjunction of two B\"uchi conditions, that can be transformed into a \DBW{}. We then create a new initial state $q_0$ (replacing the initial state $q_0^i$ for each $R_i$) and duplicate the transitions outgoing\footnote{of the form $(a,\#)$ (resp.\ $(\#,a), (a,b), (\#,\#))$ for $R_1$ (resp.\ $R_2, R_3, R_4$), with $a,b \in \Sigma$.} each $q_0^i$ as a transition outgoing $q_0$. In this way we get a \DBW{} $\aut{A}'$ accepting $R$ of polynomial size. One can check that this reduction is a logspace reduction.
\end{proof}

\begin{proof}[Proof of \cref{prop:PropertiesRelations} - $\neg$-transitivity]
    Note that $R$ is $\neg$-transitive if and only if $(\Sigma^\omega \times \Sigma^\omega) \ssetminus R$ is transitive. Hence, as the complement of a \DPW{} is still a \DPW{} of the same size, we can repeat the same \nl{} algorithm as for transitivity with $(\Sigma^\omega \times \Sigma^\omega) \ssetminus R$, and a reduction from deciding the transitivity of a relation to obtain the \nlHard{}ness.
\end{proof}

\begin{proof}[Proof of \cref{prop:PropertiesRelations} - totality]
    We begin with the \nl{}-membership. Recall that $R$ is total if for all $x,y \in \Sigma^\omega$, we have $(x=y) \vee ((x,y) \in R) \vee ((y,x) \in R)$. Equivalently, $R$ is total if $R' = R_{=} \cup R \cup \tilde{R}$ is equal to $ \Sigma^\omega \times \Sigma^\omega$ such that $R_{=} = \{(x,x) \mid x \in \Sigma^\omega\}$ and $\tilde{R} = \{(y,x) \mid (x,y) \in R \}$. From $\aut{A}$, it is easy to construct a generalized \DPW{} $\aut{A}'$ of polynomial size with a disjunction of three parity conditions which accepts $R'$. We then run the \nl{} algorithm of \cref{prop:emptiness-universality-dpw-generalized-nl-complete} to check whether $\aut{A}'$ is universal while constructing $\aut{A}'$ on the fly.

    \medskip

    Let us turn to the \nl{}-hardness. We use a reduction from the universality problem of \DBWs{} which is \nlComplete{}~\cite{handbook-of-model-checking-orna-kupferman}. From a \DBW{} $\aut{A}$ over the alphabet $\Sigma$, we define the alphabet $\Sigma' = \Sigma \cup \{\#\}$ and the relation $R = R_1 \cup R_2$ where:
    \begin{itemize}
        \item $R_1 = (\Sigma^\omega \cup \{\#^\omega\}) \times \lang{\aut{A}}$ and
        \item $R_2 = L \times (\Sigma')^\omega$, with $L = (\Sigma')^\omega \ssetminus (\Sigma^\omega \cup \{\#^\omega\})$.
    \end{itemize}
    Let us prove the correctness of the reduction.
    Suppose that $\lang{\aut{A}} \neq \Sigma^\omega$. Given $x \in \Sigma^\omega \ssetminus \lang{\aut{A}}$, we have $x \neq \#^\omega$, $(\#^\omega,x) \not\in R$, and $(x,\#^\omega) \not\in R$. Hence, $R$ is not total.
    Suppose now that $R$ is not total, i.e., there exists $x \neq y$ such that $(x,y), (y,x) \not\in R$. By definition of $R_2$, we must have $x, y \in \Sigma^\omega \cup \{\#^\omega\}$. Since $x \neq y$, we know that $x$ or $y$ is in $\Sigma^\omega$, w.l.o.g., let us say $y \in \Sigma^\omega$. As $(x,y) \not\in R_1$, we get $y \not\in \lang{\aut{A}}$. Therefore $\lang{\aut{A}} = \Sigma^\omega$ if and only if $R$ is total.

    Moreover, $R$ is accepted by the \DBW{} $\aut{A}'$ depicted in \cref{fig:reduction-totality}. Let us give some explanations. W.l.o.g., we suppose that the automaton $\aut{A}$ is complete. The part of $\aut{A}'$ composed of the initial state $q_0$ and the two copies of $\aut{A}$ accepts the relation $R_1 = (\Sigma^\omega \times \lang{\aut{A}}) \cup (\{\#^\omega\} \times \lang{\aut{A}})$. The relation $R_2$ is accepted thanks to the accepting sink state $q_s$. Note that $L$ is composed of all words containing at least one symbol of $\Sigma$ and one symbol $\#$, a condition that has to be satisfied to reach $q_s$. The states $q_{\Sigma}$ and $q_{\#}$ retain the information that it is impossible to have $(x,y) \in R_1$ due to the symbol $\#$ seen inside $y$, implying $y \not\in \lang{\aut{A}}$. Finally, note that $\aut{A}'$ is deterministic, thus a \DBW{} and in particular a \DPW{}. This completes the proof as the proposed reduction is a logspace reduction.
\end{proof}
\begin{figure}
    \centering
    \begin{tikzpicture}[x=0.75pt,y=0.75pt,yscale=-1]
        \draw (370.97,117.93) .. controls (391.4,108.78) and (391.22,133.91) .. (373.88,126.87);
        \draw [shift={(371.32,125.65)},rotate=28.34,fill={rgb,255:red,0;green,0;blue,0},line width=0.08, draw opacity=0] (5.36,-2.57) -- (0,0) -- (5.36,2.57) -- (3.56,0) -- cycle;
        \draw (175.67,124.09) .. controls (175.67,118.56) and (180.15,114.09) .. (185.67,114.09) .. controls (191.19,114.09) and (195.67,118.56) .. (195.67,124.09) .. controls (195.67,129.61) and (191.19,134.09) .. (185.67,134.09) .. controls (180.15,134.09) and (175.67,129.61) .. (175.67,124.09) -- cycle;
        \draw (162.37,124.39) -- (172.67,124.15);
        \draw [shift={(175.67,124.09)},rotate=178.7,fill={rgb,255:red,0;green,0;blue,0},line width=0.08, draw opacity=0] (5.36,-2.57) -- (0,0) -- (5.36,2.57) -- (3.56,0) -- cycle;
        \draw (185.67,114.09) .. controls (191,96.11) and (221.27,97.32) .. (253.02,98.8);
        \draw [shift={(255.97,98.93)},rotate=182.65,fill={rgb,255:red,0;green,0;blue,0},line width=0.08, draw opacity=0] (5.36,-2.57) -- (0,0) -- (5.36,2.57) -- (3.56,0) -- cycle;
        \draw (185.67,134.09) .. controls (188.68,153.07) and (219.94,154.76) .. (252.64,154.76);
        \draw [shift={(255.16,154.75)},rotate=179.86,fill={rgb,255:red,0;green,0;blue,0},line width=0.08, draw opacity=0] (5.36,-2.57) -- (0,0) -- (5.36,2.57) -- (3.56,0) -- cycle;
        \draw (185.67,134.09) .. controls (197.1,180.12) and (224.48,182.06) .. (327.99,185.5);
        \draw [shift={(329.56,185.55)},rotate=181.9,fill={rgb,255:red,0;green,0;blue,0},line width=0.08, draw opacity=0] (5.36,-2.57) -- (0,0) -- (5.36,2.57) -- (3.56,0) -- cycle;
        \draw (185.67,114.09) .. controls (193.52,78.13) and (223.26,62.91) .. (329.66,61.8);
        \draw [shift={(331.27,61.79)},rotate=179.49,fill={rgb,255:red,0;green,0;blue,0},line width=0.08, draw opacity=0] (5.36,-2.57) -- (0,0) -- (5.36,2.57) -- (3.56,0) -- cycle;
        \draw (330.17,184.01) .. controls (330.17,178.49) and (334.64,174.01) .. (340.17,174.01) .. controls (345.69,174.01) and (350.17,178.49) .. (350.17,184.01) .. controls (350.17,189.54) and (345.69,194.01) .. (340.17,194.01) .. controls (334.64,194.01) and (330.17,189.54) .. (330.17,184.01) -- cycle;
        \draw (331.27,61.79) .. controls (331.27,56.26) and (335.75,51.79) .. (341.27,51.79) .. controls (346.79,51.79) and (351.27,56.26) .. (351.27,61.79) .. controls (351.27,67.31) and (346.79,71.79) .. (341.27,71.79) .. controls (335.75,71.79) and (331.27,67.31) .. (331.27,61.79) -- cycle;
        \draw (297.96,165.55) -- (327.97,177.06);
        \draw [shift={(330.77,178.13)},rotate=200.98,fill={rgb,255:red,0;green,0;blue,0},line width=0.08, draw opacity=0] (5.36,-2.57) -- (0,0) -- (5.36,2.57) -- (3.56,0) -- cycle;
        \draw (296.77,86.93) -- (330.09,69.73);
        \draw [shift={(332.76,68.35)},rotate=152.7,fill={rgb,255:red,0;green,0;blue,0},line width=0.08, draw opacity=0] (5.36,-2.57) -- (0,0) -- (5.36,2.57) -- (3.56,0) -- cycle;
        \draw (350.17,58.33) .. controls (376.46,51.82) and (374.94,69.63) .. (353.18,67.52);
        \draw [shift={(350.36,67.15)},rotate=9.31,fill={rgb,255:red,0;green,0;blue,0},line width=0.08, draw opacity=0] (5.36,-2.57) -- (0,0) -- (5.36,2.57) -- (3.56,0) -- cycle;
        \draw (349.16,189.15) .. controls (375.79,192.24) and (376.73,175.21) .. (351.6,179.41);
        \draw [shift={(348.76,179.95)},rotate=347.91,fill={rgb,255:red,0;green,0;blue,0},line width=0.08, draw opacity=0] (5.36,-2.57) -- (0,0) -- (5.36,2.57) -- (3.56,0) -- cycle;
        \draw [fill={rgb,255:red,0;green,0;blue,0},fill opacity=0.03,dash pattern={on 4.5pt off 4.5pt}] (255.96,83.18) .. controls (255.96,78.69) and (259.6,75.05) .. (264.08,75.05) -- (294.24,75.05) .. controls (298.72,75.05) and (302.36,78.69) .. (302.36,83.18) -- (302.36,111.43) .. controls (302.36,115.92) and (298.72,119.55) .. (294.24,119.55) -- (264.08,119.55) .. controls (259.6,119.55) and (255.96,115.92) .. (255.96,111.43) -- cycle;
        \draw [fill={rgb,255:red,0;green,0;blue,0},fill opacity=0.03,dash pattern={on 4.5pt off 4.5pt}] (255.96,138.18) .. controls (255.96,133.69) and (259.6,130.05) .. (264.08,130.05) -- (294.24,130.05) .. controls (298.72,130.05) and (302.36,133.69) .. (302.36,138.18) -- (302.36,166.43) .. controls (302.36,170.92) and (298.72,174.55) .. (294.24,174.55) -- (264.08,174.55) .. controls (259.6,174.55) and (255.96,170.92) .. (255.96,166.43) -- cycle;
        \draw (299.17,102.13) -- (343.7,116.19);
        \draw [shift={(346.56,117.09)},rotate=197.52,fill={rgb,255:red,0;green,0;blue,0},line width=0.08, draw opacity=0] (5.36,-2.57) -- (0,0) -- (5.36,2.57) -- (3.56,0) -- cycle;
        \draw (298.37,142.53) -- (345.28,129.12);
        \draw [shift={(348.16,128.29)},rotate=164.04,fill={rgb,255:red,0;green,0;blue,0},line width=0.08, draw opacity=0] (5.36,-2.57) -- (0,0) -- (5.36,2.57) -- (3.56,0) -- cycle;
        \draw (340.17,174.01) -- (353.9,137.5);
        \draw [shift={(354.96,134.69)},rotate=110.62,fill={rgb,255:red,0;green,0;blue,0},line width=0.08, draw opacity=0] (5.36,-2.57) -- (0,0) -- (5.36,2.57) -- (3.56,0) -- cycle;
        \draw (341.27,71.79) -- (354.31,106.68);
        \draw [shift={(355.36,109.49)},rotate=249.51,fill={rgb,255:red,0;green,0;blue,0},line width=0.08, draw opacity=0] (5.36,-2.57) -- (0,0) -- (5.36,2.57) -- (3.56,0) -- cycle;
        \draw (349.47,122.09) .. controls (349.47,116.9) and (353.68,112.69) .. (358.87,112.69) .. controls (364.06,112.69) and (368.26,116.9) .. (368.26,122.09) .. controls (368.26,127.28) and (364.06,131.49) .. (358.87,131.49) .. controls (353.68,131.49) and (349.47,127.28) .. (349.47,122.09) -- cycle;
        \draw (346.5,122.09) .. controls (346.5,115.26) and (352.04,109.72) .. (358.87,109.72) .. controls (365.7,109.72) and (371.23,115.26) .. (371.23,122.09) .. controls (371.23,128.92) and (365.7,134.46) .. (358.87,134.46) .. controls (352.04,134.46) and (346.5,128.92) .. (346.5,122.09) -- cycle;
        \draw (283.14,157.7) node [anchor=north west,inner sep=0.75pt, font=\small,align=left] {$\aut{A}$};
        \draw (264.77,137.3) node [anchor=north west,inner sep=0.75pt, font=\small,align=left] {$\#,\Sigma$};
        \draw (283.3,102.5) node [anchor=north west,inner sep=0.75pt, font=\small,align=left] {$\aut{A}$};
        \draw (318.84,95.45) node [anchor=north west,inner sep=0.75pt, font=\scriptsize,rotate=-17.52,align=left] {$\#,*$};
        \draw (308.58,126.32) node [anchor=north west,inner sep=0.75pt, font=\scriptsize,rotate=-342.72,align=left] {$\Sigma,*$};
        \draw (212.17,102.7) node [anchor=north west,inner sep=0.75pt, font=\small,align=left] {$\Sigma,\Sigma$};
        \draw (211.37,138.7) node [anchor=north west,inner sep=0.75pt, font=\small,align=left] {$\#,\Sigma $};
        \draw (248.58,184.05) node [anchor=north west,inner sep=0.75pt, font=\small,rotate=-4.16,align=left] {$\#,\#$};
        \draw (254.36,50.43) node [anchor=north west,inner sep=0.75pt, font=\small,rotate=-355.33,align=left] {$\Sigma,\#$};
        \draw (351.97,82.9) node [anchor=north west,inner sep=0.75pt, font=\scriptsize,align=left] {$\#,*$};
        \draw (350.37,149.9) node [anchor=north west,inner sep=0.75pt, font=\scriptsize,align=left] {$\Sigma,*$};
        \draw (370.77,55.5) node [anchor=north west,inner sep=0.75pt, font=\footnotesize,align=left] {$\Sigma,*$};
        \draw (374.37,176.3) node [anchor=north west,inner sep=0.75pt, font=\footnotesize,align=left] {$\#,*$};
        \draw (307.45,155.28) node [anchor=north west,inner sep=0.75pt, font=\scriptsize,rotate=-20.36,align=left] {$\#,\#$};
        \draw (306.77,83.68) node [anchor=north west,inner sep=0.75pt, font=\scriptsize,rotate=-331.55,align=left] {$\Sigma,\#$};
        \draw (333.57,58.5) node [anchor=north west,inner sep=0.75pt, font=\small,align=left] {$q_{\Sigma}$};
        \draw (178.77,119.9) node [anchor=north west,inner sep=0.75pt, font=\small,align=left] {$q_{0}$};
        \draw (331.77,179.1) node [anchor=north west,inner sep=0.75pt, font=\small,align=left] {$q_{\#}$};
        \draw (351.37,118.1) node [anchor=north west,inner sep=0.75pt, font=\small,align=left] {$q_{s}$};
        \draw (387.77,118.1) node [anchor=north west,inner sep=0.75pt, font=\small,align=left] {$*,*$};
        \draw (265.77,84.3) node [anchor=north west,inner sep=0.75pt, font=\small,align=left] {$\Sigma,\Sigma$};
    \end{tikzpicture}
    \caption{The \DBW{} $\aut{A}'$ used in the reduction for the totality property of \cref{prop:PropertiesRelations}.}
    \label{fig:reduction-totality}
\end{figure}

\section{Adaptation of Reductions for Preorders}
\label{app:reduction-preorders}

In this section, we present how to adapt the hardness proofs of \cref{theorem:nash-checking-pspace,theorem:OutcomeCheck,theorem:NEexistenceGeneral,theorem:constrainedNEexistenceGeneral} for $\omega$-automatic preorders (instead of $\omega$-automatic relations). For every proof, the general idea is to modify each \DPW{} used in the reductions so that it accepts a preorder $\leqRelation$ extending the relation $\leqRelationStrict$ initially accepted (i.e., in the sense that $x \leqRelationStrict y$ if and only if $x \leqRelation y$ and $y \not\leqRelation x$).

\subparagraph*{NE Checking Problem.} For the \pspaceHard{}ness of \cref{theorem:nash-checking-pspace} presented in \cref{app:hardness-nash-checking}, relations $\leqRelationStrict[i]$ are empty for $1 \leq i \leq n$, while $\mathord{\leqRelationStrict[n+1]} = L^c \times L$. The only small modification is to define $\mathord{\leqRelation[i]} = V^\omega \times V^\omega$, for $1 \leq i \leq n$, and $\mathord{\leqRelation[n+1]} = (L^c \times L) \cup (L^c \times L^c) \cup (L \times L)$. All these relations are now reflexive and transitive. However, some explanation is necessary for the transitivity of $\leqRelation[n+1]$. Let $x,y,z \in V^\omega$ be such that $x \leqRelation[n+1] y$ and $y \leqRelation[n+1] z$. If $y \in L^c$, we must have $x \in L^c$ so $x \leqRelation[n+1] z$. Otherwise, $y \in L$, then we must have $z \in L$ so we also have $x \leqRelation[n+1] z$. As $L$ and $L^c$ are \DBWs{}, the cartesian product of two \DBWs{} is a generalized \DBW{} with a conjunction of two Büchi conditions, thus a \DBW{}~\cite{handbook-of-model-checking-orna-kupferman}, and the union of two \DBWs{} is still a \DBW{}, we can conclude because a \DBW{} is in particular a \DPW{}.

\subparagraph*{NE Outcome Checking Problem.} For the $\mathsf{Parity}$-hardness of \cref{theorem:OutcomeCheck}, the empty relation $\leqRelationStrict[2]$ is transformed into $\mathord{\leqRelation[2]} = (V')^\omega \times (V')^\omega$ (recall that $V' = V \cup \{v_0'\}$ for $v_0'$ the new initial state). We then modify the \DPW{} $\aut{A}_1$ for $\leqRelationStrict[1]$ of \Cref{fig:outcome-checking-parity-reduction} by adding a new state $q_s$ with priority~$0$ and new transitions $(q_0,(v,v),q_s)$ for all $v \in V$, and $(q_s,(v,v),q_s)$ for all $v \in V'$. The modified \DPW{} accepts $\leqRelation[1]$, the reflexive closure of $\leqRelationStrict[1]$. One can easily check that it is transitive.

\subparagraph*{NE Existence Problem.} To show the \pspaceHard{}ness of \cref{theorem:NEexistenceGeneral} for preorders, we need to show the \pspaceHard{}ness of \cref{prop:maximum-dpw-pspace} for preorders (see \Cref{app:nash-existence-tools}).

Let us consider the relation $\leqRelationStrict'$ in the hardness proof of \cref{prop:maximum-dpw-pspace} and the \DPW{} $\aut{A}'$ accepting it. We modify this automaton into a \DPW{} $\aut{B}$ accepting $\leqRelation'$. We add to $\aut{A}'$ two extra states:
\begin{itemize}
    \item a sink state $q_s$ with a loop labeled $(b,b)$ for each $b \in Q \cup \Sigma$,
    \item a state $q_0'$ that becomes the new initial state, with the transitions $\delta'(q_0',(a,q)) = \delta'(q_0,(a,q))$ whenever the transition exists from $q_0$, and $\delta'(q_0',(b,b)) = q_s$ for all $b \in Q \cup \Sigma$.
\end{itemize}
The modified automaton is still deterministic. The priority function $\alpha$ is extended such that
$\alpha(q_0') = 1$ and $\alpha(q_s) = 2$. The relation accepted by $\aut{B}$ is clearly reflexive (thanks to the sink state $q_s$). Let us show that it is also transitive: let $x,y,z \in (\Sigma')^\omega$ be such that $(x,y)$ and $(y,z) \in R$. As $(t,t') \in R$ with $t \neq t'$ implies that $t \in \Sigma^\omega$ and $t' \in Q^\omega$, we must have $x = y$ or $y = z$. It follows that the relation of $\aut{B}$ is transitive, and thus it is an $\omega$-automatic preorder~$\leqRelation$.

Now that \cref{prop:maximum-dpw-pspace} holds for preorders, let us modify the reduction of \cref{theorem:NEexistenceGeneral} presented in \cref{app:nash-existence-tools}. In \cref{fig:reduction-nash-existence-pspace}, we can assume that $\aut{A}$ is complete. Moreover, we add an extra state $q_s$ with a priority $0$ and transitions $(q_0',(a,a),q_s)$ for all $a \in \Sigma$, $(q_s,(a,a),q_s)$ for all $a \in \Sigma \cup \{v_0\}$, and $(q,(v_0,v_0),q_s)$ for all $q \not\in \{q_0,q_s\}$.
One can easily verify that the new relation $\leqRelation'$ is reflexive. Let us prove that it is transitive. Let $x \leqRelation' y$ and $y \leqRelation' z$, and let us discuss the following three cases:
\begin{itemize}
    \item If $y$ starts with some symbol $a \in \Sigma$, then $x=y=z$ and $x \leqRelation' z$.
    \item If $y$ is of the form $v_0y'$ with $y' \in \Sigma^\omega$, then it is also the case for $x$ and $z$. As $\leqRelation$ is transitive, we get that $x \leqRelation' z$.
    \item If $y$ is of the form $v_0w_yv_0y'$ with $w_y \in \Sigma^*$, then $x = v_0w_xv_0x'$ and $z = v_0w_zv_0z'$ with $w_x,w_z \in \Sigma^*$ and $|w_x| = |w_y| = |w_z|$. Therefore, as $\aut{A}$ is complete, we get $x \leqRelation' z$.
\end{itemize}

\begin{figure}
    \centering
    \begin{tikzpicture}[x=0.75pt,y=0.75pt,yscale=-1]
        \draw (324.41,114.61) .. controls (324.41,109.98) and (327.96,106.23) .. (332.35,106.23) .. controls (336.74,106.23) and (340.29,109.98) .. (340.29,114.61) .. controls (340.29,119.23) and (336.74,122.98) .. (332.35,122.98) .. controls (327.96,122.98) and (324.41,119.23) .. (324.41,114.61) -- cycle;
        \draw (306.26,114.98) -- (321.41,114.67);
        \draw [shift={(324.41,114.61)},rotate=178.81,fill={rgb,255:red,0;green,0;blue,0},line width=0.08,draw opacity=0] (5.36,-2.57) -- (0,0) -- (5.36,2.57) -- (3.56,0) -- cycle;
        \draw [fill={rgb,255:red,0;green,0;blue,0},fill opacity=0.03,dash pattern={on 4.5pt off 4.5pt}] (377.26,109.79) .. controls (377.26,106.05) and (380.29,103.03) .. (384.02,103.03) -- (434.14,103.03) .. controls (437.87,103.03) and (440.9,106.05) .. (440.9,109.79) -- (440.9,133.31) .. controls (440.9,137.04) and (437.87,140.07) .. (434.14,140.07) -- (384.02,140.07) .. controls (380.29,140.07) and (377.26,137.04) .. (377.26,133.31) -- cycle;
        \draw (340.29,114.61) -- (380.78,115);
        \draw [shift={(383.78,115.03)},rotate=180.56,fill={rgb,255:red,0;green,0;blue,0},line width=0.08,draw opacity=0] (5.36,-2.57) -- (0,0) -- (5.36,2.57) -- (3.56,0) -- cycle;
        \draw (384.01,114.31) .. controls (384.01,109.92) and (387.56,106.37) .. (391.94,106.37) .. controls (396.32,106.37) and (399.87,109.92) .. (399.87,114.31) .. controls (399.87,118.69) and (396.32,122.24) .. (391.94,122.24) .. controls (387.56,122.24) and (384.01,118.69) .. (384.01,114.31) -- cycle;
        \draw (324.78,154.52) .. controls (324.78,149.9) and (328.34,146.15) .. (332.73,146.15) .. controls (337.11,146.15) and (340.67,149.9) .. (340.67,154.52) .. controls (340.67,159.15) and (337.11,162.9) .. (332.73,162.9) .. controls (328.34,162.9) and (324.78,159.15) .. (324.78,154.52) -- cycle;
        \draw (325.3,150.07) .. controls (306.2,146.63) and (304.25,161.07) .. (322.66,159.18);
        \draw [shift={(325.4,158.8)},rotate=170.23,fill={rgb,255:red,0;green,0;blue,0},line width=0.08,draw opacity=0] (5.36,-2.57) -- (0,0) -- (5.36,2.57) -- (3.56,0) -- cycle;
        \draw (332.35,122.98) -- (332.48,143.27);
        \draw [shift={(332.5,146.27)},rotate=269.63,fill={rgb,255:red,0;green,0;blue,0},line width=0.08,draw opacity=0] (5.36,-2.57) -- (0,0) -- (5.36,2.57) -- (3.56,0) -- cycle;
        \draw (405.18,130.23) .. controls (404,153.33) and (380.02,154.96) .. (343.5,154.56);
        \draw [shift={(340.67,154.52)},rotate=0.76,fill={rgb,255:red,0;green,0;blue,0},line width=0.08,draw opacity=0] (5.36,-2.57) -- (0,0) -- (5.36,2.57) -- (3.56,0) -- cycle;
        \draw (462.03,119.65) .. controls (462.03,115.02) and (465.58,111.27) .. (469.97,111.27) .. controls (474.36,111.27) and (477.91,115.02) .. (477.91,119.65) .. controls (477.91,124.27) and (474.36,128.02) .. (469.97,128.02) .. controls (465.58,128.02) and (462.03,124.27) .. (462.03,119.65) -- cycle;
        \draw (476.9,115.47) .. controls (497.53,112.6) and (499.55,125.61) .. (479.92,124.46);
        \draw [shift={(477,124.2)},rotate=6.7,fill={rgb,255:red,0;green,0;blue,0},line width=0.08,draw opacity=0] (5.36,-2.57) -- (0,0) -- (5.36,2.57) -- (3.56,0) -- cycle;
        \draw (391.94,106.37) .. controls (413.05,89.39) and (451.69,87.75) .. (470.42,108.84);
        \draw [shift={(472.1,110.87)},rotate=232.19,fill={rgb,255:red,0;green,0;blue,0},line width=0.08,draw opacity=0] (5.36,-2.57) -- (0,0) -- (5.36,2.57) -- (3.56,0) -- cycle;
        \draw (342.8,100.13) node [anchor=north west,inner sep=0.75pt,font=\footnotesize,align=left] {$v'_{0},v'_{0}$};
        \draw (293.59,90) node [anchor=north west,inner sep=0.75pt,align=left] {$\aut{B}$};
        \draw (325.2,107.23) node [anchor=north west,inner sep=0.75pt,font=\small,align=left] {$q'_{0}$};
        \draw (286.2,149.53) node [anchor=north west,inner sep=0.75pt,font=\scriptsize,align=left] {$v,v$};
        \draw (326.6,149.43) node [anchor=north west,inner sep=0.75pt,font=\small,align=left] {$q_{s}$};
        \draw (332.35,127.98) node [anchor=north west,inner sep=0.75pt,font=\footnotesize,align=left] {$v,v$};
        \draw (384.8,109.43) node [anchor=north west,inner sep=0.75pt,font=\small,align=left] {$q_{0}$};
        \draw (339.2,135.13) node [anchor=north west,inner sep=0.75pt,font=\scriptsize,align=left] {$(v\neq v'_{0})$};
        \draw (347.2,154.53) node [anchor=north west,inner sep=0.75pt,font=\scriptsize,align=left] {$v'_{0},v'_{0}$; $v'_{1},v'_{1}$};
        \draw (462.4,111.83) node [anchor=north west,inner sep=0.75pt,font=\small,align=left] {$q'_{s}$};
        \draw (501.8,112.33) node [anchor=north west,inner sep=0.75pt,font=\footnotesize,align=left] {$v'_{1},v$};
        \draw (450.2,81.93) node [anchor=north west,inner sep=0.75pt,font=\footnotesize,align=left] {$v'_{1},v$};
        \draw (414.79,118.07) node [anchor=north west,inner sep=0.75pt,align=left] {$\aut{A}$};
        \draw (271.2,159.33) node [anchor=north west,inner sep=0.75pt,font=\scriptsize,align=left] {$(\forall v \in V')$};
        \draw (478.4,96.13) node [anchor=north west,inner sep=0.75pt,font=\scriptsize,align=left] {$(\forall v \in V)$};
    \end{tikzpicture}
    \caption{The modified automaton $\aut{B}$ for the hardness of \cref{theorem:constrainedNEexistenceGeneral} in the case of preorders.}
    \label{fig:reduction-pspace-hardness-nash-constrained-preorder}
\end{figure}

\subparagraph*{Constrained NE Existence Problem.} For the hardness of \cref{theorem:constrainedNEexistenceGeneral}, we have to modify the \DPW{} $\aut{A}'$ in \cref{fig:reduction-pspace-hardness-nash-constrained} into a \DPW{} $\aut{B}$ to make it reflexive in the following way (see \cref{fig:reduction-pspace-hardness-nash-constrained-preorder}). We add a new sink state $q_s'$ with a priority $0$. We add transitions $(q_0',(v,v),q'_s)$ for all $v \neq v'_0$, $(q_s',(v,v),q_s')$ for all $v \neq v'_0$, and $(q,(v'_0,v'_0),q_s')$, $(q,(v'_1,v'_1),q_s')$ for all state $q$ in the copy of~$\aut{A}$.

Let us prove that the relation $\leqRelation'$ accepted by the $\aut{B}$ is transitive (w.l.o.g., we suppose that $\aut{A}$ is complete). Let $x \leqRelation' y$ and $y \leqRelation' z$. Notice that $x \in v'_0V^\omega \Rightarrow y \in v'_0V^\omega$ and $y \in v'_0V^\omega \Rightarrow z \in v'_0V^\omega$ (due to the copy of $\aut{A}$ inside $\aut{B}$). Let us discuss the following cases:
\begin{itemize}
    \item If $y$ does not start with $v'_0$, then $x=y=z$ and $x \leqRelation' z$.
    Otherwise, $y$ starts with $v'_0$, thus also $x$ and $z$.
    \item If $x \in v'_0V^\omega$, then $y, z \in v'_0V^\omega$ and the runs with label $(x,y)$ and $(y,z)$ both stay in $\aut{A}_i$. As $\leqRelation$ is transitive, we get that $x \leqRelation' z$.
    \item If $x \not\in v'_0V^\omega$ and $y \in v'_0V^\omega$ (and thus $z \in v'_0V^\omega$), then $x = v'_0(v'_1)^\omega$ and then the run labeled by $(x,z)$ ultimately loops in $q'_s$, showing that $x \leqRelation' z$.
    \item If $x,y \not\in v'_0V^\omega$ and $z \in v'_0V^\omega$, then necessarily $x = y = v'_0(v'_1)^\omega$, and we get $x \leqRelation' z$.
    \item If $x,y,z \not\in v'_0V^\omega$, then $x$ (resp.\ $y$, $z$) is of the form $v'_0w_xv'_ix'$ (resp.\ $v'_0w_yv'_iy'$, $v'_0w_zv'_iz'$) with $w_x, w_y, w_z \in V^*$ such that $|w_x| = |w_y|= |w_z|$, and $i \in \{0,1\}$.
  Therefore, as $\aut{A}$ is complete, we get $x \leqRelation' z$.
\end{itemize}

\section{Proof of \texorpdfstring{\cref{prop:recognizable-finite-index}}{Proposition~\ref{prop:recognizable-finite-index}}}
\label{app:recognizable-finite-index}

\recognizablefiniteindex*

\begin{proof}
    We use~\cite[Lemma~$3$]{rational-relations-automatic-loding} to prove this proposition.
    Let $E_1 = \{(x,y) \mid \forall z,\, x \leqRelation z \text{ iff } y \leqRelation z\}$ and $E_2 = \{((x_1,x_2),(y_1,y_2)) \mid x_1 \leqRelation x_2 \mbox{ iff } y_1 \leqRelation y_2\}$ be two equivalence relations. From~\cite[Lemma~$3$]{rational-relations-automatic-loding}, we know that $\leqRelation$ is $\omega$-recognizable if and only if both $E_1$ and $E_2$ have finite index. Note that by definition, $E_2$ has at most two equivalence classes. Hence, to establish \cref{prop:recognizable-finite-index}, we will show that $E_1 = ~ \equivRelation$.
    Let $x,y \in \Sigma^\omega$. If $(x,y) \in E_1$, we can particularize $z$ to $x$ and then $y$, to get that $x \equivRelation y$ by reflexivity of $\leqRelation$.
    If $x \equivRelation y$, for all $z \in \Sigma^\omega$, we get by transitivity of $\leqRelation$ that $x \leqRelation z$ if and only if $y \leqRelation z$, thus $(x,y) \in E_1$.
\end{proof}

\section{Proof of the Existence of NE with Total \texorpdfstring{$\omega$}{ω}-Recognizable Preorders}
\label{app:NEtotal}

In this section, we assume that each game $\game$ has an $\omega$-recognizable preorder $\leqRelation[i]$ for each $i \in \Players$. We aim at proving \cref{theorem:existenceNE} when every relation is total.

By \cref{prop:recognizable-finite-index}, we denote by $\lattice[i]$ the finite lattice formed by the equivalence classes of $\equivRelation[i]$. We also denote by $[x]_i$ the equivalence class of the word $x$ with respect to $\equivRelation[i]$. By abusive notation, we write $[x]_i \leqRelation[i] [y]_i$ whenever $x \leqRelation[i] y$.

\subparagraph*{Useful Lemmas}

In addition to \cref{prop:recognizable-finite-index}, we first mention some useful properties about preorders.

\begin{lemma} \label{lem:lasso}
Let $\leqRelation$ be an $\omega$-recognizable preorder over $\Sigma$ and $\equivRelation$ the related equivalence relation. Then, in each equivalence class of $\equivRelation$, there exists a lasso. Moreover, given a word $x \in \Sigma^\omega$, there exists a prefix $\mu\nu$ of $x$ such that $x \equivRelation \mu (\nu)^\omega$.
\end{lemma}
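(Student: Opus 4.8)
The plan is to reduce both assertions to a single fact: \emph{every equivalence class of $\equivRelation$ is an $\omega$-regular language}. Granting that, the first assertion is immediate, since a class is non-empty and every non-empty $\omega$-regular language contains an ultimately periodic word, i.e.\ a lasso; moreover it will follow from the second assertion, so I would concentrate on the latter. For the second assertion I would fix $x$, set $C=[x]$, and pump a suitable prefix of $x$ up to a lasso that stays inside $C$.

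To see that each class is $\omega$-regular, I would use the $\omega$-recognizability of $\leqRelation$ directly. Writing $\mathord{\leqRelation}=\bigcup_{j=1}^{\ell} X_j\times Y_j$ with all $X_j,Y_j$ $\omega$-regular, and using that $\equivRelation$ is the symmetric part $\leqRelation\cap\leqRelation^{-1}$ of the preorder, a short computation gives $\mathord{\equivRelation}=\bigcup_{1\le j,k\le\ell}(X_j\cap Y_k)\times(Y_j\cap X_k)$. Hence for any $x$ the class $[x]=\bigcup\{\,Y_j\cap X_k \mid x\in X_j\cap Y_k\,\}$ is a finite union of $\omega$-regular languages, so it is $\omega$-regular. (Alternatively one could invoke \cref{prop:recognizable-finite-index}, but this direct route avoids the detour through finite index.)

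For the pumping step I would take a nondeterministic B\"uchi automaton $\aut{B}$ for $C$ with accepting set $F$. Since $x\in C$, there is an accepting run $\rho=\rho_0\rho_1\cdots$ of $\aut{B}$ on $x$, and some $f\in F$ occurs infinitely often along $\rho$, say at positions $1\le n_1<n_2$. Setting $\mu=x_0\cdots x_{n_1-1}$ and $\nu=x_{n_1}\cdots x_{n_2-1}$, the segment of $\rho$ between positions $n_1$ and $n_2$ is a path from $f$ back to $f$ reading $\nu$; concatenating the prefix of $\rho$ up to position $n_1$ with infinitely many copies of this segment yields an accepting run of $\aut{B}$ on $\mu(\nu)^\omega$, since it visits $f$ infinitely often. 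Thus $\mu(\nu)^\omega\in C$, i.e.\ $x\equivRelation\mu(\nu)^\omega$ with $\mu\nu$ a prefix of $x$, and choosing $n_1\ge 1$ makes $\mu$ non-empty. The only genuinely load-bearing step is the $\omega$-regularity of the classes: it is exactly where $\omega$-recognizability is used, and it is precisely what fails for a merely $\omega$-automatic preorder, whose classes need not be $\omega$-regular (compare the unbounded ascending chain of \cref{ex:no-nash-not-omega-recognizable}); the pumping argument itself is routine.
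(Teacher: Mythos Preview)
Your argument is correct. The computation $\mathord{\equivRelation}=\bigcup_{j,k}(X_j\cap Y_k)\times(Y_j\cap X_k)$ and the conclusion that each class $[x]$ is $\omega$-regular are both sound, and the B\"uchi pumping step is standard and correctly stated.

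The paper takes a shorter, more direct route that avoids computing the class $[x]$ altogether. It uses reflexivity: since $x\leqRelation x$, there is a single index $i$ with $(x,x)\in X_i\times Y_i$, hence $x\in X_i\cap Y_i$. One then pumps inside the $\omega$-regular set $X_i\cap Y_i$ (rather than inside $[x]$) to obtain a lasso $\mu(\nu)^\omega$ with $\mu\nu$ a prefix of $x$; and because \emph{both} coordinates of $(x,\mu(\nu)^\omega)$ and $(\mu(\nu)^\omega,x)$ lie in $X_i\cap Y_i$, both pairs belong to $X_i\times Y_i\subseteq\mathord{\leqRelation}$, giving $x\equivRelation\mu(\nu)^\omega$ immediately. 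Compared with your approach, the paper trades the reusable fact ``every class is $\omega$-regular'' for a one-line trick exploiting that a single rectangle already witnesses the equivalence; your version is slightly longer but yields that reusable fact explicitly, which is arguably a cleaner structural statement.
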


\begin{proof}
Let $x \in \Sigma^\omega$. As $\leqRelation$ is a preorder, we have $x \leqRelation x$. As $\leqRelation$ is $\omega$-recognizable, i.e., $\leqRelation$ is equal to $\cup_{i=1}^\ell X_i \times Y_i$ where $X_i, Y_i$ are $\omega$-regular languages, there exists $i$ such that $(x,x) \in X_i \times Y_i$. Therefore, $x$ belongs to $X_i \cap Y_i$ which is still $\omega$-regular and thus contains a lasso $\mu (\nu)^\omega$ such that $\mu\nu$ is a prefix of $x$. It follows that $(x,\mu(\nu)^\omega) \in X_i \times Y_i$ and $(\mu(\nu)^\omega,x) \in X_i \times Y_i$, thus $x \leqRelation \mu(\nu)^\omega$ and $\mu(\nu)^\omega \leqRelation x$, that is, $x \equivRelation \mu(\nu)^\omega$.
\end{proof}

Note that \cref{lem:lasso} does not hold in the general case of $\omega$-automatic preorders. Let us take \leqRelation{} defined as $x \leqRelation y$ if and only if $x = y$. This is an $\omega$-automatic preorder where for each word $x$, its equivalence class is the singleton $\{x\}$. There is an infinite number of such classes, showing that $\leqRelation$ is not $\omega$-recognizable (by \cref{prop:recognizable-finite-index}), and \cref{lem:lasso} does not hold.

\subparagraph*{Total Preorders}

We now suppose that each preorder $\leqRelation[i]$ is total, that is, the lattice $\lattice[i]$ is a finite \emph{total} order. This means that the finite number of equivalence classes of $\equivRelation[i]$ are totally ordered from the lowest one to the highest one. Consequently, we have the next important property: for all $x,y \in \Sigma^\omega$,
\[
x \not\leqRelationStrict[i] y ~~\Leftrightarrow~~ y \leqRelation[i] x.
\]

Let us state \cref{theorem:existenceNE} in the particular setting of total preorders.

\begin{theorem}
\label{theorem:existenceNETotal}
    When the preference relations of a game are all $\omega$-recognizable and total preorders, then there always exists an NE.
\end{theorem}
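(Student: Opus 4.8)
The plan is to prove \cref{theorem:existenceNETotal} via the classical ``follow a good play, punish the first deviator'' scheme, where the punishment strength is measured by the value of the two-player zero-sum game opposing each player to the coalition of all the others. I would begin by setting up this value machinery. By \cref{prop:recognizable-finite-index}, each $\equivRelation[i]$ has finite index, and by totality its classes form a finite \emph{chain} $\lattice[i]$. For every class $c \in \lattice[i]$, picking a lasso representative $x_c$ (which exists by \cref{lem:lasso}), the ``upward'' set $U_i^c = \{\pi \mid x_c \leqRelation[i] \pi\}$ is $\omega$-regular, being the projection on the second component of the $\omega$-regular language $\mathord{\leqRelation[i]} \cap (\{x_c\}\times V^\omega)$. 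To neutralise the fact that $\leqRelation[i]$ need not be prefix-independent, I would first replace $\arena$ by its product with deterministic automata recognising all the sets $U_i^c$; in the resulting game $\game'$ the induced preferences only depend on the limit behaviour of the automaton components, hence are prefix-independent, and since the automaton part is driven deterministically by the vertices, strategies and outcomes of $\game'$ correspond to those of $\game$ and any NE of $\game'$ yields an NE of $\game$. So I assume henceforth that each $\leqRelation[i]$ is prefix-independent. Then, for a vertex $v$, the zero-sum game with objective $U_i^c$ for player~$i$ against coalition $-i$ is $\omega$-regular, hence determined; define $val_i(v)$ to be the $\geqRelation[i]$-greatest $c$ for which player~$i$ wins this game from $v$ (well-defined: the bottom class is always won and $\lattice[i]$ is finite). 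Let $\sigma_i^v$ be a corresponding winning strategy of player~$i$; writing $c^+$ for the successor of $val_i(v)$ in the chain, determinacy gives the coalition a strategy $\sigma_{-i}^v$ winning the complement of $U_i^{c^+}$, which by totality of $\lattice[i]$ ensures $[\pi]_i \leqRelation[i] val_i(v)$ for every play $\pi$ from $v$ consistent with it (if $val_i(v)$ is the top of $\lattice[i]$, any $\sigma_{-i}^v$ works).

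The core step is to show that there is a play $\rho$ from $v_0$ that is \emph{individually rational at every position}: for every player~$i$ and every index $n$ with $\rho_n \in V_i$, one has $val_i(\rho_n) \leqRelation[i] [\rho]_i$. I would obtain this by adapting the fixpoint argument of~\cite{Gradel-Ummels-08} (also in the spirit of~\cite{BrihayePS13}): iterating over the finite product $\prod_{i}\lattice[i]$, one progressively discards moves that no player would rationally take and, using that all the relevant cooperative and zero-sum games are $\omega$-regular hence determined, one exhibits a play (indeed a lasso, by finiteness) along which no player can be strictly better off from one of its own vertices than what it actually receives. I expect this to be the main obstacle, since, unlike in the Boolean parity setting treated in~\cite{Gradel-Ummels-08}, the per-vertex value ranges over a finite chain rather than over $\{0,1\}$, so the argument must be carried out simultaneously for the whole tuple of thresholds and one must check that the resulting play is genuinely realisable.

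Finally, I would assemble the strategy profile $\sigma = (\sigma_i)_i$: all players follow $\rho$, and as soon as the current history departs from $\rho$ -- which can only occur through a unilateral move at some vertex $v = \rho_n \in V_j$ -- every player $k \neq j$ switches forever to the punishing strategy $\sigma_{-j}^v$. To check that $\sigma$ is an NE from $v_0$, fix a player~$j$ and a strategy $\tau_j$. Since every other player sticks to $\sigma_{-j}$, the play $\outcomefrom{\tau_j,\sigma_{-j}}{v_0}$ either equals $\rho$, in which case there is nothing to prove, or has the form $\rho_0\cdots\rho_{n-1}\pi$ with $\pi$ a play from $v=\rho_n$ consistent with $\sigma_{-j}^v$; in the latter case, prefix-independence and individual rationality of $\rho$ at position~$n$ give $[\outcomefrom{\tau_j,\sigma_{-j}}{v_0}]_j = [\pi]_j \leqRelation[j] val_j(v) \leqRelation[j] [\rho]_j$, hence $\rho \not\leqRelationStrict[j] \outcomefrom{\tau_j,\sigma_{-j}}{v_0}$ by totality, so $\tau_j$ is not profitable. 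Thus $\sigma$ is an NE from $v_0$ in $\game'$, and pulling it back along the product construction gives an NE in $\game$. (The additional claim in \cref{theorem:existenceNE} that one may take a finite-memory NE then follows from \cref{cor:existence-ne-finite-memory}.)
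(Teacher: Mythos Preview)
Your overall scheme is correct and matches the paper's: set up per-player values via determinacy of $\omega$-regular threshold games, exhibit a play that is individually rational at every position, and turn it into an NE by punishing the first deviator with the coalition's optimal strategy. The final verification you give is exactly the paper's.

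The main difference is in how prefix-dependence is handled. You propose to eliminate it upfront by taking the product of $\arena$ with deterministic automata for the sets $U_i^c$, so that the induced preferences become prefix-independent, and then work in this enlarged but simpler game. The paper instead keeps the original arena and carries the prefix along explicitly: it defines the subgame relations $\leqRelation[i]^h$, the subgame values $val_i^h(v)$, and their projections $\nu_i^{hv}$ back to $\lattice[i]$, and all constructions are parameterised by the current history. Your reduction is a clean and legitimate alternative; it buys you a one-shot definition of $val_i(v)$ and $\sigma_{\pm i}^v$ independent of $h$, at the cost of an auxiliary product construction and the (easy) check that an NE of $\game'$ pulls back to one of $\game$.

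For the ``core step'' you are making it harder than it needs to be. No fixpoint over $\prod_i \lattice[i]$ or iterative elimination of moves is required. The paper simply lets every player simultaneously play an optimal strategy (with a small amount of memory: switch to a fresh optimal strategy the first time the current value strictly increases), and proves directly that along the resulting play $\rho$ the value never drops below its running maximum $\nu_i^*$, whence $\nu_i^* \leqRelation[i] [\rho]_i$. In your prefix-independent setting this specialises to: let $\tau_i$ follow $\sigma_i^{u}$ where $u$ is the earliest visited vertex whose value equals the current vertex's value; then $val_i(\rho_n) \leqRelation[i] [\rho]_i$ for all $n$, which is exactly the individual rationality you need. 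So you can replace the vague appeal to a fixpoint by this concrete two-line construction.
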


To prove \cref{theorem:existenceNETotal}, we take inspiration from the work of~\cite{Gradel-Ummels-08} and~\cite{BrihayePS13}, where the existence of NEs is studied through the concept of value and optimal strategy. In the next definitions, for a fixed $i \in \Players$, the notation $\Sigma_{i}$ (resp.\ $\Sigma_{-i}$) is used for the set of all strategies of player~$i$ (resp.\ player~$-i$).

\begin{definition} \label{def:values}
    Let $\leqRelation[i]$ be the preference relation of player~$i$ and~$-i$ be the coalition of the other players. Let $v$ be a vertex. We define the following \emph{lower} and \emph{upper values}:

    \begin{itemize}
    \item $\underline{val}_i(v) = \max_{\sigma_i \in \Sigma_i} \min \{[\pi]_{i} \mid \pi$ consistent with $\sigma_i \text{ and } \first{\pi} = v\}$,
    \item $\overline{val}_i(v) = \min_{\sigma_{-i} \in \Sigma_{-i}} \max \{[\pi]_{i} \mid \pi$ consistent with $\sigma_{-i} \text{ and } \first{\pi} = v\}$.
    \end{itemize}
\end{definition}

Note that the lower and upper values are well defined as the lattice $\lattice[i]$ is finite.

\begin{lemma}
    We have $\underline{val}_i(v) \leqRelation[i] \overline{val}_i(v)$ for each $v \in V$.
\end{lemma}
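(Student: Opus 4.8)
The statement to prove is that $\underline{val}_i(v) \leqRelation[i] \overline{val}_i(v)$ for every vertex $v$. The plan is to argue by contradiction: suppose $\underline{val}_i(v) \not\leqRelation[i] \overline{val}_i(v)$. Since $\leqRelation[i]$ is a total preorder (so its lattice $\lattice[i]$ is a finite total order), negating $\underline{val}_i(v) \leqRelation[i] \overline{val}_i(v)$ is equivalent, by the dichotomy recalled just before \cref{theorem:existenceNETotal}, to $\overline{val}_i(v) \leqRelationStrict[i] \underline{val}_i(v)$. Unfolding the two definitions from \cref{def:values}, this would mean: there is a strategy $\sigma_i$ of player~$i$ such that every play $\pi$ consistent with $\sigma_i$ from $v$ satisfies $[\pi]_i \geqRelation[i] \underline{val}_i(v)$, and there is a strategy $\sigma_{-i}$ of the coalition such that every play $\pi$ consistent with $\sigma_{-i}$ from $v$ satisfies $[\pi]_i \leqRelation[i] \overline{val}_i(v) \leqRelationStrict[i] \underline{val}_i(v)$.

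The key step is then the classical one-play argument: fix the two strategies $\sigma_i$ and $\sigma_{-i}$ witnessing the lower and upper values respectively, and consider the unique play $\pi^\ast = \outcomefrom{(\sigma_i,\sigma_{-i})}{v}$, i.e., the outcome of the full strategy profile obtained by combining $\sigma_i$ with $\sigma_{-i}$. This single play is consistent with $\sigma_i$, hence $\underline{val}_i(v) \leqRelation[i] [\pi^\ast]_i$; and it is also consistent with $\sigma_{-i}$, hence $[\pi^\ast]_i \leqRelation[i] \overline{val}_i(v)$. By transitivity of $\leqRelation[i]$ (it is a preorder), we obtain $\underline{val}_i(v) \leqRelation[i] \overline{val}_i(v)$, contradicting the assumption. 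I should be slightly careful about the exact formulation: the $\min$ in the definition of $\underline{val}_i(v)$ ranges over all plays consistent with the maximizing $\sigma_i$, so for \emph{any} such play $\pi$ we have $\underline{val}_i(v) \leqRelation[i] [\pi]_i$, and symmetrically the $\max$ in the definition of $\overline{val}_i(v)$ gives $[\pi]_i \leqRelation[i] \overline{val}_i(v)$ for any play consistent with the minimizing $\sigma_{-i}$; $\pi^\ast$ witnesses both simultaneously.

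The only genuine subtlety — and the step I would be most careful about — is that the $\min$ and $\max$ inside \cref{def:values} are over sets of equivalence classes, so one must make sure these extrema are attained; but this is exactly what is guaranteed by the remark immediately following \cref{def:values}, namely that $\lattice[i]$ is a \emph{finite} total order, so every non-empty subset of equivalence classes has a least and a greatest element. Thus I can legitimately talk about ``the'' strategy $\sigma_i$ realizing the outer $\max$ and ``the'' strategy $\sigma_{-i}$ realizing the outer $\min$, and the sets of plays consistent with each are non-empty (every vertex has a successor), so the inner extrema are attained as well. No imperfect-information or automata machinery is needed here; the whole argument is the standard min–max inequality transported to a finite totally ordered value domain via transitivity. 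A direct (non-contradiction) phrasing is equally fine and perhaps cleaner: pick the witnessing $\sigma_i$ and $\sigma_{-i}$, form $\pi^\ast$, and chain the two inequalities through transitivity to get $\underline{val}_i(v) \leqRelation[i] [\pi^\ast]_i \leqRelation[i] \overline{val}_i(v)$ outright.
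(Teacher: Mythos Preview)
Your proposal is correct and, in its direct form at the end, is exactly the paper's proof: pick strategies $\sigma_i^v$ and $\sigma_{-i}^v$ realizing the two values, look at the single outcome $\outcomefrom{\sigma_i^v,\sigma_{-i}^v}{v}$, and chain $\underline{val}_i(v) \leqRelation[i] [\outcomefrom{\sigma_i^v,\sigma_{-i}^v}{v}]_{i} \leqRelation[i] \overline{val}_i(v)$. The contradiction wrapper you start with is superfluous; the paper simply states the direct inequality chain in one line.
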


\begin{proof}
    Let $\sigma_i^v$ that realizes $\underline{val}_i(v)$ and $\sigma_{-i}^v$ that realizes $\overline{val}_i(v)$. Then, we have $\underline{val}_i(v) \leqRelation[i] [\outcomefrom{\sigma_i^v,\sigma_{-i}^v}{v}]_{i} \leqRelation[i] \overline{val}_i(v)$.
\end{proof}

\begin{definition}
    If $\underline{val}_i(v) = \overline{val}_i(v)$, then we speak about the \emph{value of $v$} denoted by $val_i(v)$ and two strategies $\sigma_i^v \in \Sigma_i$, $\sigma_{-i}^v \in \Sigma_{-i}$ that realize the value are called \emph{optimal}.
\end{definition}

Note that an optimal strategy $\sigma_i^v$ ensures consistent plays $\pi$ starting at $v$ such that $val_i(v) \leqRelation[i] [\pi]_{i}$. This leads us to introduce the concept of \emph{threshold game} $(\arena,\leqRelation[i],\Omega_i)$ defined from $\game$ and a lasso $\rho$, and with $\Omega_i = \{x \in V^\omega \mid \rho \leqRelation[i] x\}$. It is a zero-sum two-player game between the players~$i$ and~$-i$, whose objective of player~$i$ is the set $\Omega_i$ (while the objective of player~$-i$ is $V^\omega \ssetminus \Omega_i$). Hence, if $\rho$ is a lasso in the equivalence class $val_i(v)$ such that $\first{\pi} = v$ (it exists by \cref{lem:lasso}), $\sigma_i^v$ is nothing more than a winning strategy in this threshold game. Similarly, an optimal strategy $\sigma_{-i}^v$ ensures consistent plays $\pi$ starting at $v$ such that $[\pi]_{i} \leqRelation[i] val_i(v)$. It is thus a winning strategy in the variant of threshold game $(\arena,\leqRelation[i],\Omega_{-i})$ where $\Omega_{-i} = \{x \in V^\omega \mid x \leqRelation[i] \rho\}$ is the objective of player~$-i$.

\begin{proposition}\label{theorem:zero-sum-automatic-relation-determinacy}
    Every threshold game (and its variant) is determined, with finite-memory winning strategies for both players.
\end{proposition}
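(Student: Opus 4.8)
The plan is to prove \cref{theorem:zero-sum-automatic-relation-determinacy} by reducing each threshold game to a classical zero-sum game with an $\omega$-regular objective, for which determinacy with finite-memory strategies is standard. The key observation is that in a threshold game $(\arena, \leqRelation[i], \Omega_i)$ defined from a lasso $\rho$, the objective $\Omega_i = \{x \in V^\omega \mid \rho \leqRelation[i] x\}$ is an $\omega$-regular language: since $\leqRelation[i]$ is $\omega$-automatic, it is accepted by a \DPW{} $\aut{A}_i$ reading pairs $(x,y)$, and fixing the first component to the lasso $\rho$ yields, by taking the product of $\aut{A}_i$ with the cyclic automaton recognizing $\rho$ (of size $|\rho|$), a \DPW{} $\aut{B}_i$ over $V$ that accepts exactly $\Omega_i$. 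The analogous construction with the complement of $\aut{A}_i$ (or swapping components appropriately) handles the variant game where $\Omega_{-i} = \{x \mid x \leqRelation[i] \rho\}$ is player~$-i$'s objective.

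First I would recall that the underlying game graph of the threshold game is exactly the arena $\arena$, partitioned between player~$i$ (owning $V_i$) and the coalition~$-i$ (owning $V \setminus V_i$), so it is a genuine two-player turn-based game. Second, I would form the product game $\arena \times \aut{B}_i$: its vertices are pairs $(v,q)$ with $v \in V$ and $q$ a state of $\aut{B}_i$, ownership is inherited from the $V$-component, edges follow $E$ together with the deterministic transition of $\aut{B}_i$, and the priority of $(v,q)$ is the priority of $q$ in $\aut{B}_i$. A play in this product projects to a play in $\arena$, and it satisfies the parity condition if and only if the projected play lies in $\Omega_i$. Third, I would invoke the classical determinacy theorem for parity games (\cite{lncs2500,Games-on-Graphs}): every parity game is determined and the winning player has a memoryless winning strategy. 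Pulling this memoryless strategy back through the projection $\arena \times \aut{B}_i \to \arena$ yields a finite-memory strategy in the threshold game (the memory being the finite state space of $\aut{B}_i$), winning for whichever of player~$i$ or coalition~$-i$ wins the parity game. The same argument applied to the product with the automaton for $\Omega_{-i}$ handles the variant. This establishes that every threshold game and its variant is determined with finite-memory winning strategies for both players, as claimed.

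The main obstacle, though a mild one, is bookkeeping rather than conceptual: one must check that the language $\Omega_i$ is genuinely $\omega$-regular and recognizable by a \emph{deterministic} parity automaton of manageable size, which requires noting that $\rho$ is a lasso (hence recognizable by a small deterministic automaton) and that determinism is preserved under the product with $\aut{A}_i$ since we are not projecting away any nondeterministic choice---the first component is pinned to $\rho$. One should also be slightly careful with the variant game: the complement of a \DPW{} is obtained simply by incrementing all priorities, so $\{x \mid x \not\leqRelation[i] \rho\}$, and hence $\Omega_{-i} = V^\omega \setminus \{x \mid x \not\leqRelation[i] \rho\}$ when $\leqRelation[i]$ is total, remains \DPW{}-recognizable; but in fact one does not even need totality here, since $\Omega_{-i}$ is directly the language of the \DPW{} obtained from $\aut{A}_i$ by pinning the \emph{second} component to $\rho$. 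With these points verified, the reduction to parity games and the appeal to parity determinacy complete the proof.
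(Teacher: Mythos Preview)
Your proposal is correct and follows essentially the same approach as the paper: construct a \DPW{} for $\Omega_i$ by taking the product of $\aut{A}_i$ with the lasso $\rho$ (pinning one component), form the product of the arena with this \DPW{} to obtain a zero-sum parity game, and invoke memoryless determinacy of parity games to conclude. The paper's proof is a terser version of exactly this argument, including the remark that the variant with $\Omega_{-i}$ is handled symmetrically.
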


\begin{proof}
    Let $(\arena,\leqRelation[i],\Omega_i)$ be a threshold game defined from $\game$ and a lasso $\rho$.
    From the \DPW{} $\aut{A}_i$ accepting $\leqRelation[i]$, we construct a \DPW{} $\aut{A}'_i$ accepting $\Omega_i$, which is the product between the lasso $\rho$ and $\aut{A}_i$. Then, we construct a zero-sum parity game $\mathcal{H}$ from the product between the arena $\arena$ of $\game$ and $\aut{A}'_i$, such that its parity condition encodes the objective $\Omega_i$.\footnote{Note that in the proof of the membership result of \cref{theorem:OutcomeCheck}, such a zero-sum generalized parity game was already constructed for the objective $\{x \in V^\omega \mid \rho \leqRelationStrict[i] x\}$.} Clearly, the existence of a winning strategy in the threshold game is equivalent to the existence of a winning strategy in $\mathcal{H}$. As parity games are determined and have memoryless winning strategies for both players~\cite{lncs2500}, this completes the proof. Note that the argument is similar for the variant with the objective $\Omega_{-i}$ for player~$-i$.
\end{proof}

\begin{proposition}
    There always exists a value $val_i(v)$ for each $v \in V$. Moreover, there exist finite-memory optimal strategies $\sigma_i^v$ and $\sigma_{-i}^v$ whose outcome $\outcomefrom{\sigma_i^v,\sigma_{-i}^v}{v}$ is a lasso in the equivalence class $val_i(v)$.
\end{proposition}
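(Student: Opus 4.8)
The plan is to read off the value from the family of \emph{threshold games} indexed by the (finitely many) equivalence classes of $\equivRelation[i]$, using their determinacy with finite-memory strategies (\cref{theorem:zero-sum-automatic-relation-determinacy}) together with the totality of $\leqRelation[i]$. Throughout, fix a player~$i$ and a vertex $v$, and recall that $\lattice[i]$ is a finite total order; for each class $c \in \lattice[i]$, pick by \cref{lem:lasso} a lasso $\rho_c$ lying in $c$.

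\textbf{Lower value via threshold games.} For a class $c$, let $\mathcal{T}_c$ be the threshold game $(\arena,\leqRelation[i],\Omega_i^c)$ with $\Omega_i^c = \{x \in V^\omega \mid \rho_c \leqRelation[i] x\}$ the objective of player~$i$. First I would note that a strategy $\sigma_i$ of player~$i$ satisfies $\min\{[\pi]_i \mid \pi \text{ consistent with } \sigma_i,\ \first{\pi}=v\} \geqRelation[i] c$ precisely when $\sigma_i$ is winning for player~$i$ in $\mathcal{T}_c$ from $v$. Hence the set of classes $c$ for which player~$i$ wins $\mathcal{T}_c$ from $v$ equals $\{c \mid c \leqRelation[i] \underline{val}_i(v)\}$; since $\lattice[i]$ is finite, this set has a $\leqRelation[i]$-greatest element, namely $c^\star := \underline{val}_i(v)$, and by \cref{theorem:zero-sum-automatic-relation-determinacy} player~$i$ has a finite-memory winning strategy $\sigma_i^v$ in $\mathcal{T}_{c^\star}$ from $v$, which forces every consistent play into a class $\geqRelation[i] c^\star$.

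\textbf{Matching the upper value.} Consider the variant threshold game $\mathcal{T}'$ defined from $\rho_{c^\star}$, i.e., $(\arena,\leqRelation[i],\Omega_{-i})$ with $\Omega_{-i} = \{x \mid x \leqRelation[i] \rho_{c^\star}\}$ the objective of the coalition~$-i$; it is determined with finite-memory strategies by \cref{theorem:zero-sum-automatic-relation-determinacy}. By totality of $\leqRelation[i]$, player~$i$'s objective in $\mathcal{T}'$ is $\{x \mid \rho_{c^\star} \leqRelationStrict[i] x\}$, i.e., forcing the outcome into a class strictly above $c^\star$. But this is impossible from $v$: it would contradict the maximality of $c^\star$ established above (and if $c^\star$ is the top of $\lattice[i]$ there is no class above it at all, while a consistent play from $v$ always exists). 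So by determinacy the coalition has a finite-memory winning strategy $\sigma_{-i}^v$ in $\mathcal{T}'$ from $v$, forcing every consistent play into a class $\leqRelation[i] c^\star$. Therefore $\overline{val}_i(v) \leqRelation[i] c^\star = \underline{val}_i(v)$, and with the known inequality $\underline{val}_i(v) \leqRelation[i] \overline{val}_i(v)$ we get $val_i(v) = c^\star$, with $\sigma_i^v,\sigma_{-i}^v$ finite-memory and optimal.

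\textbf{The outcome is a lasso in $val_i(v)$, and the main obstacle.} Since $\sigma_i^v$ and $\sigma_{-i}^v$ are finite-memory, $\outcomefrom{\sigma_i^v,\sigma_{-i}^v}{v}$ follows a deterministic path in the finite product of $\arena$ with the two Mealy machines, hence is eventually periodic, i.e., a lasso; by the two previous steps its class is at once $\geqRelation[i] c^\star$ and $\leqRelation[i] c^\star$, so it equals $c^\star = val_i(v)$. The main obstacle is the middle step: showing that the lower and upper values coincide. It hinges on determinacy of the finitely many threshold games --- so that player~$i$'s inability to force a strictly higher class hands the coalition an explicit capping strategy --- and on totality of $\leqRelation[i]$, which is what lets us rephrase ``player~$i$ does not win $\mathcal{T}'$'' as ``the coalition can cap the outcome class at $c^\star$''. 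The first and last steps are routine.
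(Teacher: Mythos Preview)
Your proof is correct and follows essentially the same approach as the paper: both arguments pivot on the determinacy of threshold games (\cref{theorem:zero-sum-automatic-relation-determinacy}) together with the totality of $\leqRelation[i]$ to show that the coalition can cap the outcome at the lower value, forcing $\overline{val}_i(v) \leqRelation[i] \underline{val}_i(v)$. The only presentational difference is that the paper argues by contradiction (assuming $\underline{val}_i(v) \leqRelationStrict[i] \overline{val}_i(v)$ and applying determinacy at a lasso in the class $\overline{val}_i(v)$), whereas you argue directly at the class $c^\star = \underline{val}_i(v)$; you also spell out explicitly why the outcome of two finite-memory strategies is a lasso, which the paper leaves implicit.
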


\begin{proof}
    Let us suppose for a contradiction that $\underline{val}_i(v) \leqRelationStrict[i] \overline{val}_i(v)$ for some $v$. It implies the existence of a play $\rho \in \overline{val}_i(v)$ with $\first{\rho}=v$ such that $\underline{val}_i(v) \leqRelationStrict[i] [\rho]_{i}$. By \cref{lem:lasso}, we can assume that $\rho$ is a lasso. In other words, we have
    \begin{align}
    \label{eq:value}
        \neg (\exists \sigma_i \in \Sigma_i, \forall \sigma_{-i} \in \Sigma_{-i}, ~ \rho \leqRelation[i] \outcomefrom{\sigma_i,\sigma_{-i}}{v}).
    \end{align}
    We consider the threshold game $(\arena,\leqRelation[i],\Omega_i)$ with the objective $\Omega_i = \{x \in V^\omega \mid \rho \leqRelation[i] x\}$
    for player~$i$.
    By determinacy, given by \cref{theorem:zero-sum-automatic-relation-determinacy}, and totality of $\leqRelation[i]$, \eqref{eq:value} is equivalent to
    \[
    \exists \sigma_{-i} \in \Sigma_{-i}, \forall \sigma_{i} \in \Sigma_{i}, ~ \outcomefrom{\sigma_i,\sigma_{-i}}{v} \leqRelationStrict[i] \rho.
    \]
    The existence of such a strategy $\sigma_{-i}$ is in contradiction with the definition of $\overline{val}_i(v)$, since $\rho \in \overline{val}_i(v)$. Hence, for all $v$, we have $\underline{val}_i(v) = \overline{val}_i(v)$ and $val_i(v)$ exists.

    Let us now prove the existence of finite-memory optimal strategies. Let $\rho$ be a lasso in the equivalence class $val_i(v)$ such that $\first{\rho} = v$.
    By \cref{theorem:zero-sum-automatic-relation-determinacy} (where $\Omega_i$ is defined with this lasso $\rho$), there exists a (winning) optimal strategy $\sigma_i$ that is finite-memory. Similarly, there exists a finite-memory optimal strategy $\sigma_{-i}$ for player~$-i$ and his objective $\Omega_{-i} = \{x \in V^\omega \mid x \leqRelation[i] \rho\}$.
\end{proof}

Now that the values and optimal strategies are defined, we still need to introduce the concept of subgame before proceeding to the proof of \cref{theorem:existenceNETotal}. Given a game $\game = (\arena,(\leqRelation[i])_{i \in \Players})$ and a history $h \in V^*$, we denote by $\rest{\game}{h} = (\arena,(\leqRelation[i]^h)_{i \in \Players})$ the \emph{subgame from $h$}, where for each player~$i$, we define the relation $\leqRelation[i]^h$ by:
\[
\forall x,y \in V^\omega,~ x \leqRelation[i]^h y ~\Leftrightarrow~ hx \leqRelation[i] hy.
\]
Hence, this relation only refers to the preference relation $\leqRelation[i]$ restricted to words having $h$ as a prefix. We also define the relations $\leqRelationStrict[i]^h$ and $\equivRelation[i]^h$ as expected.
By definition, we have $x \equivRelation[i]^h y$ if and only if $hx \equivRelation[i] hy$. Hence, the equivalence class $[x]_i^h$ of $x$ for $\equivRelation[i]^h$ is mapped to the equivalence class $[hx]_{i}$ of $hx$ for $\equivRelation[i]$, called the \emph{projection} of $[x]_i^h$.

\begin{lemma}\label{lem:properties-subgame-relations}
    Given the $\omega$-recognizable total preorder $\leqRelation[i]$ and any history $h \in V^*$, the relation $\leqRelation[i]^{h}$ is also an $\omega$-recognizable total preorder.
\end{lemma}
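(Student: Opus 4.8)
The plan is to verify, one by one, the three defining properties of an $\omega$-recognizable total preorder for $\leqRelation[i]^{h}$, each of which reduces immediately to the corresponding property of $\leqRelation[i]$ via the observation that prepending the fixed word $h$ is length-preserving on both components of a pair.

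\textbf{Order axioms and totality.} Reflexivity is immediate: by definition of the subgame relation, $x \leqRelation[i]^{h} x$ unfolds to $hx \leqRelation[i] hx$, which holds because $\leqRelation[i]$ is reflexive. For transitivity, if $x \leqRelation[i]^{h} y$ and $y \leqRelation[i]^{h} z$ then $hx \leqRelation[i] hy$ and $hy \leqRelation[i] hz$, hence $hx \leqRelation[i] hz$ by transitivity of $\leqRelation[i]$, i.e.\ $x \leqRelation[i]^{h} z$. Totality is the same one-line argument: for all $x,y \in V^\omega$, totality of $\leqRelation[i]$ gives $hx \leqRelation[i] hy$ or $hy \leqRelation[i] hx$, hence $x \leqRelation[i]^{h} y$ or $y \leqRelation[i]^{h} x$. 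So $\leqRelation[i]^{h}$ is already a total preorder using only that $\leqRelation[i]$ is one.

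\textbf{$\omega$-recognizability.} Write $\mathord{\leqRelation[i]} = \bigcup_{k=1}^{\ell} X_k \times Y_k$ with each $X_k, Y_k \subseteq V^\omega$ an $\omega$-regular language. Since $|h|$ is the same on both sides, $x \leqRelation[i]^{h} y$ holds iff $hx \leqRelation[i] hy$, iff there is $k$ with $hx \in X_k$ and $hy \in Y_k$, iff $x \in h^{-1}X_k$ and $y \in h^{-1}Y_k$, where $h^{-1}L = \{w \in V^\omega \mid hw \in L\}$ is the left quotient of $L$ by $h$. Thus $\mathord{\leqRelation[i]^{h}} = \bigcup_{k=1}^{\ell} (h^{-1}X_k) \times (h^{-1}Y_k)$, and it only remains to recall that a left quotient of an $\omega$-regular language is $\omega$-regular: if an automaton with initial state $q_0$ and transition function $\delta$ accepts $L$, then the same automaton with initial state $\widehat{\delta}(q_0,h)$ accepts $h^{-1}L$. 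Hence $\leqRelation[i]^{h}$ is a finite union of products of $\omega$-regular languages, i.e.\ $\omega$-recognizable. Alternatively, one can first note that $\leqRelation[i]^{h}$ is $\omega$-automatic — it is accepted by $\aut{A}_i$ with initial state changed to $\delta_i(q_i^0,(h_0,h_0)\cdots(h_{n-1},h_{n-1}))$ — and that, as already observed right before the lemma, each class of $\equivRelation[i]^{h}$ projects into a class of $\equivRelation[i]$, so $\equivRelation[i]^{h}$ has finite index; \cref{prop:recognizable-finite-index} then yields $\omega$-recognizability. I do not expect a genuine obstacle here; the only point that deserves a moment's care is that the synchronous reading of a word pair $(x,y)$ behaves well under a common prefix $h$, which is precisely the length-preserving remark used throughout.
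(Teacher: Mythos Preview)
Your proof is correct. For the preorder and totality axioms you spell out what the paper dismisses as ``clearly a total preorder,'' which is fine. For $\omega$-recognizability you actually give two arguments: your primary one via left quotients, showing $\mathord{\leqRelation[i]^{h}} = \bigcup_k (h^{-1}X_k)\times(h^{-1}Y_k)$ directly from the defining decomposition of $\leqRelation[i]$, is more elementary than the paper's route and does not need \cref{prop:recognizable-finite-index} at all. The paper instead first observes $\omega$-automaticity (your initial-state-shift remark) and then appeals to \cref{prop:recognizable-finite-index} via the finite-index argument --- which is precisely the alternative you sketch at the end. So your proposal subsumes the paper's proof and adds a cleaner direct route; the left-quotient argument is the better one to keep if you want to avoid the dependency on the finite-index characterization.
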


\begin{proof}
    By definition, $\leqRelation[i]^{h}$ is clearly a total preorder. Furthermore, we can easily define a \DPW{} accepting $\leqRelation[i]^{h}$ from the \DPW{} $\aut{A}_i$ accepting $\leqRelation[i]$. Indeed, we simply take a copy of $\aut{A}_i$ and replace its initial state with the unique state reached by reading the pair $(h,h)$.
    Finally, to show that $\leqRelation[i]^h$ is $\omega$-recognizable, let us prove that $\equivRelation[i]^h$ has finite index (by \cref{prop:recognizable-finite-index}). As the equivalence class of any $x$ for $\equivRelation[i]^h$ is mapped to its projection $[hx]_i$ and $\equivRelation[i]$ has a finite index, it is also the case for $\equivRelation[i]^h$.
\end{proof}

We are now ready to prove \cref{theorem:existenceNETotal}.

\begin{proof}[Proof of \cref{theorem:existenceNETotal}]
As done for $\game$, thanks to \cref{lem:properties-subgame-relations}, for each player~$i$ and each history $hv$, we can define for the subgame $\rest{\game}{h}$ the notion of value denoted $val_i^h(v)$ and of optimal strategies denoted $\sigma_i^{hv}$, $\sigma_{-i}^{hv}$ for player~$i$ and~$-i$ respectively. We also denote by $\nu_i^{hv}$ the equivalence class for $\equivRelation[i]$, equal to the projection of $val_i^h(v)$.
This means that $\sigma_i^{hv}$, (resp.\ $\sigma_{-i}^{hv}$) guarantees consistent plays $\pi$ starting at $v$ such that
\[
val_i^h(v) \leqRelation[i]^h [\pi]_i^h \quad(\text{resp.\ } [\pi]_i^h \leqRelation[i]^h val_i^h(v)),
\]
or equivalently
\begin{align}\label{eq:optimal}
\nu_i^{hv} \leqRelation[i] [h\pi]_i \quad(\text{resp.\ } [h\pi]_i \leqRelation[i] \nu_i^{hv}).
\end{align}
For each $i$, let us define the following strategy $\tau_i$\footnote{Note that the strategy profile composed of the $\tau_i$'s is not yet the required NE.} in $\game$: for all histories $hv$ with $v \in V_i$,
\[ \tau_i(hv) = \sigma^{h_1u}_i(h_2v) \]
such that
\begin{itemize}
    \item $h = h_1h_2$,
    \item $\nu_i^{h_1u} = \nu_i^{hv}$ with $u$ the first vertex of $h_2v$,
    \item $|h_1|$ is minimal with respect to the two previous items.
\end{itemize}
(We use the optimal strategy corresponding to the smallest prefix $h_1$ of $h$ such that the projections of both values $val_i^h(v)$ and $val_i^{h_1}(u)$ are the same.)

Consider the outcome $\rho = \outcomefrom{\tau}{v_0}$ of the strategy profile $\tau = (\tau_i)_{i \in \Players}$ from a given initial vertex $v_0$. For each $i$, we define $\nu^*_i = \max\{\nu_i^{hv} \mid hv \text{ prefix of } \rho\}$. Notice that $\nu_i^*$ is well defined as any $\nu_i^{hv}$ is an equivalence class of $\equivRelation[i]$ and $\lattice[i]$ is a finite total order. Let us prove that
\begin{align}\label{eq:LowerBound}
\nu_i^* \leqRelation[i] [\rho]_i \text{ for all } i \in \Players.
\end{align}
For this purpose, let us consider the smallest prefix $hv$ of $\rho$ such that $\nu_i^{hv} = \nu_i^*$, and let us prove that for any $hgu$ prefix of $\rho$,
\begin{itemize}
    \item $\tau_i(hgu) = \sigma_i^{hv}(gu)$ if $u \in V_i$,
    \item $\nu_i^{hgu} = \nu_i^*$.
\end{itemize}
(From $h$, $\tau_i$ plays as dictated by $\sigma_i^{hv}$ and the projection of the value $val_i^{hg}(u)$ remains equal to $\nu_i^*$.) The proof is by induction on $|g|$. By definition of $\tau_i$ and $hv$, this property trivially holds when $g = \varepsilon$ (in which case $u=v$). Let us assume that the property is true for $gu$ and let us prove that it remains true for $guu'$, with $u' \in V$ such that $hguu'$ is a prefix of $\rho$.
\begin{itemize}
    \item Let us first observe that $guu'$ is consistent with $\sigma_i^{hv}$. This is verified for $gu$ by induction hypothesis. Moreover, again by induction hypothesis, if $u \in V_i$, then $\tau_i(hgu) = \sigma_i^{hv}(gu) = u'$.
    \item Let us then prove that $\nu_i^{hguu'} = \nu_i^*$. By definition of $\nu_i^*$, we have $\nu_i^{hguu'} \leqRelation[i] \nu_i^*$. Assume by contradiction that $\nu_i^{hguu'} \leqRelationStrict[i] \nu_i^*$. By \eqref{eq:optimal}, it follows that the optimal strategy $\sigma_{-i}^{hguu'}$ can impose $[\pi]_i \leqRelation[i]\nu_i^{hguu'} \leqRelationStrict[i] v_i^*$ on plays $\pi$ having $hguu'$ as prefix. This is in contradiction with the optimal strategy $\sigma_i^{hv}$, the consistency of $guu'$ with this strategy, and thus by \eqref{eq:optimal} $\nu_i^* = \nu_i^{hv} \leqRelation[i] [\pi]_i$.
    \item Suppose that $u' \in V_i$, as $\nu_i^{hguu'} = \nu_i^*$, it follows by definition of $\tau_i$ that $\tau_i(hguu') = \sigma_i^{hv}(guu')$.
\end{itemize}
As a consequence, the property is proved and we get that $\rho = h\rho'$ with a play $\rho'$ consistent with $\sigma_i^{hv}$. By \eqref{eq:optimal} again, it follows that $\nu_i^* \leqRelation[i] [\rho]_i$, and \eqref{eq:LowerBound} is thus established.

We are now ready to define the required NE $\tau' = (\tau'_i)_{i \in \Players}$. It is partially defined in such a way as to generate $\rho = \outcomefrom{\tau'}{v_0}$. Moreover, as soon as some player~$i$ deviates from $\rho$ at any prefix $hv$ of $\rho$, then the other players form the coalition~$-i$ and play the optimal strategy $\sigma_{-i}^{hv}$. In this way, by \eqref{eq:optimal}, the coalition imposes $[\rho']_i \leqRelation[i] \nu_i^{hv}$ on the deviating play $\rho'$. As $\nu_i^{hv} \leqRelation[i] \nu_i^*$ by definition of $\nu_i^*$ and by \eqref{eq:LowerBound}, we get $[\rho']_i \leqRelation[i] \nu_i^{hv} \leqRelation[i] \nu_i^* \leqRelation[i] [\rho]_i$. The deviating play is therefore not profitable for player~$i$, showing that $\tau'$ is an NE from $v_0$.
\end{proof}

\section{Proofs of \texorpdfstring{\cref{prop:partial-preorder-to-total-preorder,theorem:existenceNE}}{Proposition~\ref{prop:partial-preorder-to-total-preorder} and Theorem~\ref{theorem:existenceNE}}}
\label{app:generalNE}

\embeddingpartialtotal*

\begin{proof}
    The equivalence classes of \equivRelation{} form a finite lattice that we can see as an acyclic graph (whose vertices are the equivalence classes and $(C,C')$ is an edge if and only if $C \leqRelationStrict C'$). Thus, by performing a topological sort of this graph, we can totally order the equivalence classes as the sequence $C_1,\dots,C_n$, such that $C_i \leqRelation C_j$ implies $i \leq j$ (the topological sort respects $\leqRelation$). Let $i \leq n$, and $\pi \in C_i$ be a lasso (by \cref{lem:lasso}). Therefore, we construct $\leqRelation'$ as follows:
    \[
    \leqRelation' ~ = ~ \leqRelation \cup \left( \bigcup_{1\leq i < j \leq n} C_i \times C_j \right).
    \]
    We clearly get an embedding of $\leqRelation$ into $\leqRelation'$ such that $\leqRelation'$ is $\omega$-recognizable (as each equivalence class $C_i$ is $\omega$-regular by $\omega$-recognizably of $\leqRelation$). The relation $\leqRelation'$ is also clearly a preorder as $\leqRelation$ is a preorder and by the topological sort. Let us note that $\leqRelation'$ is total by construction.
    Indeed, for any $x, y$, there exist $C_i$ and $C_j$ such that $x \in C_i$ and $y \in C_j$. Therefore, w.l.o.g., assuming $i \leq j$, we have $x \leqRelation' y$ because either $(x,y) \in C_i \times C_j$ if $i \neq j$, or $x \equivRelation y$ thus $x \leqRelation' y$ if $i=j$.

    The second claim is clearly verified by construction for $\Join$ in $\{\leqRelation,\geqRelation,\equivRelation\}$, as $\leqRelation'$ preserves $\leqRelation$. For $\leqRelationStrict$ and $\geqRelationStrict$, it is true by the same argument and the topological sort preserving the partial order of the equivalence classes.
\end{proof}

\existenceNE*

For the proof, we suppose that the preference relations $\leqRelation[i]$, $i \in \Players$, are not necessarily total. This means that when $x \not\leqRelationStrict[i] y$, then either $x \not\leqRelation[i] y$ or $y \leqRelation[i] x$.

\begin{proof}
    Given a game $\game = (\arena, (\leqRelation[i])_{i \in \Players})$, we construct $\game' = (\arena, (\leqRelation[i]')_{i \in \Players})$ with the same arena $\arena$ and, for each $i \in \Players$, the $\omega$-recognizable total preference relations $\leqRelation[i]'$ as defined in \cref{prop:partial-preorder-to-total-preorder}.
    By \cref{theorem:existenceNETotal}, as $\leqRelation[i]'$ are total, there exists an NE $\sigma = (\sigma_i)_{i \in \Players}$ in $\game'$. This strategy profile $\sigma$ is also an NE in $\game$. Otherwise, we would get a profitable deviation $\outcomefrom{\sigma}{v_0} \leqRelationStrict[i] \pi$ for some player~$i$ in $\game$, thus the same profitable deviation $\outcomefrom{\sigma}{v_0} \leqRelationStrict[i]' \pi$ in $\game'$ by \cref{prop:partial-preorder-to-total-preorder}. This is a contradiction with $\sigma$ being an NE in $\game'$. As there exists an NE, there also exists an NE composed of finite-memory strategies by \cref{cor:existence-ne-finite-memory}.
\end{proof}

\section{Characterization of NE Outcomes and Prefix-Independency}
\label{app:carac}

We begin with a proof of \cref{theorem:characterization}, then we will discuss about the alternative statement with prefix-linear relations.

\characterization*

\begin{proof}
Suppose first that $\rho$ is the outcome of an NE $(\tau_{i})_{i \in \Players}$. Assume by contradiction that for some vertex $v = \rho_n \in V_i$ belonging to $\rho$, we have $val_i(v) \not\leqRelation[i] [\rho]_{i}$, i.e., $[\rho]_{i} \leqRelationStrict[i] val_i(v)$ as $\leqRelation[i]$ is total. Let $\rho = h\rho_{\geq n}$, and let us consider an optimal strategy $\sigma_{i}^v$ for player~$i$. Player~$i$ can use this strategy to deviate at $v$ and thus produce a play $\pi$ starting at $v$ such that $val_i(v) \leqRelation[i] [\pi]_{i}$. As $\leqRelation[i]$ is prefix-independent, we have $\pi \equivRelation[i] h\pi$, and thus $[\rho]_i \leqRelationStrict[i] val_i(v) \leqRelation[i] [h\pi]_{i}$. Hence $h\pi$ is a profitable deviation for $i$, which is impossible since $\rho$ is an NE outcome.

Suppose now that $\rho$ is a play such that for all vertices $\rho_n$ of $\rho$, if $\rho_n \in V_i$, then $val_i(\rho_n) \leqRelation[i] [\rho]_{i}$. We define a strategy profile $\tau = (\tau_i)_{i \in \Players}$ such that it produces $\rho$, and if player~$i$ deviates at some vertex $v$ from $\rho$, then the coalition~$-i$ plays an optimal strategy $\sigma_{-i}^v$ from $v$. Let us prove that $\tau$ is an NE. Suppose that player~$i$ deviates at $v = \rho_n$ from $\rho$ and consider the deviating play $\rho' = h\pi'$. By definition of $\tau$, we have $[\pi']_{i} \leqRelation[i] val_i(\rho_n)$. As $\leqRelation[i]$ is prefix-independent, $[\rho']_i = [h\pi']_i \leqRelation[i] val_i(\rho_n)$. By hypothesis, $val_i(\rho_n) \leqRelation[i] \rho$. Therefore, $[\rho']_i \leqRelation[i] [\rho]_i$, showing that $\rho'$ is not a profitable deviation. Hence, $\rho$ is an NE outcome.
\end{proof}

As explained in \cref{section:omegaRec}, we can replace the assumption of prefix-independency with prefix-linearity and slightly modify the statement of \cref{theorem:characterization}. This leads to the next theorem, whose proof is almost the same.

\begin{theorem}
\label{theorem:characterization-prefix-linear}
    Let $\game$ be a game such that each preference relation $\leqRelation[i]$ is an $\omega$-recognizable preorder, total, and prefix-linear. Then a play $\rho = \rho_0\rho_1\ldots$ is an NE outcome if and only if for all vertices $\rho_n$ of $\rho$, if $\rho_n \in V_i$, then $val_i(\rho_n) \leqRelation[i] [\rho_{\geq n}]_{i}$.
\end{theorem}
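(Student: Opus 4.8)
The plan is to adapt the proof of \cref{theorem:characterization} almost verbatim, replacing each use of prefix-independency by prefix-linearity and tracking the shifted suffix $\rho_{\geq n}$ instead of the whole play $\rho$. Recall that prefix-linearity says: for all $x, y \in V^\omega$ and $u \in V^*$, $(x,y) \in R$ implies $(ux, uy) \in R$. Applied to the induced equivalence $\equivRelation[i]$ and the strict relation $\leqRelationStrict[i]$ (both $\omega$-automatic, hence also prefix-linear when $\leqRelation[i]$ is), this gives: if $\pi \equivRelation[i] \pi'$ then $h\pi \equivRelation[i] h\pi'$, and if $\pi \leqRelationStrict[i] \pi'$ then $h\pi \leqRelationStrict[i] h\pi'$, for any prefix $h$. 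Unlike prefix-independency, we cannot pass freely between a play and its suffixes, so the value at a deviation vertex $v = \rho_n$ must be compared with $[\rho_{\geq n}]_i$ (the suffix of $\rho$ from position $n$) rather than $[\rho]_i$ — this is exactly the modification in the statement.

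For the forward direction, suppose $\rho$ is an NE outcome of $(\tau_i)_{i \in \Players}$, and assume for contradiction that $\rho_n \in V_i$ with $val_i(\rho_n) \not\leqRelation[i] [\rho_{\geq n}]_i$, hence $[\rho_{\geq n}]_i \leqRelationStrict[i] val_i(\rho_n)$ by totality. Write $\rho = h\rho_{\geq n}$ with $h = \rho_0 \ldots \rho_{n-1}$. Player~$i$ deviates at $\rho_n$ using an optimal strategy $\sigma_i^{\rho_n}$, producing a play $\pi$ from $\rho_n$ with $val_i(\rho_n) \leqRelation[i] [\pi]_i$, so $[\rho_{\geq n}]_i \leqRelationStrict[i] [\pi]_i$. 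By prefix-linearity of $\leqRelationStrict[i]$ applied with prefix $h$, we get $[\rho]_i = [h\rho_{\geq n}]_i \leqRelationStrict[i] [h\pi]_i$, so $h\pi$ is a profitable deviation — contradicting that $\rho$ is an NE outcome. For the backward direction, suppose the condition holds for all positions. Define $\tau = (\tau_i)_i$ that produces $\rho$, and when player~$i$ deviates at $v = \rho_n$, the coalition~$-i$ switches to an optimal strategy $\sigma_{-i}^{v}$. If player~$i$ deviates at $\rho_n$ yielding $\rho' = h\pi'$ with $\pi'$ starting at $\rho_n$ and $h = \rho_0\ldots\rho_{n-1}$, optimality of $\sigma_{-i}^{\rho_n}$ gives $[\pi']_i \leqRelation[i] val_i(\rho_n)$; prefix-linearity with prefix $h$ yields $[\rho']_i = [h\pi']_i \leqRelation[i] [h \cdot \text{(lasso in } val_i(\rho_n))]_i$, and since $val_i(\rho_n) \leqRelation[i] [\rho_{\geq n}]_i$ by hypothesis, another application of prefix-linearity gives $[h\cdot(\text{that lasso})]_i \leqRelation[i] [h\rho_{\geq n}]_i = [\rho]_i$; transitivity then shows $[\rho']_i \leqRelation[i] [\rho]_i$, so the deviation is not profitable.

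One subtlety I expect to need care: the optimal strategy $\sigma_{-i}^v$ only guarantees $[\pi']_i^h \leqRelation[i]^h val_i^{\text{(appropriate)}}(v)$ where the subgame relation $\leqRelation[i]^h$ enters; I should either phrase everything directly through prefix-linearity of $\leqRelation[i]$ (so that "$\sigma_{-i}^v$ from $v$ ensures $[\pi']_i \leqRelation[i] \nu$ for plays $\pi'$ starting at $v$, where $\nu = val_i(v)$ as an equivalence class of $\equivRelation[i]$") as is done in \cref{app:NEtotal}, or invoke \cref{lem:properties-subgame-relations} to relate $val_i^h$ and $val_i$. The cleanest route is to use the values and optimal strategies exactly as in the proof of \cref{theorem:existenceNETotal}, where $\sigma_i^{v}$ (resp.\ $\sigma_{-i}^{v}$) already guarantees $val_i(v) \leqRelation[i] [\pi]_i$ (resp.\ $[\pi]_i \leqRelation[i] val_i(v)$) for consistent plays $\pi$ from $v$, and then the only new ingredient is the single application of prefix-linearity to lift the comparison from the suffix to the full play by prepending $h$.

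The main obstacle — though a minor one — is keeping the bookkeeping of prefixes straight: every comparison involving $val_i(\rho_n)$ lives "at vertex $\rho_n$" (on suffixes), and each must be transported to the full play by prepending the fixed prefix $\rho_0\ldots\rho_{n-1}$ via prefix-linearity, being careful that prefix-linearity transports $\leqRelation[i]$, $\leqRelationStrict[i]$, and $\equivRelation[i]$ (all $\omega$-automatic, all inheriting prefix-linearity from $\leqRelation[i]$) but, unlike prefix-independency, does \emph{not} allow stripping a prefix. Since both directions only ever need to add a prefix, this causes no real difficulty, and the proof goes through with the stated replacement of $[\rho]_i$ by $[\rho_{\geq n}]_i$.
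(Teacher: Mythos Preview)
Your backward direction is correct and is exactly what the paper's ``almost the same'' intends: from $[\pi']_i \leqRelation[i] val_i(\rho_n) \leqRelation[i] [\rho_{\geq n}]_i$ a single application of prefix-linearity of $\leqRelation[i]$ with the common prefix $h=\rho_0\cdots\rho_{n-1}$ gives $[\rho']_i=[h\pi']_i \leqRelation[i] [h\rho_{\geq n}]_i=[\rho]_i$, so the deviation is not profitable (the detour through a lasso representative of $val_i(\rho_n)$ is unnecessary).

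The forward direction has a genuine gap: you assert that $\leqRelationStrict[i]$ inherits prefix-linearity from $\leqRelation[i]$, and this is false. Prefix-linearity of a total preorder says only that prepending a fixed word is a \emph{monotone} self-map of the finite chain $\lattice[i]$, and a monotone map may collapse a strict inequality to an equivalence: from $x\leqRelationStrict[i] y$ one obtains $ux\leqRelation[i] uy$ but in general not $ux\leqRelationStrict[i] uy$. Concretely, on $V=\{a,b\}$ take the total preorder with four classes $C_{bb}\leqRelationStrict C_{ba}\leqRelationStrict C_{ab}\leqRelationStrict C_{aa}$ determined by the first two letters; prepending $a$ maps the chain to $(C_{ab},C_{ab},C_{aa},C_{aa})$ and prepending $b$ to $(C_{bb},C_{bb},C_{ba},C_{ba})$, both monotone, so $\leqRelation$ is prefix-linear, yet $ab^\omega\leqRelationStrict aa^\omega$ while $a\!\cdot\! ab^\omega\equivRelation a\!\cdot\! aa^\omega$. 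This is not a patchable detail. In the one-player fully-connected arena on $\{a,b\}$ with this preference, the play $\rho=aab^\omega$ from $a$ lies in the top class $C_{aa}$ and is therefore an NE outcome, but $\rho_1=a$, $val_1(a)=C_{aa}$, and $[\rho_{\geq 1}]_1=[ab^\omega]_1=C_{ab}$, so $val_1(\rho_1)\not\leqRelation[1] [\rho_{\geq 1}]_1$ --- the forward implication fails when $val_i$ is the value in the original game. What your third paragraph is reaching for is the version that does go through: use the subgame value $val_i^{\,\rho_0\cdots\rho_{n-1}}(\rho_n)$ (equivalently its projection $\nu_i^{\rho_0\cdots\rho_n}$ from the proof of \cref{theorem:existenceNETotal}), whose defining inequalities~\eqref{eq:optimal} already compare full plays $h\pi$ with $[\rho]_i$ and never require transporting a \emph{strict} relation through a prefix.
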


In view of \cref{theorem:characterization,theorem:characterization-prefix-linear}, we might want to check whether a given $\omega$-automatic relation satisfies the prefix-independency or prefix-linearity property, as for the classical properties considered in \cref{prop:PropertiesRelations}. We get the same complexity class:

\begin{proposition}\label{prop:Properties-prefixindep-prefixlinear}
    The problem of deciding whether an $\omega$-automatic relation $R$ is prefix-independent (resp.\ prefix-linear) is \nlComplete{}.
\end{proposition}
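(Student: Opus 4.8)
The plan is to mirror the structure of the proof of \cref{prop:PropertiesRelations}: establish \nl{}-membership by reducing each property to a non-emptiness (or universality) check on a generalized parity automaton of polynomial size built on the fly, and establish \nlHard{}ness by a logspace reduction from the universality problem for \DBWs{}. Throughout, let $\aut{A}$ denote a \DPW{} accepting the $\omega$-automatic relation $R \subseteq \Sigma^\omega \times \Sigma^\omega$, and recall that \cref{prop:emptiness-universality-dpw-generalized-nl-complete} gives \nl{} algorithms for non-emptiness of generalized \NPWs{} and universality of generalized \DPWs{} with a constant number of parity conditions.

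\emph{Membership for prefix-independency.} A relation $R$ fails to be prefix-independent if and only if there exist $x,y \in \Sigma^\omega$ and $u,v \in \Sigma^*$ such that exactly one of $(x,y)$ and $(ux,vy)$ lies in $R$. The key observation is that a single extra prefix letter suffices: $R$ is prefix-independent if and only if for all $a,b \in \Sigma$ and all $x,y \in \Sigma^\omega$, $(x,y) \in R \Leftrightarrow (ax, by) \in R$ (iterate this one-letter version and use that $u,v$ may have different lengths, padding the shorter with the relation applied coordinate-wise—actually the cleanest route is to check the two implications $(x,y)\in R \Rightarrow (ax,by)\in R$ and $(ax,by)\in R \Rightarrow (x,y)\in R$ separately, which together with symmetry in $a,b$ give full prefix-independency by induction on $|u|+|v|$). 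Hence $R$ is not prefix-independent iff the language
\[
L = \{(a,b,x,y) \mid a,b \in \Sigma, ~ \big((x,y) \in R \wedge (ax,by) \notin R\big) \vee \big((x,y) \notin R \wedge (ax,by) \in R\big)\}
\]
is non-empty. One builds a generalized \DPW{} accepting $L$ by running two synchronized copies of $\aut{A}$ — one fed the word $(x,y)$, the other fed $(ax,by)$, with the first two input symbols $a,b$ read once at the start to advance the second copy — and combining their acceptance conditions (using that complementing a \DPW{} only increments priorities). This automaton has polynomial size and a constant number of parity conditions, so non-emptiness is decidable in \nl{} by \cref{prop:emptiness-universality-dpw-generalized-nl-complete}, and as usual the automaton is constructed on the fly while guessing an accepting lasso, using only pointers into $\aut{A}$ stored in logarithmic space.

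\emph{Membership for prefix-linearity, and hardness.} For prefix-linearity the argument is the same but simpler, since only one implication is involved and the two words share the same prefix $u$: $R$ is prefix-linear iff for all $a \in \Sigma$ and all $x,y$, $(x,y) \in R \Rightarrow (ax, ay) \in R$; equivalently $R$ is not prefix-linear iff $\{(a,x,y) \mid (x,y) \in R \wedge (ax,ay) \notin R\}$ is non-empty, which is again a non-emptiness check on a polynomial-size generalized \DPW{} with constantly many parity conditions, hence in \nl{}. For \nlHard{}ness in both cases, one reduces from the universality of a \DBW{} $\aut{A}$ over $\Sigma$: take a fresh symbol $\$$ and let $R$ over $\Sigma \cup \{\$\}$ be, say, $R = (\$\Sigma^\omega \times \$\Sigma^\omega) \cap \{(\$x,\$y) \mid x \in \lang{\aut{A}}\} ~\cup~ (\text{all pairs not both starting with }\$)$, engineered so that the offending behaviour that breaks prefix-independency/prefix-linearity is triggered exactly when $\lang{\aut{A}} \neq \Sigma^\omega$ — the prefix letter $\$$ being present vs.\ absent changes membership precisely on words whose $\$$-stripped form escapes $\lang{\aut{A}}$. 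As in \cref{prop:PropertiesRelations}, this $R$ is accepted by a \DBW{} (hence a \DPW{}) of polynomial size obtainable by a logspace computation, and since \nl{} is closed under complement and universality of \DBWs{} is \nlComplete{}, the reduction gives \nlHard{}ness.

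\emph{Main obstacle.} The delicate point is \emph{verifying that the one-letter characterization is faithful}, i.e.\ that checking invariance under prepending a single symbol (in both directions, for all symbol pairs) genuinely entails invariance under prepending arbitrary finite prefixes $u,v$ of possibly different lengths — the induction must handle the asymmetry $|u| \neq |v|$ and one must be careful that the $\Sigma^*$-prefixes can be built up letter by letter on each side independently. Getting the hardness gadget right is also somewhat fiddly: the relation $R$ must be designed so that prefix-independency (resp.\ prefix-linearity) holds \emph{exactly} when $\aut{A}$ is universal, with no spurious violations coming from the $\$$-machinery itself, which requires the same kind of case analysis on the form of the input words as in the transitivity and totality reductions of \cref{prop:PropertiesRelations}.
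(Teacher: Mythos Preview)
Your approach to prefix-linearity is correct and close in spirit to the paper's: the one-letter reduction $(x,y)\in R \Rightarrow (ax,ay)\in R$ does characterize prefix-linearity (induction along the common prefix $u$ is immediate because both sides receive the \emph{same} letter at each step), and non-emptiness of the resulting product automaton is in \nl{}. The paper implements this a bit differently --- it computes the set of states of $\aut{A}$ reachable by reading some diagonal word $(u,u)$ and takes all of those as initial states in the complement of $\aut{A}$, then intersects with $\aut{A}$ --- but the underlying idea is the same. For hardness the paper reduces from \emph{emptiness} of a \DBW{} rather than universality; your sketch is vague on the gadget, but either source problem would do since \nl{} is closed under complement.

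There is, however, a genuine gap in your membership argument for \emph{prefix-independency}. Your one-letter characterization --- for all $a,b$: $(x,y)\in R \Leftrightarrow (ax,by)\in R$ --- is \emph{not} equivalent to prefix-independency, and the obstacle you flag cannot be overcome as stated. Take $\Sigma=\{0,1\}$ and $R=\{(x,y)\mid x_i=y_i \text{ for all but finitely many } i\}$. This $R$ is $\omega$-automatic (a co-B\"uchi condition on the pair) and satisfies your property, since prepending one letter to \emph{each} side merely shifts all indices by one. Yet $R$ is not prefix-independent: $((01)^\omega,(01)^\omega)\in R$ while $((01)^\omega,0(01)^\omega)\notin R$, because after the asymmetric shift the two words are permanently out of phase. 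Your proposed induction on $|u|+|v|$ cannot work: the hypothesis only ever lets you add a letter on both sides \emph{simultaneously}, so there is no inductive step that adds a letter on one side alone, and ``symmetry in $a,b$'' does not help. The correct one-letter characterization must treat the two sides independently: $R$ is prefix-independent iff for every $a\in\Sigma$ and all $x,y$, both $(x,y)\in R \Leftrightarrow (ax,y)\in R$ and $(x,y)\in R \Leftrightarrow (x,ay)\in R$. With this repaired characterization the rest of your plan (two copies of $\aut{A}$ with a one-step offset, a constant number of parity conditions, on-the-fly construction) goes through unchanged.
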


\begin{figure}[t]
    \centering
    \begin{tikzpicture}[x=0.75pt,y=0.75pt,yscale=-1]
        \draw (220.07,134.29) .. controls (220.07,128.76) and (224.55,124.29) .. (230.07,124.29) .. controls (235.59,124.29) and (240.07,128.76) .. (240.07,134.29) .. controls (240.07,139.81) and (235.59,144.29) .. (230.07,144.29) .. controls (224.55,144.29) and (220.07,139.81) .. (220.07,134.29) -- cycle;
        \draw (264.27,114.09) -- (298.67,114.09);
        \draw [shift={(301.67,114.09)},rotate=180,fill={rgb,255:red,0;green,0;blue,0},line width=0.08, draw opacity=0] (5.36,-2.57) -- (0,0) -- (5.36,2.57) -- (3.56,0) -- cycle;
        \draw (204.77,134.59) -- (217.07,134.35);
        \draw [shift={(220.07,134.29)},rotate=178.87,fill={rgb,255:red,0;green,0;blue,0},line width=0.08, draw opacity=0] (5.36,-2.57) -- (0,0) -- (5.36,2.57) -- (3.56,0) -- cycle;
        \draw [fill={rgb,255:red,0;green,0;blue,0},fill opacity=0.03,dash pattern={on 4.5pt off 4.5pt}] (212.99,107.58) .. controls (212.99,99.09) and (219.88,92.21) .. (228.37,92.21) -- (316.78,92.21) .. controls (325.28,92.21) and (332.16,99.09) .. (332.16,107.58) -- (332.16,161.08) .. controls (332.16,169.57) and (325.28,176.45) .. (316.78,176.45) -- (228.37,176.45) .. controls (219.88,176.45) and (212.99,169.57) .. (212.99,161.08) -- cycle;
        \draw (244.27,114.09) .. controls (244.27,108.56) and (248.75,104.09) .. (254.27,104.09) .. controls (259.79,104.09) and (264.27,108.56) .. (264.27,114.09) .. controls (264.27,119.61) and (259.79,124.09) .. (254.27,124.09) .. controls (248.75,124.09) and (244.27,119.61) .. (244.27,114.09) -- cycle;
        \draw (301.67,114.09) .. controls (301.67,108.56) and (306.15,104.09) .. (311.67,104.09) .. controls (317.19,104.09) and (321.67,108.56) .. (321.67,114.09) .. controls (321.67,119.61) and (317.19,124.09) .. (311.67,124.09) .. controls (306.15,124.09) and (301.67,119.61) .. (301.67,114.09) -- cycle;
        \draw (306.74,160.26) .. controls (326.14,180.63) and (373.14,175.75) .. (387.62,149.52);
        \draw [shift={(388.87,147.03)},rotate=114.26,fill={rgb,255:red,0;green,0;blue,0},line width=0.08, draw opacity=0] (5.36,-2.57) -- (0,0) -- (5.36,2.57) -- (3.56,0) -- cycle;
        \draw (402.2,130.71) .. controls (421.05,123.86) and (423.12,139.97) .. (404.91,136.79);
        \draw [shift={(402.2,136.21)},rotate=14.2,fill={rgb,255:red,0;green,0;blue,0},line width=0.08, draw opacity=0] (5.36,-2.57) -- (0,0) -- (5.36,2.57) -- (3.56,0) -- cycle;
        \draw (379.47,133.49) .. controls (379.47,127.96) and (383.95,123.49) .. (389.47,123.49) .. controls (394.99,123.49) and (399.47,127.96) .. (399.47,133.49) .. controls (399.47,139.01) and (394.99,143.49) .. (389.47,143.49) .. controls (383.95,143.49) and (379.47,139.01) .. (379.47,133.49) -- cycle;
        \draw (376.73,133.49) .. controls (376.73,126.45) and (382.43,120.75) .. (389.47,120.75) .. controls (396.51,120.75) and (402.21,126.45) .. (402.21,133.49) .. controls (402.21,140.52) and (396.51,146.23) .. (389.47,146.23) .. controls (382.43,146.23) and (376.73,140.52) .. (376.73,133.49) -- cycle;
        \draw (270.64,98) node [anchor=north west,inner sep=0.75pt, font=\footnotesize,align=left] {$a,\#$};
        \draw (223.07,129.9) node [anchor=north west,inner sep=0.75pt,align=left] {$q_{0}$};
        \draw (248.67,110.5) node [anchor=north west,inner sep=0.75pt,align=left] {$q$};
        \draw (306.47,105.3) node [anchor=north west,inner sep=0.75pt,align=left] {$q'$};
        \draw (218.1,156.1) node [anchor=north west,inner sep=0.75pt, font=\small,align=left] {$\aut{A}$};
        \draw (341.74,138.4) node [anchor=north west,inner sep=0.75pt, font=\footnotesize,align=left] {$\#,*$\\$*,b$};
        \draw (419.94,128.6) node [anchor=north west,inner sep=0.75pt, font=\footnotesize,align=left] {$*,*$};
        \draw (381.67,129.1) node [anchor=north west,inner sep=0.75pt,align=left] {$q_{s}$};
    \end{tikzpicture}
    \caption{The \DBW{} $\aut{A}'$ used in the reduction for the prefix-independency property.}
    \label{fig:nl-hardness-prefix-independent}
\end{figure}

\begin{proof}[Proof of \cref{prop:Properties-prefixindep-prefixlinear} - prefix-independency]
    We begin with the \nl{}-membership. Let $\aut{A}$ be a \DPW{} accepting $R$. Note that $R$ is prefix-independent if and only if we have the following equivalence for all $x, y \in \Sigma^\omega$:
    \[
    (x,y) \in R ~ \Leftrightarrow ~ (x_{\geq 1},y) \in R \land (x,y_{\geq 1}) \in R.
    \]
    We are going to show that it is decidable whether $R$ is not prefix-independent, i.e., the following language $L$ is not empty:
    \begin{align}
        L = \{ (x,y,x_{\geq 1},y_{\geq 1}) \in (\Sigma^\omega)^4 \mid \quad & \Bigl( (x,y) \in R \land \left( (x_{\geq 1},y) \not\in R \lor (x,y_{\geq 1}) \not\in R \right) \Bigr) \nonumber \\
        \vee &\Bigl( (x_{\geq 1},y) \in R \land (x,y_{\geq 1}) \in R \land (x,y) \not\in R \Bigr) \quad \}.
    \label{eq:nbw-not-prefix-independent}
    \end{align}
    As $R$ is accepted by a \DPW{}, the set $\{(x,y) \mid (x,y) \not\in R \}$ is accepted by a \DPW{} of the same size as $\aut{A}$. Therefore, one can construct a generalized \DPW{} $\aut{A}'$ accepting $L$ with a Boolean combination of six parity conditions (by carefully constructing the transitions to deal with tuples $(x,y,x_{\geq 1},y_{\geq 1})$). Hence, testing whether $L$ is not empty can be done in \nl{} by \cref{prop:emptiness-universality-dpw-generalized-nl-complete} while constructing $\aut{A}'$ on the fly.

    \medskip

    Let us now show the \nl{}-hardness, by reduction from the universality problem of \DBWs{}~\cite{handbook-of-model-checking-orna-kupferman}. Let $\aut{A}$ be a \DBW{} over the alphabet $\Sigma$. Let $\Sigma' = \Sigma \cup \{\#\}$ with a new symbol $\#$. From $R' = (\Sigma^\omega \ssetminus \lang{\aut{A}}) \times \{\#^\omega\}$, we define the complementary relation
    \[
    R = ((\Sigma')^\omega \times (\Sigma')^\omega) \setminus R'.
    \]
    Let us show that the reduction is correct. If $\lang{\aut{A}} = \Sigma^\omega$, then $R = (\Sigma')^\omega \times (\Sigma')^\omega$, which is prefix-independent. Conversely, if there exists $x \in \Sigma^\omega \ssetminus \lang{\aut{A}}$, then $(x,\#^\omega) \not\in R$ but $(x,a\#^\omega) \in R$ with $a \in \Sigma$, so $R$ is not prefix-independent.
    Let us now describe a \DBW{} $\aut{A}'$ accepting $R$, as depicted in \cref{fig:nl-hardness-prefix-independent}. W.l.o.g., we suppose that $\aut{A}$ is complete. The automaton $\aut{A'}$ is composed of a copy of $\aut{\aut{A}}$ extended with a second component $\#$ in a way to accept $\lang{\aut{A}} \times \{\#^\omega\}$. As soon as a pair $(\#,*)$ or $(*,b)$ (with $b \in \Sigma$) is read from this copy, we go in an accepting sink state $q_{s}$. Therefore, the only rejected words are those in $R'$. Note that this is a logspace reduction as we need a logarithmic space on the work tape to construct $\aut{A}'$.
\end{proof}

\begin{proof}[Proof of \cref{prop:Properties-prefixindep-prefixlinear} - prefix-linearity]
    Let us begin with the \nl{}-membership. Let $R$ be a preference relation and $\aut{A}$ be a \DPW{} accepting it. Let us show that $R$ is not prefix-linear, i.e., there exist $x,y \in \Sigma^\omega$ and $u \in \Sigma^*$ such that $(x,y) \in R$ and $(ux,uy) \not\in R$. We first compute the set $I = \{q \mid q$ is reachable by a run on $(u,u)$, for $u \in \Sigma^\omega\}$ by deleting all edges labeled $(a,b)$ for $a \neq b$. Then, we construct a nondeterministic parity automaton $\aut{B}$ by taking the complement of $\aut{A}$ that accepts $\neg R$ and saying that every state $q \in I$ is initial. Therefore, $(x,y) \in \lang{\aut{B}}$ if and only if there exists $u \in \Sigma^*$ such that $(ux,uy) \not\in R$. Thus, we have to check whether $\lang{\aut{A} \cap \aut{B}} \neq \varnothing$. This is the nonemptiness problem of a generalized \NPW{} with a conjunction of two parity objectives, an \nlComplete{} problem by \cref{prop:emptiness-universality-dpw-generalized-nl-complete}.

    \medskip

    Let us now prove the \nl{}-hardness with a reduction from the emptiness of a \DBW{} $\aut{A}$, an \nlComplete{} problem~\cite{handbook-of-model-checking-orna-kupferman}. We construct $R = \lang{\aut{A}} \times \{\#^\omega\}$. Clearly, $R$ is accepted by a \DBW{} (thus a \DPW{}) consisting in a copy of $\aut{A}$ where any label $a$ is replaced by $(a,\#)$ for all $a \in \Sigma$. Moreover, if there exists a word $w \in \lang{\aut{A}}$, then $(w,\#^\omega) \in R$ but $(\#w,\#^\omega) \not\in R$, so $R$ is not prefix-linear. Conversely, if $\lang{\aut{A}} = \varnothing$, then $R = \varnothing$ and thus it is prefix-linear.
\end{proof}

\end{document}